\documentclass[a4paper,11pt]{article}
\usepackage[a4paper,top=3.5cm,bottom=3cm,left=3cm,right=3cm]{geometry}

\usepackage[sectionbib]{natbib}
\usepackage{setspace}

\usepackage[utf8]{inputenc} % allow utf-8 input
\usepackage[T1]{fontenc}    % use 8-bit T1 fonts
\usepackage{hyperref}       % hyperlinks
\usepackage{url}            % simple URL typesetting
\usepackage{booktabs}       % professional-quality tables
\usepackage{amsfonts,amsmath}       % blackboard math symbols
\usepackage{amsthm}
\usepackage{nicefrac}       % compact symbols for 1/2, etc.
\usepackage{microtype}      % microtypography
\usepackage{algorithm,algorithmic}
\usepackage{graphicx}
\usepackage{pdflscape}

\usepackage{latexsym}

\usepackage{color}

\newcommand{\indep}{\perp \!\!\! \perp}

\newtheorem{theorem}{Theorem}
\newtheorem{corollary}{Corollary}
\newtheorem{assumption}{Assumption}
\newtheorem*{assumption1p}{Assumption 1'}
\newtheorem{lemma}{Lemma}

\usepackage{lineno,hyperref}
\newcommand*\patchAmsMathEnvironmentForLineno[1]{
  \expandafter\let\csname old#1\expandafter\endcsname\csname #1\endcsname
  \expandafter\let\csname oldend#1\expandafter\endcsname\csname end#1\endcsname
  \renewenvironment{#1}
     {\linenomath\csname old#1\endcsname}
     {\csname oldend#1\endcsname\endlinenomath}}
\newcommand*\patchBothAmsMathEnvironmentsForLineno[1]{
  \patchAmsMathEnvironmentForLineno{#1}
  \patchAmsMathEnvironmentForLineno{#1*}}
\AtBeginDocument{
\patchBothAmsMathEnvironmentsForLineno{equation}
\patchBothAmsMathEnvironmentsForLineno{align}
\patchBothAmsMathEnvironmentsForLineno{flalign}
\patchBothAmsMathEnvironmentsForLineno{alignat}
\patchBothAmsMathEnvironmentsForLineno{gather}
\patchBothAmsMathEnvironmentsForLineno{multline}
}
\modulolinenumbers[5]

\title{Outlier-Resistant Estimators for Average Treatment Effect in Causal Inference}
\author{
    Kazuharu Harada \thanks{Tokyo Medical University, Japan. e-mail: \texttt{haradak@tokyo-med.ac.jp}}
        \and
    Hironori Fujisawa
    \thanks{The Institute of Statistical Mathematics, Japan. e-mail: \texttt{fujisawa@ism.ac.jp}}
    }
\date{\today}

\begin{document}
\maketitle

\begin{abstract}
The inverse probability (IPW) and doubly robust (DR) estimators are often used to estimate the average causal effect (ATE), but are vulnerable to outliers. The IPW/DR median can be used for outlier-resistant estimation of the ATE, but the outlier resistance of the median is limited and it is not resistant enough for heavy contamination. We propose extensions of the IPW/DR estimators with density power weighting, which can eliminate the influence of outliers almost completely. The outlier resistance of the proposed estimators is evaluated through the unbiasedness of the estimating equations. Unlike the median-based methods, our estimators are resistant to outliers even under heavy contamination. Interestingly, the naive extension of the DR estimator requires bias correction to keep the double robustness even under the most tractable form of contamination. In addition, the proposed estimators are found to be highly resistant to outliers in more difficult settings where the contamination ratio depends on the covariates. The outlier resistance of our estimators from the viewpoint of the influence function is also favorable. Our theoretical results are verified via Monte Carlo simulations and real data analysis. The proposed methods were found to have more outlier resistance than the median-based methods and estimated the potential mean with a smaller error than the median-based methods.
\end{abstract}

% keywords can be removed
% \keywords{Robust Statistics \and Causal Inference \and Missing Data Analysis \and M-estimator}

\section{Introduction}
% \linenumbers

Statistical causal inference provides various estimators for causal quantities like the average treatment effect (ATE). To estimate such quantities, the propensity score is widely applied in various ways, such as stratification, matching, inverse probability weighting (IPW), and the doubly robust (DR) estimator \citep*{robins1994estimation,rosenbaum1983central,bang2005doubly}. These estimators are designed to control confounding, and they are consistent with the target quantity under some assumptions. 

As discussed in the later section, the IPW and DR estimators are vulnerable to outliers since they partially use the sample mean. An outlier in a multivariate setting is classified into three types: a vertical outlier, a good leverage point, and a bad leverage point \citep*{Rousseeuw1990-wz}. Figure \ref{fig:threetypes} illustrates the three types of outliers. \cite{Canavire-Bacarreza2021-dc} has investigated the influence of these types of outliers on the estimators of the ATE including IPW through exhaustive Monte-Carlo simulations; they have pointed out that the vertical outliers in the outcome variable lead to a serious bias of the ATE estimation. In this paper, we are interested in reducing the bias caused by the vertical outliers.
\begin{figure}[htbp]
    \centering
    \includegraphics[width=0.6\textwidth]{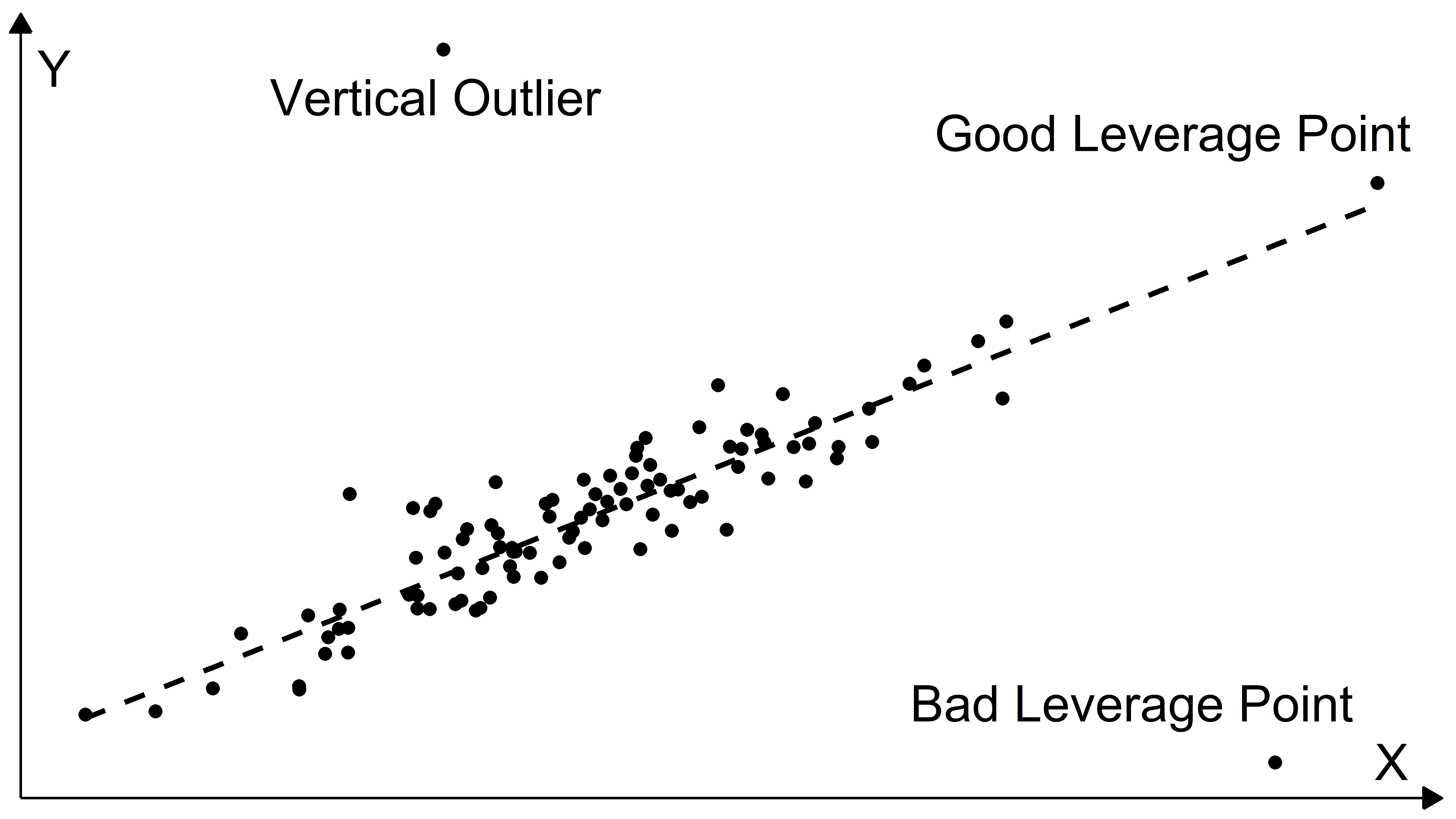}
    \caption{Three types of outliers.}
    \label{fig:threetypes}
\end{figure}
\vspace{-4mm}

Outlier-resistant statistics have been studied for long; however, most literature does not consider a causal setting \citep{huber2004robust,hampel2011robust,maronna2019robust}.
The established methods of outlier-resistant statistics are not directly applicable to causal settings.
The median-based estimators are the only examples which are applicable to the estimation of the ATE under outlier contamination (\citeauthor{firpo2007efficient}, \citeyear{firpo2007efficient}; \citeauthor{zhang2012causal}, \citeyear{zhang2012causal}; \citeauthor{diaz2017efficient}, \citeyear{diaz2017efficient}; \citeauthor*{sued2020robust}, \citeyear{sued2020robust}).
It is well known that the sample median is more resistant to outliers than the sample mean, but it is still affected; in particular, when the contamination ratio is not small and the outliers lie on one side of the data-generating density, the influence becomes so large as it cannot be ignored \citep{fujisawa2008robust}.

In this paper, we propose extensions of the IPW and DR estimators for the mean of the potential outcome with more outlier resistance than the median-based methods. We discuss the outlier resistance of these estimators from the viewpoint of the unbiasedness of the estimating equation and influence function (IF). 
In most literature on outlier-resistant statistics, the contamination ratio is assumed to be small and be independent of covariates; however, we discuss the outlier resistance of the proposed estimators under more general assumptions, including the case where the contamination ratio is not small and related to covariates. Interestingly, a straight extension of the DR estimator loses the robustness to model misspecification under contamination. We also propose a bias-corrected version of the extended DR estimator, which holds the double robustness under contamination. Furthermore, the theoretical advantages of our estimators are verified through Monte-Carlo simulations and real data analysis.

The remainder of this paper is organized as follows. In Section \ref{sec:prelim}, we introduce the potential outcome framework for causal inference and the basic concept of outliers. In Section \ref{sec:ORsemi}, we propose novel estimators and discuss the outlier resistance from the viewpoint of the unbiasedness of the estimating equations. In Section \ref{sec:if}, we evaluate the outlier resistance in terms of the IF. In Section \ref{sec:asymp}, we discuss asymptotic properties. Finally, in Sections 6 and 7, we present the experimental results.

%%%%%%%%%%%%%%%%%%%%%%%%%%%%%%%%%%%%%%%%%%%%%%%%%%%%%%%%%%%%%%%%%%%%%%%%%%%%%%%%%%%%%%%%%%%%%%%%%%%%%%%%%%%%%%%%%%%%%%%

%%%%%%%%%%%%%%%%%%%%%%%%%%%%%%%%%%%%%%%%%%%%%%%%%%%%%%%%%%%%%%%%%%%%%%%%%%%%%%%%%%%%%%%%%%%%%%%%%%%%%%%%%%%%%%%%%%%%%%%%%%%%
\section{Preliminaries}\label{sec:prelim}
\subsection{Potential Outcome and Treatment Effect}\label{sec:ATE}
Let $(Y,T,X)$ be the observable random variables; $X$ is the outcome, $T$ is the treatment, and $X$ is the confounder. We assume that $Y$ is continuous and $T$ is binary; it is easy to extend $T$ to multiple discrete treatments. We have the observations $(Y_i,T_i,X_i)_{i=1}^n$ drawn from the distribution of $(Y,T,X)$ in an i.i.d. manner. 
Denote the potential outcome under $T=t$ by $Y^{(t)}$ and let $\mu^{(t)}=\mathbb{E}[Y^{(t)}]$. $Y^{(t)}$ is uniquely defined for every treatment as a random variable, namely, well-defined. Note that i.i.d. sampling and well-definedness of the potential outcome are collectively called the stable unit treatment value assumption \citep[SUTVA;][]{Imbens2015-py}. The ATE is defined as $\mu^{(1)}-\mu^{(0)}$. The ATE cannot be estimated directly since we cannot observe $Y^{(1)}$ and $Y^{(0)}$ simultaneously; instead, we use the observed variables under the common assumptions \citep[e.g.][]{Imbens2015-py}:
\begin{enumerate}
    \item \textit{Conditional Unconfoundedness}: $Y^{(t)} \indep T | X$ for all $t\in\{0,1\}$,
    \item \textit{Consistency}: $Y=Y^{(t)}$ if $T=t$,
    \item \textit{Positivity}: $P(T=1|X)>c$ for some constant $c>0$.
\end{enumerate}
The ATE can be estimated from the observed variables under these assumptions, namely, identifiable. Hereafter, we assume the triple assumption holds and focus on the estimation of $\mu^{(1)}$ for simplicity. $\mu^{(0)}$ is estimated in a similar way, and the ATE is estimated by the difference between the estimates of $\mu^{(1)}$ and $\mu^{(0)}$.

We introduce three consistent estimators of the potential mean.
The IPW estimator \citep{rosenbaum1983central} is based on the propensity score (PS). Let $\pi(x;\alpha)\in(0,1)$ be the PS, which models $P(T=1|x)$. We assume the PS is correctly specified, in other words, there exists $\alpha^*$ such that $\pi(x;\alpha^*)=P(T=1|x)$ for every $x$. The IPW estimator has several forms \citep{lunceford2004stratification}, but we use the weighted average form: $\hat{\mu}^{(1)}_{IPW} = \left.\left(\sum_{i=1}^nT_iY_i/\pi(X_i;\hat{\alpha})\right)\right/\left(\sum_{i=1}^nT_i/\pi(X_i;\hat{\alpha})\right)$,
where $\hat{\alpha}$ is an estimate of $\alpha$ obtained in a consistent manner such as the maximum likelihood estimation (MLE). The IPW estimator is consistent with $\mu^{(1)}$ if the PS model is correctly specified. The IPW estimator can be seen as the root of the following estimating equation:
\begin{align}
    \sum_{i=1}^n\frac{T_i}{\pi(X_i;\hat{\alpha})}(Y_i-\mu)=0\label{eq:esteqIPW}.
\end{align}
Outcome regression (OR) is also popular. To construct the OR estimator, we model $\mathbb{E}[Y|T=1,X]$ by some function $m_1(X;\beta)$. Then, the OR estimator is obtained as $n^{-1}\sum_{i=1}^{n}m_1(X_i;\hat{\beta})$, where $\hat{\beta}$ is a consistent estimate of $\beta$. The IPW and OR estimators are asymptotically consistent with $\mu^{(1)}$ when the model used in each estimator is correctly specified; the consistency is not assured if the model is misspecified.
The DR estimator \citep*{scharfstein1999adjusting,bang2005doubly} combines the IPW and OR estimators. Since the DR estimator is consistent with $\mu^{(1)}$ if either the PS or OR model is correctly specified, it is said to be ``doubly robust.'' 
Besides, if both models are correctly specified, the DR estimator is semiparametrically efficient \citep{robins1995semiparametric,Tsiatis2006-sz}.
Although there are many estimators equipped with double robustness, we refer the root of the following estimating equation as the DR estimator $\hat{\mu}^{(1)}_{DR}$, which is a special case of the augmented IPW estimator:
\begin{align}
    \sum_{i=1}^n\left[\frac{T_i}{\pi(X_i;\hat{\alpha})}(Y_i-\mu) - \frac{T_i - \pi(X_i;\hat{\alpha})}{\pi(X_i;\hat{\alpha})}\{m_1(X_i;\hat{\beta})-\mu\}\right]=0.\label{eq:esteqDR}
\end{align}

\subsection{IPW/DR M-estimators}
Let $\sum_{i=1}^n\psi(Y_i,\theta)=0$ be an estimating equation, where $\psi$ is a known vector-valued map, and $\theta$ is the parameter of interest. An estimator $\hat{\theta}$ solving the estimating equation is called an M-estimator. M-estimator is a large class of estimators involving MLE, IPW, OR, and DR. If the estimating equation is unbiased, say $\mathbb{E}_{\theta}[\psi(Y,\theta)]=0$, the M-estimator is consistent with the truth under some regularity conditions \citep[e.g. Chap. 5 of][]{van2000asymptotic}.

By replacing $Y_i-\mu$ in \eqref{eq:esteqDR} with an estimating function $\psi(Y_i;\theta)$, the IPW and DR estimators can be expanded to a general M-estimator.
If we are interested in the same parameter $\theta$ with respect to $Y^{(1)}$, the IPW and DR M-estimators \citep{Tsiatis2006-sz} are available:
\begin{gather}
    \sum_{i=1}^n\frac{T_i}{\pi(X_i;\hat{\alpha})}\psi(Y_i;\theta)=0\label{eq:esteqIPWM},\\
    \sum_{i=1}^n\left[\frac{T_i}{\pi(X_i;\hat{\alpha})}\psi(Y_i;\theta) - \frac{T_i - \pi(X_i;\hat{\alpha})}{\pi(X_i;\hat{\alpha})}\mathbb{E}_{\hat{q}}[\psi(Y_i;\theta)|T=1,X_i]\right]=0.\label{eq:esteqDRM}
\end{gather}
The conditional expectation $\mathbb{E}_{\hat{q}}[\psi(Y_i;\theta)|T=1,X_i]$ is calculated using the parametric OR model $q(y|T=1,x;\hat{\beta})$ via direct calculation or Monte-Carlo approximation \citep{hoshino2007doubly}. When the original M-estimating equation is unbiased, the IPW/DR estimating equations are unbiased under the proper model specification. 
The asymptotic properties of the IPW and DR M-estimators follow from the standard theory of M-estimators.

\subsection{Outlier-resistant Estimation}\label{sec:Mest}
This section provides a brief review of the outlier-resistant estimation of the mean in a one-variable and non-causal setting.
Let $\tilde{g}$ be the density function of a random variable $Z\in\mathbb{R}$. Assume that the density is contaminated as $\tilde{g}(z)=(1-\varepsilon)f_{\mu^*}(z)+\varepsilon\delta(z)$, where $f_{\mu^*}$ is the density of $Z$ without contamination equipped with the mean $\mu^*$, $\varepsilon$ is the contamination ratio, and $\delta$ is the density of outliers. 
Our goal is to estimate $\mu^*$ from i.i.d. observations $\{Z_1,...,Z_n\}$ drawn from $\tilde{g}$. If we model the contamination in this way, the sample mean converges to $(1-\varepsilon)\mu^{*} + \varepsilon\mathbb{E}_\delta[Z]$; if the mean of outliers is far from $\mu^*$, the sample mean is asymptotically biased. 
To deal with contamination, many types of M-estimators are applied. The unbiasedness of the estimating equation does not usually hold under contamination because
\begin{align}
    \mathbb{E}_{\tilde{g}}[\psi(Z,\mu^{*})] = (1-\varepsilon)\underbrace{\mathbb{E}_{f_{\mu^*}}[\psi(Z,\mu^{*})]}_{=0} + \varepsilon\mathbb{E}_\delta[\psi(Z,\mu^{*})] \ne 0.
\end{align} 
By designing $\psi$ to eliminate or bound $\mathbb{E}_\delta[\psi(Z,\mu^{*})]$, the influence of outliers can be reduced. Let $\theta_{\psi}^*$ denote a root of $\mathbb{E}_{\tilde{g}}[\psi(Z,\theta)]= 0$; then, the latent bias is defined as $\theta_{\psi}^* - \theta^*$. If $\delta$ is Dirac's delta and $\varepsilon$ is sufficiently small, the latent bias is approximated by the IF. The IF-based discussion in Section \ref{sec:if} provides some insights into the outlier resistance of the estimators when the contamination ratio is small. The latent bias and M-estimators are discussed in detail elsewhere \citep[e.g.][]{huber2004robust,fujisawa2013normalized,fujisawa2008robust}.

\subsection{IPW and DR Under Contamination}\label{sec:lbias}
Next, we move to a causal setting. In this paper, we consider the vertical outliers. In other words, we assume that only the outcome $Y$ may be contaminated, and that the contamination does not affect the causal mechanism among $(Y,T,X)$. A typical example is contamination of laboratory values in medical research with foreign substances.
Let $\delta_{Y|TX}$ be the conditional density of outliers given $(T,X)$, and let $\varepsilon_t(x)$ be the contamination ratio. Then, the contaminated conditional density given $(T,X)$ is defined as
\begin{align}\label{eq:contam1}
    \tilde{g}_{Y|TX}(y|t,x) = (1-\varepsilon_t(x))g_{Y|TX}(y|t,x) + \varepsilon_t(x)\delta_{Y|TX}(y|t,x),
\end{align}
where $g$ without tilde denotes the density without contamination; the tilde indicates that the distribution is contaminated. For simplifying the notations, we often drop the subscripts of density functions as long as there would be no confusion and write $\delta_{t}(y|x)=\delta_{Y|TX}(y|t,x)$ below.
The contamination ratio and their density depend on the treatment $T$ and the confounder $X$. Since we estimate $\mu^{(t)}$ for each treatment separately, the dependence on $T$ is tractable. In contrast, the dependence on $X$ is critical in our analysis. The $X$-dependent contamination is referred to as heterogeneous contamination. We also discuss the special case in which $\varepsilon$ and $\delta$ are not dependent on $X$, called homogeneous contamination. Note that we do not assume $\varepsilon_t(x)$ to be small enough to be negligible, except in Section \ref{sec:if}.

We are interested in the marginal mean of $Y^{(1)}$.
Let $f_{Y^{(1)}}(y;\mu^{(1)})$ be the true marginal density of $Y^{(1)}$. It is obtained by integrating $X$ out from $g_{Y|TX}(y|T,X)$ under $T=1$:
\begin{align}\label{eq:contam2}
    f_{Y^{(1)}}(y;\mu^{(1)}) 
        = \int g_{Y^{(1)}|X}(y|x)g_X(x)dx
        = \int g_{Y|TX}(y|1,x)g_X(x)dx.
\end{align}
The second equality holds from the triple assumption in Section \ref{sec:ATE}. We often write $f_{Y^{(1)}}(y;\mu^{(1)})$ as $f_1(y)$ for simplicity.

Under contamination, the IPW estimating equation is severely biased even if the true PS is obtained as $\pi(X|\alpha^*)=P(T=1|X)$:
\begin{align}\label{eq:IPWbias}
    \mathbb{E}_{\tilde{g}}\left[\frac{T}{\pi(X|\alpha^*)}(Y-\mu^{(1)})\right] 
    = \mathbb{E}_{g}\left[
        \varepsilon_1(X)\mathbb{E}_{-g+\delta}\left[(Y-\mu^{(1)})|X\right]
    \right] \ne 0.
\end{align}
It is found that the remaining term contains the expectation of $Y$ with respect to  $\delta$, which implies the estimating equation is severely affected by outliers. The DR estimating equation is also biased. To estimate $\mu^{(1)}$ accurately, we have to remove the influence of contamination.

%%%%%%%%%%%%%%%%%%%%%%%%%%%%%%%%%%%%%%%%%%%%%%%%%%%%%%%%%%%%%%%%%%%%%%%%%%%%%%%%%%%%%%%%%%%%%%%%%%%%%%%%%%%%%%%%%%%%%%%%%%%%

\section{Outlier-Resistant Extensions of IPW and DR}\label{sec:ORsemi}
Before we propose novel estimators, we introduce an assumption on outliers. Intuitively, we assume that the outliers are sufficiently far from the main outcome density. Figure \ref{fig:assumption1} shows real examples of outliers that satisfy this assumption. It is found that these outliers are far from the main body of the density both conditionally and marginally. 
\begin{figure}[htbp]
    \centering
    \includegraphics[width=\textwidth]{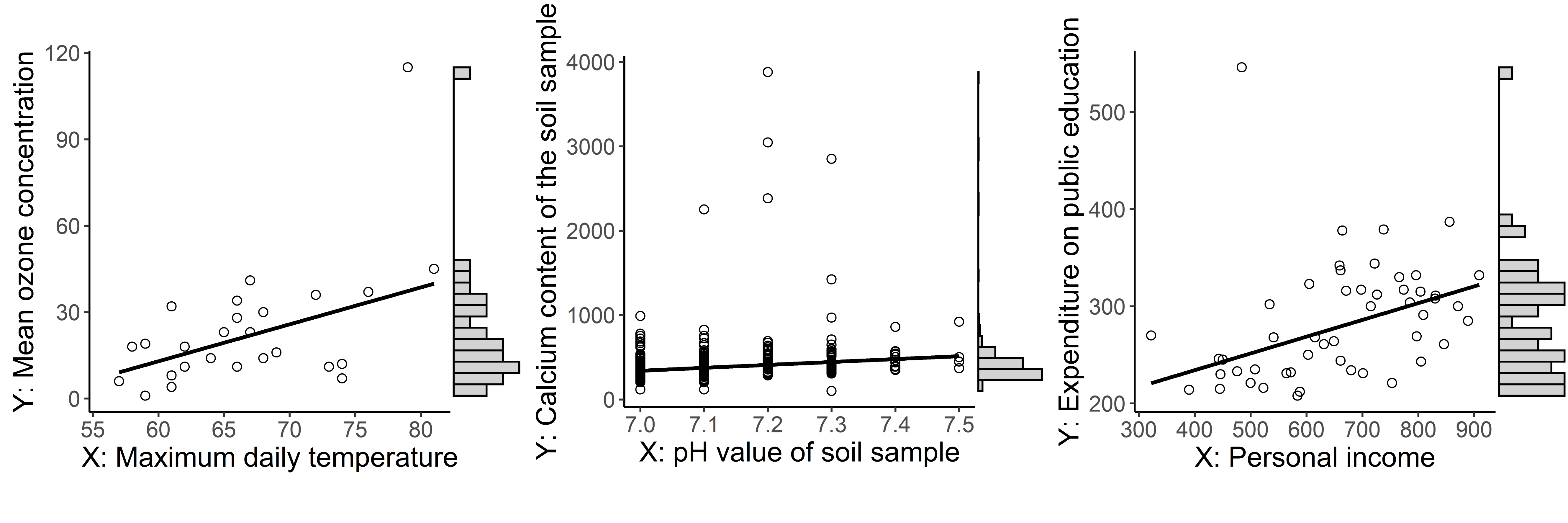}
    \caption{Real examples of outliers which satisfy Assumption 1. All datasets are included in the R package ``robustbase'' \citep{maechler2021package}: airmay (left), condroz (center), education (right).}
    \label{fig:assumption1}
\end{figure}

To formulate this assumption, we introduce density power weighting.
The density power weight is used to enhance the outlier resistance in non-causal settings \citep{windham1995robustifying,basu1998robust,jones2001comparison,fujisawa2008robust}.
Let $h(y;\mu)^{\gamma}~(\gamma>0)$ be a density power weight for $Y^{(1)}$, where $h(y;\mu)$ is a symmetric density function with the location parameter $\mu$. The density $h(y;\mu^{(1)})$ is not necessarily equal to the true marginal density $f_1(y)$, but we assume that both $h$ and the true density $f_1(y)$ are symmetric about $\mu^{(1)}$. The assumption of symmetry is common in outlier-resistant estimation, and it is a prerequisite to use the sample median as an estimator of the population mean. Any symmetric density is available for $h$ as long as it satisfies Assumption \ref{as:Outlier} below. Typically, we assume $h$ is Gaussian.
The tuning parameter $\gamma$ controls the variability of the weight; this leads to the trade-off between outlier resistance and asymptotic efficiency. Assumption 1 formally describes the assumption on outliers.
 
\begin{assumption}\label{as:Outlier}
Let $h(y;\mu)$ be a weighting density symmetric about $\mu$. Then, there exists $\gamma>0$ such that 
\begin{align}\label{eq:assumOutlier}
    \xi_1(X,\gamma) = \int \delta_1(y|X)h(y;\mu^{(1)})^{\gamma}(y-\mu^{(1)})dy \approx 0~~~~a.e.
\end{align}
\end{assumption}

\noindent Denote an arbitrary bounded function by $\phi(x)$. Assumption \ref{as:Outlier} implies 
\begin{align}
    \nu_1(\phi) := \mathbb{E}[\phi(X)\xi_1(X,\gamma)] = \int \phi(x)\xi_1(x,\gamma)g_X(x)dx \approx 0.
\end{align}
In particular, let $\phi(x)=1$; then, the outliers are marginally negligible: 
\begin{align}
    \nu_1(1) = \mathbb{E}[\xi_1(X,\gamma)] = \int \delta_1(y)h(y;\mu^{(1)})^{\gamma}(y-\mu^{(1)})dy \approx 0.
\end{align}
Throughout this paper, we assume that $\gamma$ is sufficiently large so that Assumption \ref{as:Outlier} holds. Assumption \ref{as:Outlier} is reduced to a simpler form when $\delta_1(y|X)$ is Dirac's delta at $y_0$; this is one of the core assumptions in Section \ref{sec:if}.
\begin{assumption1p}\label{as:outlier1p}
Let $h(y;\mu)$ be a weighting density that is symmetric about $\mu$, and assume that the density of outliers is Dirac's delta at $y_0~(\ne \mu^{(1)})$, say $\delta_{y_0}(y)$. Then, there exists $\gamma>0$ such that 
\begin{align}\label{eq:outlier1p}
    \int \delta_{y_0}(y)h(y;\mu^{(1)})^{\gamma}(y-\mu^{(1)})dy 
        = h(y_0;\mu^{(1)})^{\gamma}(y_0-\mu^{(1)}) \approx 0.
\end{align}
\end{assumption1p}
\noindent For example, if $h$ is a Gaussian density with mean $\mu^{(1)}$ and fixed variance, \eqref{eq:outlier1p} tends to 0 as $|y_0| \rightarrow \infty$ for fixed $\gamma>0$ since $h(y_0;\mu^{(1)})^{\gamma}(y_0-\mu^{(1)}) \propto \exp{(-\gamma (y_0-\mu^{(1)})^2)}(y_0-\mu^{(1)})$.

% \noindent For example, if $h$ is a Gaussian density with mean $\mu^{(1)}$ and fixed variance, \eqref{eq:outlier1p} tends to 0 as $\gamma \rightarrow \infty$ for fixed $\gamma$at every point $y_0\ne\mu^{(1)}$ since $h(y_0;\mu^{(1)})^{\gamma}\propto \exp{(-\gamma (y_0-\mu^{(1)})^2)}$; then, Assumption 1a holds.

\subsection{IPW-type Estimator}\label{sec:DP-IPW}
First, we introduce an extension of the IPW estimator, called the density power inverse probability weighting (DP-IPW) estimator.
The DP-IPW estimator is defined as a root of the following estimating equation:
\begin{align}\label{eq:DPIPWeq}
    \sum_{i=1}^n \frac{T_i}{\pi(X_i;\hat{\alpha})}h(Y_i;\mu)^{\gamma}(Y_i-\mu) = 0.
\end{align} 
Under no contamination, the DP-IPW estimating equation is unbiased. 

\begin{theorem}\label{thm:dpipw1}
Assume that the true propensity score $\pi(X;\alpha^*)$ is given. Then, under no contamination, we have 
\begin{align}
    \mathbb{E}_{g}\left[
        \frac{T}{\pi(X;\alpha^*)}h(Y;\mu^{(1)})^{\gamma}(Y-\mu^{(1)})
    \right] = 0.
\end{align} 
\end{theorem}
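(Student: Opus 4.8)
The plan is to collapse the weighted causal expectation to a single integral against the true marginal density $f_1$ of $Y^{(1)}$ and then invoke symmetry. First I would apply the law of iterated expectations, conditioning on $X$:
\[
\mathbb{E}_{g}\!\left[\frac{T}{\pi(X;\alpha^*)}h(Y;\mu^{(1)})^{\gamma}(Y-\mu^{(1)})\right]
= \mathbb{E}_{g_X}\!\left[\frac{1}{\pi(X;\alpha^*)}\,\mathbb{E}_g\!\left[T\,h(Y;\mu^{(1)})^{\gamma}(Y-\mu^{(1)})\mid X\right]\right].
\]
Since $T\in\{0,1\}$, the inner conditional expectation equals $P(T=1\mid X)\,\mathbb{E}_g[h(Y;\mu^{(1)})^{\gamma}(Y-\mu^{(1)})\mid T=1,X]$, the $T=0$ contribution being zero.

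Next I would use the identification assumptions of Section \ref{sec:ATE}. Consistency gives $Y=Y^{(1)}$ on the event $\{T=1\}$, so the conditional expectation above is $\mathbb{E}[h(Y^{(1)};\mu^{(1)})^{\gamma}(Y^{(1)}-\mu^{(1)})\mid T=1,X]$, and conditional unconfoundedness $Y^{(1)}\indep T\mid X$ lets me drop the conditioning on $T$. Because the propensity score is correctly specified, $P(T=1\mid X)=\pi(X;\alpha^*)$, so the factor $P(T=1\mid X)/\pi(X;\alpha^*)$ cancels to $1$. Taking the outer expectation over $X$ and using \eqref{eq:contam2} to integrate $X$ out, the whole quantity reduces to
\[
\int f_1(y)\,h(y;\mu^{(1)})^{\gamma}(y-\mu^{(1)})\,dy.
\]

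Finally I would invoke symmetry: both $f_1$ and $h(\cdot;\mu^{(1)})$ are symmetric about $\mu^{(1)}$, hence so is $h(\cdot;\mu^{(1)})^{\gamma}$, while $y\mapsto y-\mu^{(1)}$ is antisymmetric about $\mu^{(1)}$; after the substitution $u=y-\mu^{(1)}$ the integrand is an odd function of $u$, so the integral vanishes.

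The argument is essentially bookkeeping, so there is no deep obstacle; the one point that needs care is that the paper assumes only the \emph{marginal} density $f_1$ is symmetric about $\mu^{(1)}$, not the conditional densities $g_{Y^{(1)}\mid X}$. This dictates the order of operations above: one must integrate out $X$ \emph{before} appealing to symmetry, rather than trying to annihilate the conditional integral directly. A secondary technical point is integrability of $y\mapsto f_1(y)h(y;\mu^{(1)})^{\gamma}|y-\mu^{(1)}|$, which is what makes the odd-function cancellation legitimate; it holds under mild tail conditions on $h$ (e.g.\ the Gaussian choice the paper suggests) together with $\mathbb{E}|Y^{(1)}|<\infty$.
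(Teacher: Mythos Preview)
Your proof is correct and follows essentially the same route as the paper's: iterated expectation on $X$, use of consistency and unconfoundedness to pass from $\mathbb{E}_g[\cdot\mid T=1,X]$ to $\mathbb{E}_g[\cdot\mid X]$ for $Y^{(1)}$, cancellation of $P(T=1\mid X)/\pi(X;\alpha^*)$, and then the symmetry argument applied to the marginal integral against $f_1$. Your explicit remark that symmetry must be invoked only after integrating out $X$ (since only $f_1$, not $g_{Y^{(1)}\mid X}$, is assumed symmetric) is a useful clarification that the paper leaves implicit.
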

\noindent Only an estimate $\pi(X;\hat{\alpha})$ is available in practice, but the asymptotic consistency of (DP-)IPW still holds if the model $\pi(X;\alpha)$ is correctly specified.

Now we consider the contaminated case. The bias of the DP-IPW estimating equation takes a different form from \eqref{eq:IPWbias}.

\begin{theorem}\label{thm:dpipw2}
Assume $Y$ is contaminated as \eqref{eq:contam1}. 
Under the same assumptions as those in Theorem \ref{thm:dpipw1}, the expectation of the DP-IPW estimating equation is expressed as 
\begin{gather}\label{eq:biasthm2}
    \mathbb{E}_{\tilde{g}}\left[
        \frac{T}{\pi(X;\alpha^*)}h(Y;\mu^{(1)})^{\gamma}(Y-\mu^{(1)})
    \right] = B_1 +  \nu_1(\varepsilon_1), \\
    \text{where}~~~ B_1 =
    -\int \varepsilon_1(x)\int h(y;\mu^{(1)})^{\gamma}(y-\mu^{(1)})g(y|x)dy~g(x)dx.\nonumber
\end{gather} 
In particular, under homogeneous contamination, $B_1$ reduces to 0.
\end{theorem}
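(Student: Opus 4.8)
The plan is to reduce the joint expectation to an iterated expectation that cancels the propensity score, then substitute the contamination mixture \eqref{eq:contam1} and identify the resulting pieces one by one. First I would condition on $(T,X)$ and use the tower property together with correct specification of the PS model: since $T\in\{0,1\}$, for any integrable $\varphi$ we have $\mathbb{E}[T\,\varphi(Y,X)\mid X]=\pi(X;\alpha^*)\,\mathbb{E}_{\tilde g}[\varphi(Y,X)\mid T=1,X]$, so the weight $T/\pi(X;\alpha^*)$ cancels and
\[
\mathbb{E}_{\tilde g}\!\left[\frac{T}{\pi(X;\alpha^*)}h(Y;\mu^{(1)})^{\gamma}(Y-\mu^{(1)})\right]
=\int\!\left(\int \tilde g(y\mid 1,x)\,h(y;\mu^{(1)})^{\gamma}(y-\mu^{(1)})\,dy\right) g_X(x)\,dx .
\]

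Next I would plug in $\tilde g(y\mid 1,x)=(1-\varepsilon_1(x))g(y\mid 1,x)+\varepsilon_1(x)\delta_1(y\mid x)$ and rewrite the coefficient as $1-\varepsilon_1(x)$, so the inner integral decomposes into three terms: the term with coefficient $1$ integrated against $g$, the term with coefficient $-\varepsilon_1(x)$ integrated against $g$, and the term with coefficient $\varepsilon_1(x)$ integrated against $\delta_1$. For the first term, I would apply Fubini to exchange the order of integration and invoke \eqref{eq:contam2} to obtain $\int f_1(y)\,h(y;\mu^{(1)})^{\gamma}(y-\mu^{(1)})\,dy$; because $h^{\gamma}$ is even about $\mu^{(1)}$ while $(y-\mu^{(1)})$ is odd and $f_1$ is symmetric about $\mu^{(1)}$, this integral is zero — which is exactly the statement of Theorem \ref{thm:dpipw1}. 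The second term is by definition $-\int \varepsilon_1(x)\bigl(\int h(y;\mu^{(1)})^{\gamma}(y-\mu^{(1)})g(y\mid x)\,dy\bigr)g_X(x)\,dx=B_1$, and the third term is $\int \varepsilon_1(x)\xi_1(x,\gamma)g_X(x)\,dx=\nu_1(\varepsilon_1)$ by the definitions of $\xi_1$ and $\nu_1$. Summing the three gives $B_1+\nu_1(\varepsilon_1)$, which is the claimed identity.

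For the homogeneous case, $\varepsilon_1(x)\equiv\varepsilon_1$ is constant, so it factors out of $B_1$, leaving $-\varepsilon_1\int\!\int h(y;\mu^{(1)})^{\gamma}(y-\mu^{(1)})g(y\mid x)\,dy\,g_X(x)\,dx=-\varepsilon_1\int f_1(y)h(y;\mu^{(1)})^{\gamma}(y-\mu^{(1)})\,dy=0$ by the same symmetry argument (equivalently, by Theorem \ref{thm:dpipw1} again). The point I would be most careful about — and the only genuine subtlety — is that the leading term does \emph{not} vanish conditionally on $X$, since $g(y\mid 1,x)$ need not be symmetric about $\mu^{(1)}$; the cancellation requires integrating $X$ out first so that the \emph{marginal} symmetry of $f_1$ can be used. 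This is precisely why one appeals to Theorem \ref{thm:dpipw1} rather than to a pointwise-in-$x$ identity, and why a routine Fubini justification (finiteness of the integrals, which follows from boundedness of $h^{\gamma}(y-\mu^{(1)})$ under a Gaussian-type weight) is needed at that step. Everything else is bookkeeping.
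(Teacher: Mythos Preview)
Your proof is correct and follows essentially the same route as the paper: condition on $X$ to cancel the propensity-score weight, expand the contaminated conditional density as $(1-\varepsilon_1(x))g+\varepsilon_1(x)\delta_1$, then split $(1-\varepsilon_1(x))$ into $1-\varepsilon_1(x)$ and identify the three pieces as $0$ (by Theorem~\ref{thm:dpipw1}/marginal symmetry of $f_1$), $B_1$, and $\nu_1(\varepsilon_1)$. Your added remarks on Fubini and on why only the \emph{marginal} symmetry of $f_1$ is used are sound clarifications but do not depart from the paper's argument.
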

\noindent The DP-IPW estimating equation is still biased even if $\nu_1(\varepsilon_1)$ is small. Since we assume that $\nu_1(\varepsilon_1)$ is negligible, $B_1$ is dominant. However, compared to \eqref{eq:IPWbias}, the dominant bias of DP-IPW does not contain $\delta_1$. This implies that the bias of DP-IPW is not strongly affected by the absolute value of outliers. Under homogeneous contamination, the dominant term disappears, so the bias is negligible.

% \begin{align}\label{eq:DPDReq}
%     \sum_{i=1}^n\left[
%         \frac{T_ih(Y_i;\mu)^{\gamma}}{\pi(X_i;\hat{\alpha})}(Y_i-\mu)
%         - \frac{T_i-\pi(X_i;\hat{\alpha})}{\pi(X_i;\hat{\alpha})}
%         \mathbb{E}_{\hat{q}}\left[h(Y_i;\mu)^{\gamma}(Y_i-\mu)|T=1,X\right]
        
%         \left\{m_{1,\mu}(X_i;\hat{\beta}) - \mu m_{0,\mu}(X_i;\hat{\beta})\right\}
%     \right] = 0,
% \end{align} 

\subsection{DR-type Estimator}\label{sec:DP-DR}
Next, we introduce the density power doubly robust (DP-DR) estimator.
The DP-DR estimator is a straight application of the DR M-estimator and defined as a root of the following estimating equation:
\begin{align}\label{eq:DPDReq}
    \sum_{i=1}^n\left[
        \frac{T_ih(Y_i;\mu)^{\gamma}}{\pi(X_i;\hat{\alpha})}(Y_i-\mu)
        - \frac{T_i-\pi(X_i;\hat{\alpha})}{\pi(X_i;\hat{\alpha})}
        \mathbb{E}_{\hat{q}}\left[h(Y;\mu)^{\gamma}(Y-\mu)|T=1,X\right]
    \right] = 0.
\end{align} 
As we have discussed in Section \ref{sec:ATE}, $\mathbb{E}_{\hat{q}}\left[h(Y;\mu)^{\gamma}(Y-\mu)|T=1,X\right]$ is obtained by direct calculation or Monte Carlo approximation based on the parametric OR model $\hat{q}:=q(y|T=1,X;\hat{\beta})$. In the Appendix, we present the explicit forms of $\mathbb{E}_{\hat{q}}\left[h(Y;\mu)^{\gamma}(Y-\mu)|T=1,X\right]$ when $h$ and $q$ are assumed to be Gaussian. The parameter $\beta$ is usually estimated in an outlier-resistant manner: Huber regression \citep[][Chap.7]{huber2004robust}, MM estimator \citep{yohai1987high}, density power regression \citep{basu1998robust,kanamori2015robust}, and $\gamma$-regression \citep{fujisawa2008robust,kawashima2017robust}, for example. 
% Unlike the existing density-power approaches, DP-DR does not multiply the whole estimating equation by the weight $h^{\gamma}$. Instead, we use $h^{\gamma}$ as a multiplicative factor on the first term of \eqref{eq:DPDReq}, which is usual, but incorporate $h^{\gamma}$ inside the conditional expectation on the second term of \eqref{eq:DPDReq}, which is unusual. 

The DP-DR estimator is doubly robust under no contamination as with the general DR M-estimator.
\begin{theorem}\label{thm:dpdr1}
Assume either the true PS or the true OR model is given. Then, if there is no contamination, the DP-DR estimating equation is unbiased.
\end{theorem}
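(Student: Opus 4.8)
The plan is to run the classical doubly-robust cancellation with the density-power estimating function $\psi(y;\mu):=h(y;\mu)^{\gamma}(y-\mu)$ playing the role of $Y-\mu$, and to check that the only analytic input beyond the standard algebra is the oddness of $y\mapsto\psi(y;\mu^{(1)})f_1(y)$ about $\mu^{(1)}$ — i.e.\ exactly the fact already used for Theorem \ref{thm:dpipw1}. Here ``unbiased'' means that the expectation under the uncontaminated law $g$ of the DP-DR summand, evaluated at $\mu=\mu^{(1)}$ with the true nuisance plugged in (and the other, possibly misspecified, nuisance at an arbitrary fixed value), is zero; integrability of $\psi(Y;\mu^{(1)})/\pi(X)$ is assumed so that the conditional expectations below exist. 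The starting point is the purely algebraic identity
\[
    \frac{T}{\pi(X)}\psi(Y;\mu^{(1)})-\frac{T-\pi(X)}{\pi(X)}m(X)
    = m(X)+\frac{T}{\pi(X)}\bigl(\psi(Y;\mu^{(1)})-m(X)\bigr),
\]
where $m(X):=\mathbb{E}_{q}[\psi(Y;\mu^{(1)})\mid T=1,X]$ and $\pi$ is whichever propensity score (true or working) appears. This reduces the claim to two cases.

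\emph{PS correctly specified.} Then $\pi(X)=P(T=1\mid X)=\mathbb{E}[T\mid X]$. Conditioning on $X$ and using consistency ($Y=Y^{(1)}$ on $\{T=1\}$, hence $T\psi(Y;\mu^{(1)})=T\psi(Y^{(1)};\mu^{(1)})$) together with conditional unconfoundedness $Y^{(1)}\indep T\mid X$, one obtains $\mathbb{E}[\tfrac{T}{\pi(X)}\psi(Y;\mu^{(1)})\mid X]=\mathbb{E}[\psi(Y^{(1)};\mu^{(1)})\mid X]$, while $\mathbb{E}[\tfrac{T-\pi(X)}{\pi(X)}m(X)\mid X]=\tfrac{\mathbb{E}[T\mid X]-\pi(X)}{\pi(X)}m(X)=0$ for any $m$ (so the working OR model is irrelevant). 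Taking the expectation over $X$ leaves $\mathbb{E}[\psi(Y^{(1)};\mu^{(1)})]$, which vanishes by Theorem \ref{thm:dpipw1} — equivalently because $\psi(\cdot;\mu^{(1)})f_1(\cdot)$ is an odd function about $\mu^{(1)}$.

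\emph{OR correctly specified.} Now $q(y\mid T=1,X;\beta^*)=g_{Y\mid TX}(y\mid 1,X)$, so $m(X)=\mathbb{E}_{g}[\psi(Y;\mu^{(1)})\mid T=1,X]$, but $\pi$ need not equal $\mathbb{E}[T\mid X]$; accordingly I would condition on $(T,X)$ rather than on $X$ alone. Substituting $T=0$ and then $T=1$ into the residual form and using $\mathbb{E}[\psi(Y;\mu^{(1)})\mid T=1,X]=m(X)$, the term $\tfrac{T}{\pi(X)}(\psi(Y;\mu^{(1)})-m(X))$ has conditional mean $0$ for each value of $T$, so the summand has conditional mean $m(X)$ given $(T,X)$. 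Marginalizing and invoking the triple-assumption identity \eqref{eq:contam2},
\[
    \mathbb{E}[m(X)]
    = \int\!\!\int \psi(y;\mu^{(1)})\,g_{Y\mid TX}(y\mid 1,x)\,g_X(x)\,dx\,dy
    = \int \psi(y;\mu^{(1)})\,f_1(y)\,dy = 0,
\]
again by the symmetry of $f_1$ and $h(\cdot;\mu^{(1)})$ about $\mu^{(1)}$.

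The only genuinely delicate point is the OR-correct case: because the propensity score there is arbitrary (constrained only by positivity, which is what keeps the weights well defined), one must \emph{not} factor $\pi$ out of a conditional expectation given $X$, and must instead verify the cancellation of $\tfrac{T}{\pi(X)}(\psi(Y;\mu^{(1)})-m(X))$ separately at $T=0$ and $T=1$. Everything else is the usual augmented-IPW bookkeeping plus the symmetry reduction already carried out for Theorem \ref{thm:dpipw1}; note in particular that Assumption \ref{as:Outlier} plays no role here, since under no contamination the argument goes through for every $\gamma>0$.
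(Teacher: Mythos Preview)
Your proof is correct and matches the paper's argument in substance: in each case the expectation is reduced to $\mathbb{E}_{f_1}[h(Y^{(1)};\mu^{(1)})^\gamma(Y^{(1)}-\mu^{(1)})]=0$ via the symmetry of $h$ and $f_1$ about $\mu^{(1)}$. One minor correction: your stated reason for conditioning on $(T,X)$ in the OR-correct case---that one ``must \emph{not} factor $\pi$ out of a conditional expectation given $X$''---is mistaken, since $\pi(X;\alpha)$ is $X$-measurable regardless of whether it equals $P(T=1\mid X)$. The paper simply conditions on $X$ throughout and obtains
\[
\frac{P(T=1\mid X)}{\pi(X;\alpha)}\,\mathbb{E}_g[\psi(Y^{(1)};\mu^{(1)})\mid X]
-\Bigl(\frac{P(T=1\mid X)}{\pi(X;\alpha)}-1\Bigr)\mathbb{E}_g[\psi(Y^{(1)};\mu^{(1)})\mid X]
=\mathbb{E}_g[\psi(Y^{(1)};\mu^{(1)})\mid X],
\]
which integrates to zero. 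Your residual-form route via conditioning on $(T,X)$ is a perfectly valid alternative and yields the same cancellation; it is just not forced by the misspecification of $\pi$.
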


Now, we evaluate the bias of the DP-DR estimating equation under contamination.
\begin{theorem}\label{thm:dpdr2}
Assume that $Y$ is contaminated as \eqref{eq:contam1}.
If the true PS model is given, the expectation of the DP-DR estimating equation is expressed as
\begin{align}\label{eq:dpdrbiasps}
    -\int \varepsilon_1(x)\int h(y;\mu^{(1)})^{\gamma}(y-\mu^{(1)})g(y|x)dy~g(x)dx + \nu_1(\varepsilon_1).
\end{align} 
In particular, under homogeneous contamination, \eqref{eq:dpdrbiasps} reduces to $\nu_1(\varepsilon_1)$.
\noindent If the true OR model is given, the expectation of the DP-DR estimating equation is expressed as
\begin{align}\label{eq:thm4or}
    -\int \varepsilon_1(x)\frac{P(T=1|x)}{\pi(x;\alpha)}\int h(y;\mu^{(1)})^{\gamma}(y-\mu^{(1)})g(y|x)dy~g(x)dx + \nu_1(\varepsilon_1P(T=1|\cdot)/\pi(\cdot;\alpha)).
\end{align}
Under homogeneous contamination, \eqref{eq:thm4or} becomes
\begin{align}\label{eq:dpdrbiasor}
    -\varepsilon_1\int \frac{P(T=1|x)}{\pi(x;\alpha)}\int h(y;\mu^{(1)})^{\gamma}(y-\mu^{(1)})g(y|x)dy~g(x)dx + \nu_1(\varepsilon_1P(T=1|\cdot)/\pi(\cdot;\alpha)).
\end{align}
\end{theorem}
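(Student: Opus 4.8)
The plan is to evaluate $\mathbb{E}_{\tilde g}$ of the summand of \eqref{eq:DPDReq} at $\mu=\mu^{(1)}$ by first reducing it to a single conditioning identity and then specializing. Write $w(y):=h(y;\mu^{(1)})^{\gamma}(y-\mu^{(1)})$; since $h(\cdot;\mu^{(1)})$ is symmetric about $\mu^{(1)}$, $w$ is odd about $\mu^{(1)}$, so by \eqref{eq:contam2} and the symmetry of $f_1$ we have $\int w(y)f_1(y)\,dy=0$ — the same computation that underlies Theorems \ref{thm:dpipw1} and \ref{thm:dpdr1}. Because the contamination \eqref{eq:contam1} touches only $Y$, the law of $(T,X)$ is unchanged, so $\mathbb{E}_{\tilde g}[T\mid X]=P(T=1\mid X)$, and the conditional law of $Y$ given $(T=1,X)$ under $\tilde g$ is $\tilde g(y\mid 1,X)=(1-\varepsilon_1(X))g(y\mid 1,X)+\varepsilon_1(X)\delta_1(y\mid X)$. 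I would set $r_g(X):=\int w(y)g(y\mid 1,X)\,dy$, $r_q(X):=\mathbb{E}_{\hat q}[w(Y)\mid T=1,X]$, and $\tilde r(X):=\mathbb{E}_{\tilde g}[w(Y)\mid T=1,X]=(1-\varepsilon_1(X))r_g(X)+\varepsilon_1(X)\xi_1(X,\gamma)$, with $\xi_1$ as in Assumption \ref{as:Outlier}.

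Next I would condition the two pieces of the summand on $(T,X)$. Since $T/\pi(X;\alpha)$ vanishes when $T=0$, the IPW piece has expectation $\mathbb{E}[P(T=1\mid X)\tilde r(X)/\pi(X;\alpha)]$; and since $r_q(X)$ and $\pi(X;\alpha)$ depend only on $X$, the augmentation piece has expectation $\mathbb{E}[r_q(X)(P(T=1\mid X)/\pi(X;\alpha)-1)]$. Subtracting gives the master identity
\[
\mathbb{E}_{\tilde g}[\,\text{summand}\,] \;=\; \mathbb{E}\!\left[\frac{P(T=1\mid X)}{\pi(X;\alpha)}\big(\tilde r(X)-r_q(X)\big)\right] \;+\; \mathbb{E}[r_q(X)] ,
\]
from which both displayed formulas follow by specialization.

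For the correctly-specified-PS claim I would set $P(T=1\mid X)/\pi(X;\alpha)\equiv 1$, so the identity collapses to $\mathbb{E}[\tilde r(X)]$, which expands as $\mathbb{E}[r_g(X)]-\mathbb{E}[\varepsilon_1(X)r_g(X)]+\nu_1(\varepsilon_1)$; the first term is $0$ by the symmetry of $w$, and writing $r_g(X)=\int w(y)g(y\mid 1,X)\,dy$ shows the remaining two terms are exactly $B_1+\nu_1(\varepsilon_1)$ of \eqref{eq:dpdrbiasps}. Under homogeneous contamination the constant $\varepsilon_1$ factors out of $B_1$, leaving $-\varepsilon_1\int w(y)f_1(y)\,dy=0$, so only $\nu_1(\varepsilon_1)$ survives. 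For the correctly-specified-OR claim I would instead use $r_q=r_g$, so $\tilde r(X)-r_q(X)=\varepsilon_1(X)(\xi_1(X,\gamma)-r_g(X))$ and $\mathbb{E}[r_q(X)]=\mathbb{E}[r_g(X)]=0$; substituting $r_g(X)=\int w(y)g(y\mid 1,X)\,dy$ and recognizing $\mathbb{E}[\phi(X)\xi_1(X,\gamma)]=\nu_1(\phi)$ with $\phi(\cdot)=\varepsilon_1(\cdot)P(T=1\mid\cdot)/\pi(\cdot;\alpha)$ gives \eqref{eq:thm4or}, and pulling the constant $\varepsilon_1$ out of both terms under homogeneous contamination gives \eqref{eq:dpdrbiasor}.

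The main obstacle is bookkeeping in the correctly-specified-OR case rather than any hard estimate: one must keep straight that the ``true OR model'' here is the uncontaminated conditional density $g(y\mid 1,x)$ — the quantity targeted by the outlier-resistant $\hat\beta$ — and not $\tilde g(y\mid 1,x)$, and then verify that the augmentation term exactly cancels the uncontaminated part $(1-\varepsilon_1(X))r_g(X)$ of the IPW term while leaving behind both a $B_1$-type term $-\mathbb{E}[\varepsilon_1(X)\,P(T=1\mid X)\,r_g(X)/\pi(X;\alpha)]$ and the $\nu_1$ remainder. The only other point to make explicit is why $\mathbb{E}[r_g(X)]=\int w(y)f_1(y)\,dy=0$, which is where the joint symmetry of $h$ and $f_1$ about $\mu^{(1)}$ (the standing hypothesis behind Assumption \ref{as:Outlier}) enters; with that in place every remaining step is a routine conditional-expectation manipulation.
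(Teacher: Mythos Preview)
Your argument is correct and follows essentially the same approach as the paper's proof: both condition on $(T,X)$, split the IPW and augmentation pieces, and then specialize to the PS-correct and OR-correct cases using $\mathbb{E}[r_g(X)]=\int w(y)f_1(y)\,dy=0$. Your ``master identity'' packages the two cases into a single formula before specializing, whereas the paper treats them separately (in the PS-correct case it kills the augmentation term directly via $\mathbb{E}[T-\pi(X;\alpha^*)\mid X]=0$), but the underlying computations are identical.
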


\noindent Assuming that $\pi(\cdot;\alpha)$ is bounded away from 0 and 1, we find that $P(T=1|\cdot)/\pi(\cdot;\alpha)$ is bounded. 
Then, from Assumption \ref{as:Outlier}, $\nu_1(\varepsilon_1P(T=1|\cdot)/\pi(\cdot;\alpha))$ is negligible. As with DP-IPW, the dominant bias is independent of $\delta$, indicating that the influence of outliers is reduced. Unfortunately, DP-DR is still biased in the PS-incorrect and OR-correct case even under homogeneous contamination because the dominant term of \eqref{eq:dpdrbiasor} is not eliminated.

In the OR-correct case, the reason why DP-DR is biased under homogeneous contamination is as follows. Under Assumption 1, the expectation of the DP-DR estimating function becomes
\begin{align}
    &~ \mathbb{E}_{g}\left[\frac{P(T=1|X)}{\pi(X;\alpha)}\left\{\mathbb{E}_{\tilde{g}}[\psi(Y^{(1)};\mu^{(1)})|X] - \mathbb{E}_g[\psi(Y^{(1)};\mu^{(1)})|X]\right\}\right] \nonumber\\
    \approx&~ \mathbb{E}_{g}\left[\frac{P(T=1|X)}{\pi(X;\alpha)}\left\{(1-\varepsilon_1)\mathbb{E}_{g}[\psi(Y^{(1)};\mu^{(1)})|X] - \mathbb{E}_g[\psi(Y^{(1)};\mu^{(1)})|X]\right\}\right], \nonumber
\end{align}
where we denote the density power estimating function by $\psi$. In the last formula, it is found that the terms in the curly brackets do not cancel because the first term is reduced by $1-\varepsilon_1$. Based on this consideration, we propose a bias-corrected version of DP-DR, called the $\varepsilon$DP-DR estimator. $\varepsilon$DP-DR is designed to cancel the dominant bias under homogeneous contamination. 
The $\varepsilon$DP-DR estimator is a root of the following estimating equation:
\begin{align}\label{eq:eDPDReq}
    \sum_{i=1}^n\left[
        \frac{T_ih(Y_i;\mu)^{\gamma}}{\pi(X_i;\hat{\alpha})}(Y_i-\mu)
        - \frac{T_i-\pi(X_i;\hat{\alpha})}{\pi(X_i;\hat{\alpha})}(1-\hat\varepsilon_1)\mathbb{E}_{\hat{q}}\left[h(Y;\mu)^{\gamma}(Y-\mu)|T=1,X\right]
    \right] = 0,
\end{align}
where $\hat\varepsilon_1$ is a consistent estimator of the expected contamination ratio $\overline{\varepsilon}_1=\int\varepsilon_1(x)g(x)dx$. 
$\hat\varepsilon_1$ can be obtained simultaneously with the parametric OR model by the unnormalized modeling with the density power score \citep{kanamori2015robust}, for example. While DP-DR is a special case of the DR M-estimator, $\varepsilon$DP-DR goes beyond this framework by the bias correction.
Under no contamination, the $\varepsilon$DP-DR estimating equation is asymptotically identical to the DP-DR estimating equation. The $\varepsilon$DP-DR estimating equation is also biased under heterogeneous contamination; however, the bias takes a different form.
\begin{corollary} \label{cor:dpdr}
If the true PS model is given, the expectation of the $\varepsilon$DP-DR estimating equation is equal to \eqref{eq:dpdrbiasps}.
If the true OR model is given, the expectation of the $\varepsilon$DP-DR estimating equation is expressed as
\begin{align}\label{eq:biaseDPDR}
    \mathbb{E}_g\left[
        (\overline{\varepsilon}_1-\varepsilon_1(X))\frac{P(T=1|X)}{\pi(X;\alpha)}\mathbb{E}_g[h(Y^{(1)};\mu^{(1)})^{\gamma}(Y^{(1)}-\mu^{(1)})|X]
    \right] + \nu_1(\varepsilon_1P(T=1|\cdot)/\pi(\cdot;\alpha)).
\end{align}
The first term disappears under homogeneous contamination.
\end{corollary}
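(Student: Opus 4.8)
The plan is to split the argument by which nuisance model is correctly specified, in each case reusing the expansions already behind Theorems~\ref{thm:dpipw2} and~\ref{thm:dpdr2}. Write $\psi(y;\mu)=h(y;\mu)^{\gamma}(y-\mu)$ for the density power estimating function, $w(x)=P(T=1|x)/\pi(x;\alpha)$, and $r(x)=\mathbb{E}_g[\psi(Y^{(1)};\mu^{(1)})\mid X=x]$. By unconfoundedness and consistency $r(x)=\int g(y|1,x)\psi(y;\mu^{(1)})\,dy$, and integrating $x$ out via \eqref{eq:contam2} together with symmetry of $h$ and $f_1$ about $\mu^{(1)}$ gives the identity $\mathbb{E}_g[r(X)]=\int f_1(y)\psi(y;\mu^{(1)})\,dy=0$, which I will use twice. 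I also use that $\hat\varepsilon_1\to\overline{\varepsilon}_1$ and that ``true OR model'' means the uncontaminated conditional density, so $\mathbb{E}_{\hat q}[\psi(Y;\mu^{(1)})\mid T=1,X]\to r(X)$, a function of $X$ only.

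PS-correct case. Here $\mathbb{E}[T-\pi(X;\alpha^*)\mid X]=0$, and the factor $(1-\hat\varepsilon_1)$ multiplies only the augmentation term $\tfrac{T-\pi(X;\alpha^*)}{\pi(X;\alpha^*)}\mathbb{E}_{\hat q}[\psi\mid T=1,X]$, whose conditional expectation given $X$ vanishes no matter what that factor is. Taking expectations therefore removes the augmentation term entirely, so the asymptotic expectation of the $\varepsilon$DP-DR estimating function equals that of the DP-IPW estimating function, which by Theorem~\ref{thm:dpipw2} (equivalently, the PS-correct part of Theorem~\ref{thm:dpdr2}) is exactly \eqref{eq:dpdrbiasps}. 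That settles the first claim.

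OR-correct case. I would compute the two pieces separately. Conditioning on $X$, then on $T$, and using the contamination model \eqref{eq:contam1},
\begin{align}
    \mathbb{E}_{\tilde g}\!\left[\frac{T\,\psi(Y;\mu^{(1)})}{\pi(X;\alpha)}\right]
    =\mathbb{E}_g\!\big[w(X)(1-\varepsilon_1(X))r(X)\big]+\nu_1\!\big(\varepsilon_1 P(T=1|\cdot)/\pi(\cdot;\alpha)\big),\nonumber
\end{align}
the last term arising from $\int\delta_1(y|X)\psi(y;\mu^{(1)})\,dy=\xi_1(X,\gamma)$ and the definition of $\nu_1$. Replacing the inner conditional expectation in the augmentation by $r(X)$ and integrating out $T$ gives $-(1-\overline{\varepsilon}_1)\,\mathbb{E}_g\!\big[(w(X)-1)r(X)\big]=-(1-\overline{\varepsilon}_1)\,\mathbb{E}_g[w(X)r(X)]$, where the stray $-(1-\overline{\varepsilon}_1)\mathbb{E}_g[r(X)]$ drops out by the symmetry identity above. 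Adding the two pieces, the coefficient of $w(X)r(X)$ collapses to $(1-\varepsilon_1(X))-(1-\overline{\varepsilon}_1)=\overline{\varepsilon}_1-\varepsilon_1(X)$, which is precisely \eqref{eq:biaseDPDR}; under homogeneous contamination $\varepsilon_1(X)\equiv\overline{\varepsilon}_1$ makes the first term vanish.

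The computation is largely bookkeeping; the point that needs care is the identity $\mathbb{E}_g[r(X)]=0$, since it is exactly what lets the constant $(1-\overline{\varepsilon}_1)$ introduced by the correction cancel the $(1-\varepsilon_1(X))$ produced by contamination, leaving only the heterogeneity gap $\overline{\varepsilon}_1-\varepsilon_1(X)$ — which also explains why the bias correction can neutralise only the \emph{average} contamination ratio and not the covariate-specific one. The remaining care is the standard M-estimation one: passing to the asymptotic expectation of the estimating equation by replacing $\hat\varepsilon_1$ with its probability limit $\overline{\varepsilon}_1$ and the fitted OR model with the true uncontaminated conditional density.
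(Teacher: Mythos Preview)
Your proof is correct and follows essentially the same route as the paper, which simply remarks that the derivation is the same as Theorem~\ref{thm:dpdr2} and then notes $\overline{\varepsilon}_1=\varepsilon_1$ under homogeneous contamination. You have spelled out the bookkeeping more explicitly—in particular, your use of the identity $\mathbb{E}_g[r(X)]=0$ to drop the stray $(1-\overline{\varepsilon}_1)\mathbb{E}_g[r(X)]$ term is exactly the mechanism implicit in the paper's argument, and your reduction of the coefficient to $\overline{\varepsilon}_1-\varepsilon_1(X)$ is the key step that distinguishes the $\varepsilon$DP-DR bias from the DP-DR bias in \eqref{eq:thm4or}.
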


\begin{proof}
The derivation is the same as that of Theorem \ref{thm:dpdr2}.
If $\varepsilon_1(X)$ is constant $\varepsilon_1$, the first term disappears because $\overline{\varepsilon}_1=\varepsilon_1\int g(x)dx=\varepsilon_1$.
\end{proof}

Similar to \eqref{eq:dpdrbiasor}, the second term of \eqref{eq:biaseDPDR} is approximately zero if we assume that $\pi(\cdot;\alpha)$ is bounded away from 0 and 1. 

\paragraph{Remark}
One may find that ``$\varepsilon(X)$''DP-DR would work better than $\varepsilon$DP-DR under heterogeneous contamination. In fact, the bias \eqref{eq:biaseDPDR} will disappear if we replace $\overline\varepsilon$ with $\varepsilon(X)$. 
However, it is necessary to model $\varepsilon(X)$ correctly for consistent estimation of ``$\varepsilon(X)$''DP-DR. To the best of our knowledge, no easy method is available for this purpose.

\subsection{Summary}\label{sec:summary}
We have proposed three types of outlier resistant semiparametric estimators: DP-IPW, DP-DR, and $\varepsilon$DP-DR. Table \ref{tab:summary} shows the bias of the estimating equations under the conditions discussed above.
Under heterogeneous contamination, all estimators are biased, but the biases are hardly influenced by the absolute value of outliers. Furthermore, as discussed in Section \ref{sec:if}, outliers have negligible influence if the contamination ratio is sufficiently small. $\varepsilon$DP-DR improves DP-DR in the OR-correct case under homogeneous contamination, but we continue to discuss DP-DR for three reasons: the contamination ratio is sometimes hard to estimate, the bias \eqref{eq:dpdrbiasor} is not serious if $\pi(X;\alpha)$ is close to $P(T=1|X)$, and the simulation results presented in Section \ref{sec:monte} indicate that DP-DR remains better than the existing methods even in the OR-correct case. 

\begin{table}[ht]
    \centering
    \begin{tabular}{c|c|ccc}
        \hline
        Contamination & model & DP-IPW & DP-DR & $\varepsilon$DP-DR \\
        \hline\hline
        No contamination 
            & PS-correct & $0$ & $0$ & $0$ \\
            & OR-correct & - & $0$ & $0$ \\ \hline
        homogeneous: $\varepsilon$ 
            & PS-correct & $\approx0$ & $\approx0$ & $\approx0$ \\
            & OR-correct & - & $\approx\varepsilon\mathbb{E}[\phi(X)]$ & $\approx0$ \\ \hline
        heterogeneous: $\varepsilon(X)$ 
            & PS-correct & $\approx\mathbb{E}[\varepsilon(X)\phi(X)]$ & $\approx\mathbb{E}[\varepsilon(X)\phi(X)]$ & $\approx\mathbb{E}[\varepsilon(X)\phi(X)]$ \\
            & OR-correct & - & $\approx\mathbb{E}[\varepsilon(X)\phi(X)]$ & $\approx\mathbb{E}[(\overline{\varepsilon}-\varepsilon(X))\phi(X)]$ \\ \hline
    \end{tabular}
    \caption{Summary of the biases of the proposed estimating equations. The function $\phi(X)$ differs cell-by-cell. PS-correct means that the PS model is correctly specified and the OR model may not be; OR-correct means the opposite.}
    \label{tab:summary}
\end{table}
 
%%%%%%%%%%%%%%%%%%%%%%%%%%%%%%%%%%%%%%%%%%%%%%%%%%%%%%%%%%%%%%%%%%%%%%%%%%%%%%%%%%%%%%%%%%%%%%%%%%%%%%%%%%%%%%%%%%%%%%%%%%%%

\section{Influence-function-based Analysis of Outlier Resistance}\label{sec:if}
As discussed in the previous section, the proposed estimators suffer less from outliers compared with ordinary estimators from the viewpoint of the unbiasedness of the estimating equation. In this section, we demonstrate that they are outlier-resistant from the viewpoint of the IF.

Here, we briefly review the IF for the univariate M-estimator and expand it to evaluate our estimators.
Let $G$ be the distribution of $Z\in\mathbb{R}$, and let $T(G)$ be a functional of G, which is the parameter of interest. The IF of $T(G)$ is defined as
\begin{align}
    IF(z_0;G):=\lim_{\varepsilon\rightarrow 0}\frac{T((1-\varepsilon)G+\varepsilon\Delta_{z_0})-T(G)}{\varepsilon} = \left.\frac{\partial}{\partial\varepsilon}\{T((1-\varepsilon)G+\varepsilon\Delta_{z_0})-T(G)\}\right|_{\varepsilon=0},
\end{align}
where $\Delta_{z_0}$ is a degenerate distribution at $z_0$.
We also see that the latent bias $T((1-\varepsilon)G+\varepsilon \Delta_{z_0})-T(G)$ can be approximated by $\varepsilon IF(z_0;G)$. Therefore, the behavior of the IF approximates that of the latent bias.
In the population, the M-estimator $T_M(G)$ satisfies $\int\psi(z,T_M(G))dG(z)=0$. Then, the IF for $T_M(G)$ is obtained by differentiating $\int\psi(z,T_{M}((1-\varepsilon)G+\varepsilon\Delta_{z_0})d\{(1-\varepsilon)G+\varepsilon\Delta_{z_0}\}(z)=0$ with respect to $\varepsilon$. This yields 
\begin{align}
    IF(z_0;G) = -\mathbb{E}\left[\left.\frac{\partial}{\partial\eta}\psi(Z,\eta)\right|_{\eta=T_M(G)}\right]^{-1}\psi(z_0,T_M(G)).
\end{align}
The function $\psi$ is said to have a redescending property if $\psi(z_0,T_M(G))$ approaches zero as the outlier $|z_0|$ increases. Therefore, when $\psi$ has the redescending property and $z_0$ is an outlier, the latent bias is sufficiently small. This is favorable for outlier resistance.

Since $\varepsilon_1$ is dependent on $X$, we cannot apply the IF directly to our estimators. To overcome this issue, we consider the IF with fixed covariates $\{X_i\}_{i=1}^n$; this approach is similar to the fixed carrier model in \citet[][Chap.6]{hampel2011robust}.
Consider the following estimating equation:
\begin{align} \label{eq:condMEE}
    \frac{1}{n}\sum_{i=1}^n\left.\mathbb{E}_{\tilde{g}}\left[\psi(Y,T,X_i;\mu)\right|X_i\right] = 0.
\end{align}
If the fixed sample $\{X_i\}_{i=1}^n$ consists of i.i.d. observations, then the left-hand side of \eqref{eq:condMEE} converges to $\mathbb{E}_{\tilde{g}}[\psi(Y,T,X;\mu)]$ as $n\rightarrow\infty$. Let $\tilde{\mu}^{(1)}_n$ denote a root of \eqref{eq:condMEE}, and let $\tilde{\mu}^{(1)}$ be a root of $\mathbb{E}_{\tilde{g}}[\psi(Y,T,X;\mu)]$. Then, $\tilde{\mu}^{(1)}_n$ also converges to $\tilde{\mu}^{(1)}$. Therefore, $\tilde{\mu}^{(1)}_n$ shows roughly the same behavior as that of the target estimator $\tilde{\mu}^{(1)}$. The contaminated density $\tilde{g}$ is defined as \eqref{eq:contam1}, and $\delta_1(y|X_i)$ is assumed to be Dirac's delta at $y_0$. The IF of $T_n(\tilde{G})$ at $X_i$ is obtained by differentiating \eqref{eq:condMEE} with respect to $\varepsilon_1(X_i)$ at $\varepsilon_1(X_i)=0$. 

Because of the space, we discuss only the $\varepsilon$DP-DR.
Assume that $\overline{\varepsilon}_1 = \frac{1}{n}\sum_{i=1}^n\varepsilon_1(X_i)$, then the IF of $\varepsilon$DP-DR is
\begin{align} \label{eq:ifedpdr}
    &-\mathbb{E}_{g}\left[\left.\left.\frac{\partial\psi}{\partial\mu}\right|_{\mu=\mu^{(1)}_n}\right|X_i\right]^{-1}\left[\frac{P(T=1|X_i)}{\pi(X_i;\alpha)}h(y_{0}-\mu_n^{(1)})^{\gamma}(y_{0}-\mu^{(1)}_n)\right.\nonumber\\
    &\left.-\frac{n-1}{n}\frac{P(T=1|X_i)-\pi(X_i;\alpha)}{\pi(X_i;\alpha)}\mathbb{E}_{\hat{q}}\left[h(Y;\mu_n^{(1)})^{\gamma}(Y-\mu_n^{(1)})|T=1,X\right]\right].
\end{align}
In the PS-correct case, the second term in square brackets is equal to zero, and then the IF tends to zero as $|y_0|\rightarrow\infty$. In the OR-correct case, the second term does not disappear. Considering the limit of $|y_0|\rightarrow\infty$, the IF converges to
\begin{align} \label{eq:ifdpdr2}
    \frac{n-1}{n}\mathbb{E}_{g}\left[\left.\left.\frac{\partial\psi}{\partial\mu}\right|_{\mu=\mu^{(1)}_n}\right|X_i\right]^{-1}&\left[ \frac{P(T=1|X_i)-\pi(X_i;\alpha)}{\pi(X_i;\alpha)}\mathbb{E}_{\hat{q}}[h(Y;\mu_n^{(1)})^{\gamma}(Y-\mu^{(1)}_n)|T=1,X_i]\right].
\end{align}
Thus, the $\varepsilon$DP-DR estimator has the redescending property only in the PS-correct case. In the OR-correct case, the influence cannot be eliminated, but the IF tends to a constant when $|y_0|$ tends to infinity, implying that the influence of the outlier is not serious. DP-DR has an IF similar to that of $\varepsilon$DP-DR, and DP-IPW has an IF similar to that of $\varepsilon$DP-DR whose PS is correct. The derivations of all IFs are presented in the Appendix.

Under homogeneous contamination, the ordinary IF is applicable, and
we can see that the proposed estimators have the redescending property in the PS-correct case. Besides, $\varepsilon$DP-DR has the redescending property even in the OR-correct case; this result is consistent with Corollary \ref{cor:dpdr}.
The IF-based analysis under homogeneous contamination is presented in the Appendix.

%%%%%%%%%%%%%%%%%%%%%%%%%%%%%%%%%%%%%%%%%%%%%%%%%%%%%%%%%%%%%%%%%%%%%%%%%%%%%%%%%%%%%%%%%%%%%%%%%%%%%%%%%%%%%%%%%%%%%%%%%%%%

\section{Asymptotic Properties}\label{sec:asymp}
We discuss the asymptotic properties of the $\varepsilon$DP-DR estimator. For the other proposed estimators, we obtain similar results with small changes. The asymptotic properties can be obtained in a manner similar to that described in \cite{hoshino2007doubly}. 
Assume that the PS and OR models are regular and are estimated consistently if the models are correctly specified.
Furthermore, the contamination ratio $\varepsilon_1$ is known. Note that when the contamination ratio is consistently estimated simultaneously with the OR model by \cite{kanamori2015robust}, we can replace $\beta$ with $(\varepsilon_1,\beta^T)^T$ in the following discussion.

We write \eqref{eq:eDPDReq} as $\frac{1}{n}\sum_{i=1}^n\psi_i(\mu;\hat\alpha,\hat\beta)$, and let $\frac{1}{n}\sum_{i=1}^n s^{PS}_i(\alpha)=0$ and $\frac{1}{n}\sum_{i=1}^n s^{OR}_i(\beta)=0$ be the estimating equations for the PS and OR models, respectively. Let $\lambda = (\mu,\alpha^T,\beta^T)^T$ be the parameter vector, and let the full estimating equation be defined as
\begin{align}\label{eq:fullee}
    \sum_{i=1}^nS_i(\lambda)=\sum_{i=1}^n\left(\begin{array}{c} \psi_i(\mu;\alpha,\beta) \\ s^{PS}_i(\alpha) \\ s^{OR}_i(\beta) \end{array}\right) = \mathbf{0}.
\end{align}
Let $\lambda^* = (\mu^*,\alpha^{*T},\beta^{*T})^T$ be a root of \eqref{eq:fullee} in population. Note that, in this section, $*$ does not necessarily mean that the model is correctly specified. With the results presented in \citet[][Chap.5]{van2000asymptotic}, the following theorem holds under some regularity conditions.

\begin{theorem} \label{thm:asymp}
Under the regularity conditions presented in the Appendix, the following asymptotic properties hold:
% \begin{spacing}{1}
\begin{align}
    \hat\lambda &\overset{p}{\rightarrow} \lambda^*,\\
    \sqrt{n}(\hat\lambda-\lambda^*)
        &\overset{d}{\rightarrow}\mathcal{N}\left(
            \mathbf{0},\mathbf{V}^{\tilde{g}}(\lambda^*) 
        \right),
    % \mathbf{V}^{\tilde{g}}(\lambda^*) 
    %     &=~ \mathbf{J}^{\tilde{g}}(\lambda^*)^{-1}\mathbf{K}^{\tilde{g}}(\lambda^*)\{\mathbf{J}^{\tilde{g}}(\lambda^*)^T\}^{-1}, \\
    % \mathbf{J}^{\tilde{g}}(\lambda^*) 
    %     &=~ \mathbb{E}_{\tilde{g}}\left[\partial S_i(\lambda^*)/\partial\lambda^T\right], \\
    % \mathbf{K}^{\tilde{g}}(\lambda^*) 
    %     &=~ \mathbb{E}_{\tilde{g}}\left[S_i(\lambda^*)S_i(\lambda^*)^T\right],
\end{align}
where $\mathbf{V}^{\tilde{g}}(\lambda^*) = \mathbf{J}^{\tilde{g}}(\lambda^*)^{-1}\mathbf{K}^{\tilde{g}}(\lambda^*)\{\mathbf{J}^{\tilde{g}}(\lambda^*)^T\}^{-1}$, $\mathbf{J}^{\tilde{g}}(\lambda^*) = \mathbb{E}_{\tilde{g}}\left[\partial S_i(\lambda^*)/\partial\lambda^T\right]$, and $\mathbf{K}^{\tilde{g}}(\lambda^*) = \mathbb{E}_{\tilde{g}}\left[S_i(\lambda^*)S_i(\lambda^*)^T\right]$.
% \end{spacing}
\end{theorem}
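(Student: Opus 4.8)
The plan is to regard \eqref{eq:fullee} as a single joint Z-estimation problem in the stacked parameter $\lambda=(\mu,\alpha^T,\beta^T)^T$ and to invoke the standard M-estimator limit theory of \citet[][Chap.~5]{van2000asymptotic}. A convenient feature of this formulation is that the plugged-in nuisances $\hat\alpha,\hat\beta$ require no separate treatment: they are coordinates of $\lambda$, so the usual ``estimated nuisance'' corrections are absorbed into the joint asymptotics, and the same remark applies to $\hat\varepsilon_1$ when the contamination ratio is estimated simultaneously (replacing $\beta$ by $(\varepsilon_1,\beta^T)^T$ as noted before the theorem).

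\textbf{Consistency.} First I would verify $\hat\lambda\rightarrow\lambda^*$ in probability by the classical route: (i) \emph{identifiability} --- $\lambda^*$ is the unique, well-separated zero of $\Psi(\lambda):=\mathbb{E}_{\tilde g}[S_i(\lambda)]$ on the parameter space, which is among the Appendix conditions; (ii) \emph{uniform convergence} --- $\lambda\mapsto S_i(\lambda)$ is continuous and admits an integrable envelope on a neighborhood of $\lambda^*$, so $\{S_i(\lambda)\}$ is Glivenko--Cantelli and $\sup_\lambda\|\tfrac1n\sum_i S_i(\lambda)-\Psi(\lambda)\|\rightarrow 0$ in probability; (iii) conclude via the zero-of-an-estimating-equation consistency theorem. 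The domination in (ii) is where the structure of the problem matters: under positivity and a PS model bounded away from $0$ and $1$, $1/\pi(X;\alpha)$ is bounded; when $h$ is Gaussian the weight $h(y;\mu)^\gamma$ and the product $h(y;\mu)^\gamma(y-\mu)$ are bounded in $y$ uniformly over $\mu$ near $\mu^*$; the conditional expectation $\mathbb{E}_{\hat q}[h(Y;\mu)^\gamma(Y-\mu)\mid T=1,X]$ is a smooth, bounded function of $(\mu,\beta)$; and the PS/OR scores are dominated by the assumed regularity of those parametric models.

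\textbf{Asymptotic normality.} I would then expand $\mathbf{0}=\tfrac1n\sum_i S_i(\hat\lambda)$ about $\lambda^*$ row by row,
\begin{align}
    \mathbf{0}=\frac1n\sum_{i=1}^n S_i(\lambda^*)+\left(\frac1n\sum_{i=1}^n\frac{\partial S_i(\bar\lambda)}{\partial\lambda^T}\right)(\hat\lambda-\lambda^*),
\end{align}
with $\bar\lambda$ on the segment joining $\hat\lambda$ and $\lambda^*$ (different rows may use different mean values, which is immaterial). Since $\mathbb{E}_{\tilde g}[S_i(\lambda^*)]=\mathbf{0}$ by the definition of $\lambda^*$, the multivariate CLT gives $\sqrt n\,\tfrac1n\sum_i S_i(\lambda^*)\rightarrow\mathcal{N}(\mathbf{0},\mathbf{K}^{\tilde g}(\lambda^*))$, the covariance being finite by the same domination. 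Consistency forces $\bar\lambda\rightarrow\lambda^*$ in probability, and a ULLN for the derivative class $\{\partial S_i(\lambda)/\partial\lambda^T\}$ (again Glivenko--Cantelli by domination) together with continuity of $\lambda\mapsto\mathbb{E}_{\tilde g}[\partial S_i(\lambda)/\partial\lambda^T]$ yields $\tfrac1n\sum_i\partial S_i(\bar\lambda)/\partial\lambda^T\rightarrow\mathbf{J}^{\tilde g}(\lambda^*)$ in probability. Moreover $\mathbf{J}^{\tilde g}(\lambda^*)$ is block lower-triangular, since $s^{PS}_i$ depends only on $\alpha$ and $s^{OR}_i$ only on $\beta$; hence it is invertible as soon as the three diagonal blocks $\mathbb{E}_{\tilde g}[\partial\psi_i/\partial\mu]$, $\mathbb{E}_{\tilde g}[\partial s^{PS}_i/\partial\alpha^T]$, $\mathbb{E}_{\tilde g}[\partial s^{OR}_i/\partial\beta^T]$ are nonsingular (a regularity condition). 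Solving for $\sqrt n(\hat\lambda-\lambda^*)$ and applying Slutsky's theorem delivers the limit with sandwich covariance $\mathbf{V}^{\tilde g}(\lambda^*)=\mathbf{J}^{\tilde g}(\lambda^*)^{-1}\mathbf{K}^{\tilde g}(\lambda^*)\{\mathbf{J}^{\tilde g}(\lambda^*)^T\}^{-1}$.

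\textbf{Main obstacle.} I expect the delicate part to be the uniform-convergence machinery --- establishing that $\{S_i(\lambda)\}$ and its derivative class are Glivenko--Cantelli near $\lambda^*$ --- because that is precisely what controls the random mean value $\bar\lambda$ in the expansion. The needed domination is not automatic: it rests on positivity (to bound $1/\pi$), on the boundedness in $y$ of the density-power terms $h(y;\mu)^\gamma(y-\mu)$ and of the OR-model conditional expectation viewed as functions of $(\mu,\beta)$, and on the smoothness and domination assumed for the parametric PS and OR scores (and $\hat\varepsilon_1$). Once these envelope and continuity requirements are collected into the Appendix's regularity list, the remainder is the routine Z-estimation argument above.
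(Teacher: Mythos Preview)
Your proposal is correct and follows essentially the same approach as the paper: the paper does not give a detailed proof of this theorem but simply invokes the standard M-estimation limit theory of \citet[][Chap.~5]{van2000asymptotic} and lists the regularity conditions (a)--(e) in the Appendix, and your sketch is a faithful elaboration of exactly that argument (consistency via uniform convergence of the empirical criterion, normality via Taylor expansion + CLT + Slutsky). One small slip: with the ordering $\lambda=(\mu,\alpha^T,\beta^T)^T$ and the block structure you describe, $\mathbf{J}^{\tilde g}(\lambda^*)$ is block \emph{upper}-triangular (the nonzero off-diagonal blocks sit in the first row), not lower-triangular --- this does not affect your invertibility argument, which only uses nonsingularity of the diagonal blocks.
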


By using this and applying the results presented in Section \ref{sec:DP-DR}, we find that the limit $\mu^*$ is in the neighborhood of $\mu^{(1)}$.
\begin{theorem}\label{thm:consist}
Let $\lambda^{**}=(\mu^{(1)},\alpha^{*T},\beta^{*T})^T$ and assume that $\mathbf{J}^{\tilde{g}}_{11}(\lambda)$ is nonzero within the interval $[\lambda^*, \lambda^{**}]$.
Under Assumption \ref{as:Outlier} and homogeneous contamination, if either the PS or the OR model is correct, it then holds that
\begin{align}
    \mu^* = \mu^{(1)} + \mathcal{O}(\nu_1(\phi)),
\end{align}
where $\phi(\cdot)=\varepsilon_1$ (constant) in the PS-correct case and $\phi(\cdot)=\varepsilon_1P(T=1|\cdot)/\pi(\cdot;\alpha)$ in the OR-correct case.
\end{theorem}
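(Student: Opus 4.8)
The plan is to combine the asymptotic identification result of Theorem~\ref{thm:asymp} (which tells us $\hat\lambda\overset{p}{\to}\lambda^*$, so that $\mu^*$ is pinned down as the population root of the first coordinate of \eqref{eq:fullee}) with the bias computations of Theorem~\ref{thm:dpdr2} and Corollary~\ref{cor:dpdr}. The key observation is that, by definition of $\lambda^*$, the first component of \eqref{eq:fullee} is satisfied in population at $\mu=\mu^*$, i.e. $\mathbb{E}_{\tilde g}[\psi_i(\mu^*;\alpha^*,\beta^*)]=0$, whereas the bias results say that evaluating the same estimating function at the \emph{true} target $\mu^{(1)}$ (with $\alpha^*,\beta^*$ as given there) gives something that is not exactly zero but is $\mathcal{O}(\nu_1(\phi))$ under homogeneous contamination and the stated model-correctness. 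So we are comparing the root of a function with a point where the function is small, and we want to conclude the root is close to that point.

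\medskip\noindent\textbf{Steps.} First I would write $\Phi(\mu) := \mathbf{J}^{\tilde g}_{11}$-style notation aside, set $\Phi(\mu):=\mathbb{E}_{\tilde g}[\psi_i(\mu;\alpha^*,\beta^*)]$ as a scalar function of $\mu$ (holding the nuisance parameters at their population values $\alpha^*,\beta^*$). Step~1: note $\Phi(\mu^*)=0$ by construction. Step~2: invoke Theorem~\ref{thm:dpdr2} in the PS-correct case, or Corollary~\ref{cor:dpdr} in the OR-correct case, together with homogeneous contamination, to get $\Phi(\mu^{(1)}) = \nu_1(\phi) + (\text{terms that vanish under homogeneity})= \mathcal{O}(\nu_1(\phi))$, with $\phi$ as specified in each case (the homogeneous dominant term $B_1$ and the $(\overline\varepsilon_1-\varepsilon_1(X))$ term both vanish, leaving only the $\nu_1$ remainder). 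Step~3: apply the mean value theorem to $\Phi$ on the interval $[\mu^*,\mu^{(1)}]$: there is $\bar\mu$ in that interval with $\Phi(\mu^{(1)})-\Phi(\mu^*) = \Phi'(\bar\mu)(\mu^{(1)}-\mu^*)$. Since $\Phi'(\bar\mu)=\mathbf{J}^{\tilde g}_{11}(\bar\mu)$ up to the nuisance coordinates being held fixed, and this is assumed nonzero on $[\lambda^*,\lambda^{**}]$, we may divide, obtaining $\mu^{(1)}-\mu^* = \Phi(\mu^{(1)})/\mathbf{J}^{\tilde g}_{11}(\bar\mu) = \mathcal{O}(\nu_1(\phi))$, which rearranges to the claim $\mu^* = \mu^{(1)} + \mathcal{O}(\nu_1(\phi))$.

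\medskip\noindent\textbf{Main obstacle.} The delicate point is a clean bookkeeping of which nuisance values enter where. In Theorem~\ref{thm:dpdr2}/Corollary~\ref{cor:dpdr} the estimating function is evaluated with the \emph{given} correct model (true PS, or true OR) and the other model arbitrary; in Theorem~\ref{thm:consist} the same $\alpha^*,\beta^*$ must be used consistently so that $\Phi$ is literally the first coordinate of the population version of \eqref{eq:fullee} and $\Phi(\mu^*)=0$ holds by definition of $\lambda^*$. One has to check that ``$\mathbf{J}^{\tilde g}_{11}(\lambda)$ nonzero on $[\lambda^*,\lambda^{**}]$'' is exactly the derivative condition needed for the MVT step — here the interval notation $[\lambda^*,\lambda^{**}]$ should be read as the segment along which only $\mu$ varies between $\mu^*$ and $\mu^{(1)}$ with $\alpha,\beta$ fixed at their population values, which matches the hypothesis as stated. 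A secondary subtlety is that $\nu_1(\phi)$ depends mildly on $\mu$ through $h(y;\mu)^\gamma(y-\mu)$; one fixes this by defining $\nu_1(\phi)$ at $\mu^{(1)}$ as in Assumption~\ref{as:Outlier}, so Step~2 is evaluated precisely at $\mu^{(1)}$ and no such ambiguity arises. Everything else is routine.
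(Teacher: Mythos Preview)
Your proposal is correct and follows essentially the same route as the paper's own proof: both use the mean value theorem/Taylor expansion on the first component of the population estimating equation between $\mu^*$ (where it vanishes by definition) and $\mu^{(1)}$ (where Corollary~\ref{cor:dpdr} under homogeneous contamination gives the value $\nu_1(\phi)$), then divide by the nonvanishing derivative $\mathbf{J}^{\tilde g}_{11}$ at an intermediate point. The paper's version phrases the Taylor step on the full vector $S_i(\lambda)$ and then reduces to the first row using $\lambda^*-\lambda^{**}=(\mu^*-\mu^{(1)},\mathbf{0}^T,\mathbf{0}^T)^T$, whereas you work directly with the scalar map $\Phi(\mu)$; the arguments are equivalent and your bookkeeping remarks about the nuisance parameters and the interval $[\lambda^*,\lambda^{**}]$ are exactly on point.
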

\noindent The proof of Theorem \ref{thm:consist} and further discussions on the asymptotic variance are available in the Appendix.

%%%%%%%%%%%%%%%%%%%%%%%%%%%%%%%%%%%%%%%%%%%%%%%%%%%%%%%%%%%%%%%%%%%%%%%%%%%%%%%%%%%%%%%%%%%%%%%%%%%%%%%%%%%%%%%%%%%%%%%%%%%%

\section{Monte-Carlo Simulation}\label{sec:monte}
We conduct Monte-Carlo simulations to evaluate the performance of the proposed estimators. Our methods are compared with the naive IPW and DR estimators and some existing outlier-resistant methods (\citeauthor{firpo2007efficient}, \citeyear{firpo2007efficient}; \citeauthor{zhang2012causal}, \citeyear{zhang2012causal}; \citeauthor{diaz2017efficient}, \citeyear{diaz2017efficient}; \citeauthor*{sued2020robust}, \citeyear{sued2020robust}).
Since these methods focus on the median of the potential outcome, they are resistant to outliers at a certain level; but the median-based methods are not so resistant to heavy contamination. To the best of our knowledge, no method other than the proposed method has more outlier resistance than the median. Firpo's IPW estimator \citep{firpo2007efficient} is defined as 
\begin{align}
    \hat{\mu}_{\mathrm{Firpo}} = \mathrm{arg}\min_{\mu}\sum_{i=1}^n\frac{T_i}{\pi(X_i;\hat{\alpha})}(Y_i-\mu)(0.5 - \mathbb{I} (Y_i \le \mu)),
\end{align}
where the function $\mathbb{I}$ is an indicator function. 
Zhang's IPW median \citep{zhang2012causal} is based on the IPW-empirical distribution. 
% \begin{align}
%     \hat{F}_{\mathrm{IPW}}(y) = \left.\left(\sum_{i=1}^{n}\frac{T_i \mathbb{I} (Y_i \le y)}{\pi(X_i;\hat{\alpha})}\right)\right/ \left(\sum_{i=1}^{n}\frac{T_i}{\pi(X_i;\hat{\alpha})}\right),
% \end{align}
% and the median is estimated as $y_0$ such that $\hat{F}_{IPW}(y_0)=0.5$. 
Firpo's IPW and Zhang's IPW are almost equivalent except for a slight difference in their computation.
Zhang's and Sued's DR methods (\citeauthor{zhang2012causal}, \citeyear{zhang2012causal}; \citeauthor*{sued2020robust}, \citeyear{sued2020robust}) estimate the empirical distribution in a doubly robust way. They incorporate an IPW-type estimator into the first term. The remaining term of Zhang's DR is based on the Gaussian cumulative distribution function of $Y$ given $X$. In contrast, Sued's DR constructs the remaining term in a nonparametric manner. Diaz's DR median \citep{diaz2017efficient} is a different approach; it employs the targeted maximum likelihood estimator (TMLE) \citep{van2006targeted}. We implemented our methods, Zhang's IPW/DR, and Sued's DR in R. For Firpo's IPW and TMLE, we used the {\it causalquantile} package (https://github.com/idiazst/causalquantile; Updated on 31 Aug 2017).

\subsection{Numerical Algorithm for the Proposed Methods}
Since the proposed estimating equations cannot be solved explicitly, we develop an iterative algorithm. Various algorithms are available, but we propose a standard algorithm for M-estimators \citep{huber2004robust,hampel2011robust}. Detailed algorithm is available in the Appendix. Hereafter, we suppose $h$ and $q$ are Gaussian. We also provide explicit updating formulae in this case. Note that some additional parameters of $h$ should be estimated in a roughly unbiased and outlier-resistant way.

\subsection{Simulation Model}
We simulated random observations based on a simple causal setting. The confounders $(X_1,X_2)$ were independently drawn from a Gaussian or uniform distribution with mean zero and unit variance. The treatment $T$ was assigned along with the conditional probability $P(T=1|X_1,X_2)$ that was defined as a sigmoid function of $0.8 X_1+0.2 X_2$. The potential outcomes $(Y^{(1)},Y^{(0)})$ were generated according to a linear function of $(X_1,X_2)$ with Gaussian error: $Y^{(1)} = \mu^{(1)} + 1.2 X_1 + 0.3 X_2 + e$ and $Y^{(0)} = \mu^{(0)} + 1.2 X_1 + 0.3 X_2 + e$. $\mu^{(1)}$ and $\mu^{(0)}$ were set to 3 and 0, respectively. The standard deviation (SD) of $e$ was set to $\sqrt{0.72}$; then, $\mathrm{SD}[Y^{(1)}]=\mathrm{SD}[Y^{(0)}]=1.5$. When the confounders were not Gaussian, the potential outcomes were not Gaussian. The observed outcome $Y$ was defined as $Y=TY^{(1)}+(1-T)Y^{(1)}$ under no contamination. Outliers were drawn from $\mathcal{N}(\mu^{(t)}+10\sigma^{(t)},1)$, with $\sigma^{(t)}=\mathrm{SD}[Y^{(t)}]=1.5$. For the homogeneous contamination settings, the contamination ratio was set to be a constant $\varepsilon_t$. For the heterogeneous contamination settings, the contamination ratio was set to be $1.5\varepsilon_t$ if $X_1+X_2 \le 0$ and $0.5\varepsilon_t$ if $X_1+X_2 > 0$. The average contamination ratio is set to $\varepsilon_t \in \{0,0.05,0.1,0.2\}$. Then, the observations of $Y$ were randomly replaced with outliers according to the contamination ratio. The sample size was fixed to $n=100$ throughout the Monte Carlo simulations. 
Furthermore, we generated datasets in which the outcome follows a symmetric and heavy-tailed distribution. We drew the error term of $Y^{(t)}$ from the standard Cauchy distribution instead of inserting outliers.

\subsection{Results}
First, we performed a comparative study. The potential mean $\mu^{(1)}$ was estimated using the proposed and comparative methods. In this experiment, we used all settings illustrated in the previous section.
The propensity score was estimated by logistic regression. The parametric OR was conducted in two ways: Gaussian MLE with non-outliers or unnormalized Gaussian modeling (the tuning parameter was set to 0.5) \citep{kanamori2015robust}. For the DR estimators, we investigated three patterns of model misspecification: PS-correct/OR-correct, PS-correct/OR-incorrect, and PS-incorrect/OR-correct. For the model-correct case, we included an intercept and $(X_1,X_2)$ as covariates. For the model-incorrect case, we included only an intercept and $X_2$.
We performed 10,000 simulations for every setting and method. Tables \ref{tab:exp1ipw} and \ref{tab:exp1dr} show the results of the comparative study when the covariates were Gaussian and the OR for the DR-type estimators was the Gaussian MLE with non-outliers. The estimation error was measured by the root mean square error (RMSE). The mean and SD of all estimates, the mean computation time, and the results for the other settings are provided in the Appendix.
In Table \ref{tab:exp1ipw}, the naive IPW estimator had a significantly larger RMSE under contamination. Both the median-based methods and DP-IPW dramatically reduced the RMSE. As the contamination ratio increased, the RMSE increased. The RMSE tended to be larger for heterogeneous contamination than for homogeneous contamination. 
When the optimal $\gamma$ was chosen, the proposed method outperformed the comparative methods and had the smallest RMSE for all settings.
Looking at Table \ref{tab:exp1dr}, the results for the DR-type estimators were similar to those for the IPW estimators. The proposed method with a proper $\gamma$ outperformed the comparative methods and had the smallest RMSE in all settings. DP-DR and $\varepsilon$DP-DR performed similarly, although $\varepsilon$DP-DR was slightly superior in many settings. 
Among the median-based methods, TMLE performed better, but it took much more time than the other methods, including the proposed methods, and occasionally ($<1\%$) failed to converge.

\begin{spacing}{1}
% latex table generated in R 4.0.3 by xtable 1.8-4 package
% Thu Jun 03 09:46:16 2021
\begin{landscape}
\begin{table}[ht]
\centering
\footnotesize
\begin{tabular}{llccccccc}
  \hline
  \multicolumn{2}{l}{} & {\bf No contam.} & \multicolumn{3}{c}{\bf Homogeneous} & \multicolumn{3}{c}{\bf Heterogeneous} \\ 
  &  &  & $\varepsilon=0.05$ & $\varepsilon=0.10$ & $\varepsilon=0.20$ & $\varepsilon=0.05$ & $\varepsilon=0.10$ & $\varepsilon=0.20$ \\ 
    \hline
IPW & Naive & 0.222 & 0.957 & 1.683 & 3.153 & 0.993 & 1.752 & 3.253 \\ 
   & median (Firpo) & 0.257 & 0.294 & 0.367 & 0.649 & 0.306 & 0.409 & 0.769 \\ 
   & median (Zhang-IPW) & 0.257 & 0.294 & 0.367 & 0.649 & 0.306 & 0.409 & 0.769 \vspace{2mm}\\ 
   & DP-IPW ($\gamma=0.1$) & 0.218 & 0.276 & 0.531 & 2.263 & 0.293 & 0.609 & 2.377 \\ 
   & DP-IPW ($\gamma=0.5$) & 0.227 & 0.249 & 0.272 & 0.639 & 0.245 & 0.287 & 0.726 \\ 
   & DP-IPW ($\gamma=1.0$) & 0.261 & 0.271 & 0.275 & 0.413 & 0.262 & 0.281 & 0.498 \\ 
  \hline
\end{tabular}
\caption{Results of the comparative study of the IPW-type estimators. Each figure is RMSE between each estimate and the true value. The covariates $X$ were generated from Gaussian distributions.}
\label{tab:exp1ipw}
\end{table}
\end{landscape}

\begin{landscape}
\begin{table}
\centering
\footnotesize
\begin{tabular}{llccccccc}
  \hline
 &  & {\bf No contam.} & \multicolumn{3}{c}{\bf Homogeneous} & \multicolumn{3}{c}{\bf Heterogeneous} \\ 
 &  &  & $\varepsilon=0.05$ & $\varepsilon=0.10$ & $\varepsilon=0.20$ & $\varepsilon=0.05$ & $\varepsilon=0.10$ & $\varepsilon=0.20$ \\ 
  \hline      
  DR(T/T) & Naive & 0.184 & 0.957 & 1.684 & 3.154 & 0.997 & 1.758 & 3.265 \\ 
   & median (Zhang-DR) & 0.239 & 0.317 & 0.391 & 0.733 & 0.330 & 0.452 & 0.905 \\ 
   & median (Sued) & 0.238 & 0.316 & 0.388 & 0.693 & 0.329 & 0.450 & 0.869 \\ 
   & median (TMLE) & 0.237 & 0.280 & 0.359 & 0.603 & 0.295 & 0.402 & 0.701\vspace{2mm} \\ 
   & DP-DR ($\gamma=0.1$) & 0.183 & 0.302 & 0.564 & 2.262 & 0.318 & 0.649 & 2.394 \\ 
   & DP-DR ($\gamma=0.5$) & 0.202 & 0.285 & 0.326 & 0.697 & 0.274 & 0.349 & 0.834 \\ 
   & DP-DR ($\gamma=1.0$) & 0.240 & 0.288 & 0.307 & 0.524 & 0.287 & 0.336 & 0.669\vspace{2mm} \\ 
   & $\varepsilon$DP-DR ($\gamma=0.1$) & 0.183 & 0.296 & 0.554 & 2.255 & 0.314 & 0.636 & 2.385 \\ 
   & $\varepsilon$DP-DR ($\gamma=0.5$) & 0.202 & 0.264 & 0.302 & 0.669 & 0.271 & 0.323 & 0.793 \\ 
   & $\varepsilon$DP-DR ($\gamma=1.0$) & 0.240 & 0.287 & 0.299 & 0.513 & 0.286 & 0.335 & 0.648\vspace{2mm} \\ 
  DR(T/F) & Naive & 0.237 & 0.963 & 1.686 & 3.156 & 1.001 & 1.758 & 3.262 \\ 
   & median (Zhang-DR) & 0.275 & 0.342 & 0.408 & 0.741 & 0.350 & 0.465 & 0.912 \\ 
   & median (Sued) & 0.275 & 0.342 & 0.407 & 0.699 & 0.350 & 0.464 & 0.872 \\ 
   & median (TMLE) & 0.242 & 0.284 & 0.363 & 0.622 & 0.297 & 0.404 & 0.719\vspace{2mm} \\ 
   & DP-DR ($\gamma=0.1$) & 0.237 & 0.314 & 0.561 & 2.267 & 0.330 & 0.644 & 2.393 \\ 
   & DP-DR ($\gamma=0.5$) & 0.247 & 0.319 & 0.349 & 0.714 & 0.319 & 0.361 & 0.839 \\ 
   & DP-DR ($\gamma=1.0$) & 0.280 & 0.334 & 0.347 & 0.581 & 0.329 & 0.372 & 0.709\vspace{2mm} \\ 
   & $\varepsilon$DP-DR ($\gamma=0.1$) & 0.237 & 0.311 & 0.557 & 2.264 & 0.328 & 0.640 & 2.388 \\ 
   & $\varepsilon$DP-DR ($\gamma=0.5$) & 0.247 & 0.317 & 0.344 & 0.694 & 0.313 & 0.356 & 0.817 \\ 
   & $\varepsilon$DP-DR ($\gamma=1.0$) & 0.280 & 0.333 & 0.338 & 0.551 & 0.327 & 0.369 & 0.708\vspace{2mm} \\ 
  DR(F/T) & Naive & 0.181 & 0.879 & 1.591 & 3.026 & 0.826 & 1.490 & 2.813 \\ 
   & median (Zhang-DR) & 0.237 & 0.263 & 0.316 & 0.503 & 0.269 & 0.337 & 0.548 \\ 
   & median (Sued) & 0.236 & 0.272 & 0.346 & 0.599 & 0.277 & 0.364 & 0.627 \\ 
   & median (TMLE) & 0.234 & 0.260 & 0.309 & 0.478 & 0.265 & 0.328 & 0.522\vspace{2mm} \\ 
   & DP-DR ($\gamma=0.1$) & 0.182 & 0.192 & 0.345 & 2.057 & 0.191 & 0.299 & 1.681 \\ 
   & DP-DR ($\gamma=0.5$) & 0.199 & 0.206 & 0.218 & 0.366 & 0.203 & 0.209 & 0.283 \\ 
   & DP-DR ($\gamma=1.0$) & 0.230 & 0.232 & 0.239 & 0.273 & 0.230 & 0.233 & 0.242\vspace{2mm} \\ 
   & $\varepsilon$DP-DR ($\gamma=0.1$) & 0.182 & 0.193 & 0.381 & 2.207 & 0.194 & 0.335 & 1.839 \\ 
   & $\varepsilon$DP-DR ($\gamma=0.5$) & 0.199 & 0.203 & 0.208 & 0.376 & 0.203 & 0.212 & 0.318 \\ 
   & $\varepsilon$DP-DR ($\gamma=1.0$) & 0.230 & 0.230 & 0.231 & 0.243 & 0.231 & 0.237 & 0.260 \\ 
   \hline
\end{tabular}
\caption{Results of the comparative study of the IPW-type estimators. Each figure is RMSE between each estimate and the true value. The covariates $X$ were generated from Gaussian distributions, and the outcome regression was obtained by the Gaussian MLE using non-outliers. The characters "T" and "F" denote the correct and the incorrect modeling, respectively.
}
\label{tab:exp1dr}
\end{table}
\end{landscape}

\end{spacing}

Table \ref{tab:exp2cauchy} shows the RMSE of each method on the data with Cauchy error. As well as the above experiments, the proposed method performed better than the comparative methods. In this setting, we only used the unnormalized Gaussian modeling for OR for the DR-type estimators. Only in the PS-correct/OR-incorrect case, the median (TMLE) performed slightly better than the proposed method.

% \begin{spacing}{1}
% latex table generated in R 4.0.3 by xtable 1.8-4 package
% Wed Jun 23 15:34:39 2021
\begin{table}[ht]
\centering
\begin{tabular}{llcc}
  \hline
 &  & \multicolumn{2}{l}{\bf Distribution of $X$} \\
 &  & Gaussian & Uniform \\ 
  \hline
IPW(T/-) & Naive & 274.024 & 246.118 \\ 
   & median (Firpo) & 0.414 & 0.438 \\ 
   & median (Zhang-IPW) & 0.414 & 0.438\vspace{2mm}\\ 
   & DP-IPW ($\gamma=0.1$) & 0.443 & 0.425 \\ 
   & DP-IPW ($\gamma=0.5$) & 0.367 & 0.363 \\ 
   & DP-IPW ($\gamma=1.0$) & 0.380 & 0.383\vspace{2mm}\\ 
  DR(T/T) & Naive & 275.447 & 247.011 \\ 
   & median (Zhang-DR) & 0.415 & 0.431 \\ 
   & median (Sued) & 0.408 & 0.430 \\ 
   & median (TMLE) & 0.392 & 0.425\vspace{2mm}\\ 
   & DP-DR ($\gamma=0.1$) & 0.501 & 0.420 \\ 
   & DP-DR ($\gamma=0.5$) & 0.363 & 0.356 \\ 
   & DP-DR ($\gamma=1.0$) & 0.372 & 0.374\vspace{2mm}\\ 
   & $\varepsilon$DP-DR ($\gamma=0.1$) & 0.487 & 0.420 \\ 
   & $\varepsilon$DP-DR ($\gamma=0.5$) & 0.361 & 0.355 \\ 
   & $\varepsilon$DP-DR ($\gamma=1.0$) & 0.370 & 0.374\vspace{2mm}\\ 
  DR(T/F) & Naive & 275.446 & 247.011 \\ 
   & median (Zhang-DR) & 0.456 & 0.443 \\ 
   & median (Sued) & 0.436 & 0.441 \\ 
   & median (TMLE) & 0.394 & 0.427\vspace{2mm}\\ 
   & DP-DR ($\gamma=0.1$) & 0.514 & 0.431 \\ 
   & DP-DR ($\gamma=0.5$) & 0.404 & 0.369 \\ 
   & DP-DR ($\gamma=1.0$) & 0.418 & 0.389\vspace{2mm}\\ 
   & $\varepsilon$DP-DR ($\gamma=0.1$) & 0.503 & 0.430 \\ 
   & $\varepsilon$DP-DR ($\gamma=0.5$) & 0.399 & 0.368 \\ 
   & $\varepsilon$DP-DR ($\gamma=1.0$) & 0.412 & 0.388\vspace{2mm}\\ 
  DR(F/T) & Naive & 263.629 & 177.037 \\ 
   & median (Zhang-DR) & 0.390 & 0.429 \\ 
   & median (Sued) & 0.373 & 0.400 \\ 
   & median (TMLE) & 0.389 & 0.429\vspace{2mm}\\ 
   & DP-DR ($\gamma=0.1$) & 0.390 & 0.401 \\ 
   & DP-DR ($\gamma=0.5$) & 0.358 & 0.376 \\ 
   & DP-DR ($\gamma=1.0$) & 0.364 & 0.393\vspace{2mm}\\ 
   & $\varepsilon$DP-DR ($\gamma=0.1$) & 0.377 & 0.385 \\ 
   & $\varepsilon$DP-DR ($\gamma=0.5$) & 0.328 & 0.338 \\ 
   & $\varepsilon$DP-DR ($\gamma=1.0$) & 0.334 & 0.351 \\ 
   \hline
\end{tabular}
\caption{Results of the comparative study using the heavy-tailed data. Each figure is RMSE between each estimate and the true value. The covariates $X$ were generated from Gaussian or uniform distributions. The OR model for the DR-type estimators were obtained by the unnormalized Gaussian modeling. The characters "T" and "F" denote the correct and the incorrect modeling, respectively.}
\label{tab:exp2cauchy}
\end{table}

% \end{spacing}

Next, we conduct a $\gamma$-sensitivity study. $\mu^{(1)}$ was estimated by the proposed methods with different $\gamma$s. $X$ had a Gaussian distribution, and the contamination ratio varied in $\{0,0.05,0.1,0.2\}$ under homogeneous contamination. For the DR-type estimators, the outcome regression was performed by the Gaussian MLE using nonoutliers.
We simulated 10,000 datasets for every setting and method. Table \ref{tab:exp2} shows the results of the $\gamma$-sensitivity study. As in the comparative study, when the ratio of outliers increased, the bias increased. Larger $\gamma$ resulted in increased variance. When the contamination ratio was small, it was sufficient to use a small $\gamma$ such as $\gamma=0.1$ or $0.2$ to remove the adverse effect of outliers. Even in highly contaminated cases, $\gamma > 1.0$ was not needed. Comparing the estimates of DP-DR and $\varepsilon$DP-DR in the PS-incorrect/OR-correct case, it can be found that the DP-DR estimates were biased especially when $\varepsilon$ was large, and contrarily, the $\varepsilon$DP-DR estimates were almost equal to the true value 3. This result shows that the bias correction by $1-\hat{\varepsilon}$ worked well in our experiments.

As in many other outlier-resistant statistical methods, parameter tuning is challenging. We suggest a possible policy on this issue based on the solution paths of the proposed estimators, which is provided in the Appendix. Looking at the paths, the influence of outliers decreased as $\gamma$ increased, and the paths became stable around the true value after reaching a certain $\gamma$. Thus, we suggest using the smallest $\gamma$ for which the estimate is stable.

\begin{spacing}{1}
% latex table generated in R 4.0.3 by xtable 1.8-4 package
% Thu May 06 09:34:29 2021
\begin{landscape}
\begin{table}[ht]
\centering
\small
\begin{tabular}{cccccccccc}
  \hline
 & PS/OR & $\varepsilon$ & $\gamma=0.0$ & 0.1 & 0.2 & 0.5 & 1.0 & 1.5 & 2.0 \\ 
  \hline
DP-IPW & T/- & 0.00 & 3.004 (0.22) & 2.998 (0.22) & 2.994 (0.22) & 2.986 (0.23) & 2.980 (0.26) & 2.974 (0.30) & 2.970 (0.34) \\ 
   &  & 0.05 & 3.749 (0.59) & 3.030 (0.27) & 2.999 (0.26) & 2.987 (0.25) & 2.978 (0.27) & 2.970 (0.30) & 2.963 (0.33) \\ 
   &  & 0.10 & 4.493 (0.78) & 3.142 (0.51) & 3.015 (0.32) & 2.989 (0.27) & 2.977 (0.27) & 2.969 (0.30) & 2.963 (0.33) \\ 
   &  & 0.20 & 5.983 (1.02) & 4.492 (1.70) & 3.536 (1.39) & 3.052 (0.64) & 2.990 (0.41) & 2.978 (0.39) & 2.971 (0.40)\vspace{2mm}\\ 
  DP-DR & T/T & 0.00 & 2.999 (0.18) & 2.998 (0.18) & 2.997 (0.19) & 2.996 (0.20) & 2.992 (0.24) & 2.989 (0.28) & 2.985 (0.31) \\ 
   &  & 0.05 & 3.745 (0.60) & 3.029 (0.30) & 3.002 (0.27) & 2.997 (0.29) & 2.991 (0.29) & 2.985 (0.31) & 2.980 (0.34) \\ 
   &  & 0.10 & 4.489 (0.79) & 3.140 (0.55) & 3.017 (0.36) & 3.000 (0.33) & 2.992 (0.31) & 2.986 (0.32) & 2.981 (0.33) \\ 
   &  & 0.20 & 5.979 (1.04) & 4.465 (1.72) & 3.532 (1.41) & 3.060 (0.69) & 3.009 (0.52) & 2.999 (0.51) & 2.994 (0.51)\vspace{2mm}\\ 
   & T/F & 0.00 & 3.004 (0.24) & 2.998 (0.24) & 2.994 (0.24) & 2.986 (0.25) & 2.979 (0.28) & 2.974 (0.32) & 2.968 (0.36) \\ 
   &  & 0.05 & 3.750 (0.60) & 3.033 (0.31) & 3.001 (0.29) & 2.989 (0.32) & 2.978 (0.33) & 2.970 (0.36) & 2.963 (0.39) \\ 
   &  & 0.10 & 4.494 (0.78) & 3.150 (0.54) & 3.020 (0.37) & 2.992 (0.35) & 2.979 (0.35) & 2.970 (0.37) & 2.963 (0.39) \\ 
   &  & 0.20 & 5.984 (1.03) & 4.490 (1.71) & 3.546 (1.41) & 3.059 (0.71) & 3.001 (0.58) & 2.985 (0.55) & 2.975 (0.54)\vspace{2mm}\\ 
   & F/T & 0.00 & 2.999 (0.18) & 2.999 (0.18) & 2.999 (0.18) & 3.001 (0.20) & 3.005 (0.23) & 3.010 (0.26) & 3.014 (0.29) \\ 
   &  & 0.05 & 3.725 (0.50) & 2.997 (0.19) & 2.976 (0.19) & 2.975 (0.20) & 2.978 (0.23) & 2.982 (0.26) & 2.986 (0.29) \\ 
   &  & 0.10 & 4.451 (0.65) & 3.051 (0.34) & 2.956 (0.21) & 2.950 (0.21) & 2.953 (0.23) & 2.956 (0.26) & 2.960 (0.28) \\ 
   &  & 0.20 & 5.902 (0.86) & 4.326 (1.57) & 3.301 (1.15) & 2.907 (0.35) & 2.895 (0.25) & 2.897 (0.26) & 2.900 (0.28)\vspace{2mm}\\ 
  $\varepsilon$DP-DR & T/T & 0.00 & 2.999 (0.18) & 2.998 (0.18) & 2.997 (0.19) & 2.996 (0.20) & 2.992 (0.24) & 2.989 (0.28) & 2.985 (0.31) \\ 
   &  & 0.05 & 3.745 (0.60) & 3.028 (0.29) & 3.002 (0.27) & 2.997 (0.26) & 2.991 (0.29) & 2.985 (0.31) & 2.980 (0.34) \\ 
   &  & 0.10 & 4.489 (0.78) & 3.138 (0.54) & 3.017 (0.35) & 2.999 (0.30) & 2.991 (0.30) & 2.985 (0.32) & 2.980 (0.33) \\ 
   &  & 0.20 & 5.978 (1.03) & 4.464 (1.72) & 3.531 (1.40) & 3.058 (0.67) & 3.007 (0.51) & 2.998 (0.50) & 2.993 (0.51)\vspace{2mm}\\ 
   & T/F & 0.00 & 3.004 (0.24) & 2.998 (0.24) & 2.994 (0.24) & 2.986 (0.25) & 2.979 (0.28) & 2.974 (0.32) & 2.968 (0.36) \\ 
   &  & 0.05 & 3.750 (0.60) & 3.033 (0.31) & 3.001 (0.29) & 2.989 (0.32) & 2.978 (0.33) & 2.970 (0.36) & 2.963 (0.39) \\ 
   &  & 0.10 & 4.493 (0.78) & 3.149 (0.54) & 3.020 (0.36) & 2.992 (0.34) & 2.978 (0.34) & 2.970 (0.37) & 2.963 (0.39) \\ 
   &  & 0.20 & 5.983 (1.02) & 4.489 (1.71) & 3.543 (1.40) & 3.057 (0.69) & 2.998 (0.55) & 2.984 (0.54) & 2.976 (0.54)\vspace{2mm}\\ 
   & F/T & 0.00 & 2.999 (0.18) & 2.999 (0.18) & 2.999 (0.18) & 3.001 (0.20) & 3.005 (0.23) & 3.010 (0.26) & 3.014 (0.29) \\ 
   &  & 0.05 & 3.746 (0.50) & 3.020 (0.19) & 2.998 (0.19) & 2.998 (0.20) & 3.001 (0.23) & 3.005 (0.26) & 3.009 (0.29) \\ 
   &  & 0.10 & 4.493 (0.66) & 3.108 (0.37) & 3.004 (0.20) & 2.998 (0.21) & 3.001 (0.23) & 3.004 (0.26) & 3.007 (0.28) \\ 
   &  & 0.20 & 5.986 (0.87) & 4.541 (1.58) & 3.486 (1.24) & 3.020 (0.38) & 3.003 (0.24) & 3.005 (0.25) & 3.008 (0.27) \\ 
   \hline
\end{tabular}
\caption{Results of $\gamma$-sensitivity study. Each figure displays the mean (SD) of 10,000 simulations for each setting. In the second column, "T" and "F" denote the correct and the incorrect modeling, respectively.}
\label{tab:exp2}
\end{table}
\end{landscape}
\end{spacing}

%%%%%%%%%%%%%%%%%%%%%%%%%%%%%%%%%%%%%%%%%%%%%%%%%%%%%%%%%%%%%%%%%%%%%%%%%%%%%%%%%%%%%%%%%%%%%%%%%%%%%%%%%%%%%%%%%%%%%%%%%%%%

\section{Real Data Analysis}\label{sec:real}
In this section, we demonstrate the estimation of the ATE on a real dataset. We use the data of the National Health and Nutrition Examination Survey Data I Epidemiologic Follow-up Study (NHEFS). The NHEFS is a national longitudinal study that was performed by U.S. public agencies. We use the processed dataset available online  \citep[][https://www.hsph.harvard.edu/miguel-hernan/causal-inference-book/]{hernan2020causal}. The NHEFS dataset contains 1,566 observations of smokers who were enrolled in the study in 1971--75. By the follow-up visit in 1982, 403 (25.7\%) participants had quit smoking. The study goal was to evaluate the treatment effect of smoking cessation ($T=1$) on weight gain ($Y$). Other than the treatment and outcome, several baseline variables were collected, including sex, age, race, education level, intensity and duration of smoking, physical activity in daily life, recreational exercise, and baseline weight. We used all of them to control for confounding in a similar manner to that of \cite{hernan2020causal}. We included linear and quadratic terms for all continuous covariates (age, intensity and duration of smoking, and baseline weight) and dummy terms for the discrete covariates. The propensity score was estimated by logistic regression, and outcome regression was performed by unnormalized Gaussian modeling (the tuning parameter was set to 0.2). The original dataset does not contain obvious outliers; then, we randomly replaced 10\% observations with outliers drawn from $\mathcal{N}(100,5^2)$. Then, we estimated $\mu^{(1)}$, $\mu^{(0)}$ and the ATE by the same methods in the Monte Carlo simulations. This process was repeated 10,000 times, and we summarized the results in Table \ref{tab:nhefs}. For reference, we estimated every target quantity using the naive IPW/DR using the original data.

For the IPW-type estimators, the median-based methods gave larger estimates of $\mu^{(1)}$ and $\mu^{(0)}$ than those in the case of IPW (no outliers). In particular, $\mu^{(0)}$ was estimated to be much larger. As a result, when using the median-based methods, the ATE was estimated to be smaller than that in the case of IPW (no outliers). By contrast, DP-IPW overestimated $\mu^{(1)}$ with $\gamma=0.05$ and underestimated $\mu^{(1)}$ with $\gamma \ge 0.10$. It overestimated $\mu^{(0)}$ compared to the case of IPW (no outliers), and this tendency was strengthened by increasing $\gamma$. However, because the extent of overestimation of $\mu^{(0)}$ was smaller than that in the case of median-based methods, the estimate of the ATE by DP-IPW was closer to that obtained using IPW (no outliers) than by using the median-based methods. The DR-type estimators showed similar results. The median-based methods overestimated $\mu^{(1)}$ and $\mu^{(0)}$. DP-DR and $\varepsilon$DP-DR underestimated $\mu^{(1)}$ and overestimated $\mu^{(0)}$. The ATE was estimated better by DP-DR and $\varepsilon$DP-DR than by the median-based methods. DP-DR and $\varepsilon$DP-DR had the same tendency of estimation bias and $\gamma$; a larger $\gamma$ value increased the bias.

%%%%%%%%%%%%%%%%%%%%%%%%%%%%%%%%%%%%%%%%%%%%%%%%%%%%%%%%%%%%%%%%%%%%%%%%%%%%%%%%%%%%%%%%%%%%%%%%%%%%%%%%%%%%%%%%%%%%%%%%%%%%

\begin{spacing}{1}
% Table of comparative
% \input{table/table_RMSE_NORM_IDEAL}

% \input{table/table_MEANSD_part}

% \input{table/table_RMSE_exp2_cauchy}

% Table of sensitivity
% \input{table/table_gamma}

% \begin{figure}[htb]
% \centering
% % \includegraphics[width=0.95\textwidth]{fig/exp1_line_1000.png}
% \includegraphics[width=134mm]{fig/exp1_line_100.pdf}
% \caption{Solution paths of the first 100 simulations. The x-axis represents the tuning parameter $\gamma$ and the y-axis, the estimates of $\mu^{(1)}$.}
% \label{fig:exp2}
% \end{figure}

% Table 4
% latex table generated in R 4.0.3 by xtable 1.8-4 package
% Wed Jun 02 10:40:33 2021
\newpage
\begin{table}[ht]
\centering
\begin{tabular}{lccc}
  \hline
 & \multicolumn{3}{c}{{\bf Target Quantities}}  \\ 
 & \multicolumn{1}{c}{$\mu^{(1)}$} & \multicolumn{1}{c}{$\mu^{(0)}$} & \multicolumn{1}{c}{ATE} \\ 
  \hline
IPW (no outliers) & 5.221 (-) & 1.780 (-) & 3.441 (-)\vspace{2mm} \\ 
  IPW & 14.718 (1.57) & 11.607 (0.87) & 3.111 (1.78) \\ 
  median (Firpo) & 5.439 (0.21) & 2.753 (0.10) & 2.686 (0.24) \\ 
  median (Zhang-IPW) & 5.439 (0.21) & 2.753 (0.10) & 2.686 (0.24)\vspace{2mm} \\ 
  DP-IPW ($\gamma=0.05$) & 5.597 (0.30) & 1.851 (0.07) & 3.746 (0.31) \\ 
  DP-IPW ($\gamma=0.10$) & 5.157 (0.15) & 1.819 (0.07) & 3.338 (0.17) \\ 
  DP-IPW ($\gamma=0.20$) & 5.089 (0.15) & 1.875 (0.06) & 3.215 (0.16) \\ 
  DP-IPW ($\gamma=0.50$) & 4.949 (0.15) & 2.007 (0.06) & 2.941 (0.16)\vspace{2mm} \\ 
 \hline\hline  
  DR (no outliers) & 5.136 (-) & 1.772 (-) & 3.364 (-)\vspace{2mm} \\ 
  DR & 14.574 (1.57) & 11.589 (0.90) & 2.985 (1.81) \\ 
  median (Zhang-DR) & 5.352 (0.20) & 2.743 (0.10) & 2.609 (0.22) \\ 
  median (Sued) & 5.353 (0.20) & 2.744 (0.10) & 2.609 (0.23) \\ 
  median (TMLE) & 5.363 (0.21) & 2.739 (0.10) & 2.624 (0.23)\vspace{2mm} \\ 
  DP-DR ($\gamma=0.05$) & 5.478 (0.27) & 1.842 (0.07) & 3.636 (0.28) \\ 
  DP-DR ($\gamma=0.10$) & 5.057 (0.16) & 1.810 (0.07) & 3.248 (0.17) \\ 
  DP-DR ($\gamma=0.20$) & 4.983 (0.16) & 1.865 (0.06) & 3.119 (0.17) \\ 
  DP-DR ($\gamma=0.50$) & 4.834 (0.16) & 1.997 (0.06) & 2.837 (0.17)\vspace{2mm} \\ 
  $\varepsilon$DP-DR ($\gamma=0.05$) & 5.574 (0.29) & 1.851 (0.07) & 3.723 (0.30) \\ 
  $\varepsilon$DP-DR ($\gamma=0.10$) & 5.148 (0.15) & 1.819 (0.07) & 3.330 (0.17) \\ 
  $\varepsilon$DP-DR ($\gamma=0.20$) & 5.080 (0.15) & 1.874 (0.06) & 3.206 (0.17) \\ 
  $\varepsilon$DP-DR ($\gamma=0.50$) & 4.937 (0.15) & 2.007 (0.06) & 2.930 (0.16) \\ 
   \hline
\end{tabular}
\caption{Results of the NHEFS data analysis. Mean and SD are computed on 2,000 bootstrap samples.}
\label{tab:nhefs}
\end{table}
\newpage

\end{spacing}
 
%%%%%%%%%%%%%%%%%%%%%%%%%%%%%%%%%%%%%%%%%%%%%%%%%%%%%%%%%%%%%%%%%%%%%%%%%%%%%%%%%%%%%%%%%%%%%%%%%%%%%%%%%%%%%%%%%%%%%%%%%%%%

\section*{Acknowledgements}

This work was partially supported by JSPS KAKENHI Grant Number 17K00065.
\par

%%%%%%%%%%%%%%%%%%%%%%%%%%%%%%%%%%%%%%%%%%%%%%%%%%%%%%%%%%%%%%%%%%%%%%%%%%%%%%%%%%%%%%%%%%%%%%%%%%%%%%%%%%%%%%%%%%%%%%%%%%%%

\newpage
\bibliographystyle{unsrtnat}  
\bibliography{references}

\begin{thebibliography}{30}
\providecommand{\natexlab}[1]{#1}
\providecommand{\url}[1]{\texttt{#1}}
\expandafter\ifx\csname urlstyle\endcsname\relax
  \providecommand{\doi}[1]{doi: #1}\else
  \providecommand{\doi}{doi: \begingroup \urlstyle{rm}\Url}\fi

\bibitem[Robins et~al.(1994)Robins, Rotnitzky, and Zhao]{robins1994estimation}
James~M Robins, Andrea Rotnitzky, and Lue~Ping Zhao.
\newblock Estimation of regression coefficients when some regressors are not
  always observed.
\newblock \emph{Journal of the American statistical Association}, 89\penalty0
  (427):\penalty0 846--866, 1994.

\bibitem[Rosenbaum and Rubin(1983)]{rosenbaum1983central}
Paul~R Rosenbaum and Donald~B Rubin.
\newblock The central role of the propensity score in observational studies for
  causal effects.
\newblock \emph{Biometrika}, 70\penalty0 (1):\penalty0 41--55, 1983.

\bibitem[Bang and Robins(2005)]{bang2005doubly}
Heejung Bang and James~M Robins.
\newblock Doubly robust estimation in missing data and causal inference models.
\newblock \emph{Biometrics}, 61\penalty0 (4):\penalty0 962--973, 2005.

\bibitem[Rousseeuw and van Zomeren(1990)]{Rousseeuw1990-wz}
Peter~J Rousseeuw and Bert~C van Zomeren.
\newblock Unmasking multivariate outliers and leverage points.
\newblock \emph{J. Am. Stat. Assoc.}, 85\penalty0 (411):\penalty0 633--639,
  September 1990.

\bibitem[Canavire-Bacarreza et~al.(2021)Canavire-Bacarreza,
  Castro~Pe{\~n}arrieta, and Ugarte~Ontiveros]{Canavire-Bacarreza2021-dc}
Gustavo Canavire-Bacarreza, Luis Castro~Pe{\~n}arrieta, and Darwin
  Ugarte~Ontiveros.
\newblock Outliers in {Semi-Parametric} estimation of treatment effects.
\newblock \emph{Econometrics}, 9\penalty0 (2):\penalty0 19, April 2021.

\bibitem[Huber(2004)]{huber2004robust}
Peter~J Huber.
\newblock \emph{Robust statistics}, volume 523.
\newblock John Wiley \& Sons, 2004.

\bibitem[Hampel et~al.(2011)Hampel, Ronchetti, Rousseeuw, and
  Stahel]{hampel2011robust}
Frank~R Hampel, Elvezio~M Ronchetti, Peter~J Rousseeuw, and Werner~A Stahel.
\newblock \emph{Robust statistics: the approach based on influence functions},
  volume 196.
\newblock John Wiley \& Sons, 2011.

\bibitem[Maronna et~al.(2019)Maronna, Martin, Yohai, and
  Salibi{\'a}n-Barrera]{maronna2019robust}
Ricardo~A Maronna, R~Douglas Martin, Victor~J Yohai, and Mat{\'\i}as
  Salibi{\'a}n-Barrera.
\newblock \emph{Robust statistics: theory and methods (with R)}.
\newblock John Wiley \& Sons, 2019.

\bibitem[Firpo(2007)]{firpo2007efficient}
Sergio Firpo.
\newblock Efficient semiparametric estimation of quantile treatment effects.
\newblock \emph{Econometrica}, 75\penalty0 (1):\penalty0 259--276, 2007.

\bibitem[Zhang et~al.(2012)Zhang, Chen, Troendle, and Zhang]{zhang2012causal}
Zhiwei Zhang, Zhen Chen, James~F Troendle, and Jun Zhang.
\newblock Causal inference on quantiles with an obstetric application.
\newblock \emph{Biometrics}, 68\penalty0 (3):\penalty0 697--706, 2012.

\bibitem[D{\'\i}az(2017)]{diaz2017efficient}
Iv{\'a}n D{\'\i}az.
\newblock Efficient estimation of quantiles in missing data models.
\newblock \emph{Journal of Statistical Planning and Inference}, 190:\penalty0
  39--51, 2017.

\bibitem[Sued et~al.(2020)Sued, Valdora, and Yohai]{sued2020robust}
Mariela Sued, Marina Valdora, and V{\'\i}ctor Yohai.
\newblock Robust doubly protected estimators for quantiles with missing data.
\newblock \emph{TEST}, 63\penalty0 (3):\penalty0 819--843, 2020.

\bibitem[Fujisawa and Eguchi(2008)]{fujisawa2008robust}
Hironori Fujisawa and Shinto Eguchi.
\newblock Robust parameter estimation with a small bias against heavy
  contamination.
\newblock \emph{Journal of Multivariate Analysis}, 99\penalty0 (9):\penalty0
  2053--2081, 2008.

\bibitem[Imbens and Rubin(2015)]{Imbens2015-py}
Guido~W Imbens and Donald~B Rubin.
\newblock \emph{Causal Inference in Statistics, Social, and Biomedical
  Sciences}.
\newblock Cambridge University Press, April 2015.

\bibitem[Lunceford and Davidian(2004)]{lunceford2004stratification}
Jared~K Lunceford and Marie Davidian.
\newblock Stratification and weighting via the propensity score in estimation
  of causal treatment effects: a comparative study.
\newblock \emph{Statistics in medicine}, 23\penalty0 (19):\penalty0 2937--2960,
  2004.

\bibitem[Scharfstein et~al.(1999)Scharfstein, Rotnitzky, and
  Robins]{scharfstein1999adjusting}
Daniel~O Scharfstein, Andrea Rotnitzky, and James~M Robins.
\newblock Adjusting for nonignorable drop-out using semiparametric nonresponse
  models.
\newblock \emph{Journal of the American Statistical Association}, 94\penalty0
  (448):\penalty0 1096--1120, 1999.

\bibitem[Robins and Rotnitzky(1995)]{robins1995semiparametric}
James~M Robins and Andrea Rotnitzky.
\newblock Semiparametric efficiency in multivariate regression models with
  missing data.
\newblock \emph{Journal of the American Statistical Association}, 90\penalty0
  (429):\penalty0 122--129, 1995.

\bibitem[Tsiatis(2006)]{Tsiatis2006-sz}
Anastasios Tsiatis.
\newblock \emph{Semiparametric Theory and Missing Data}.
\newblock Springer New York, June 2006.

\bibitem[Van~der Vaart(2000)]{van2000asymptotic}
Aad~W Van~der Vaart.
\newblock \emph{Asymptotic statistics}, volume~3.
\newblock Cambridge university press, 2000.

\bibitem[Hoshino(2007)]{hoshino2007doubly}
Takahiro Hoshino.
\newblock Doubly robust-type estimation for covariate adjustment in latent
  variable modeling.
\newblock \emph{Psychometrika}, 72\penalty0 (4):\penalty0 535--549, 2007.

\bibitem[Fujisawa(2013)]{fujisawa2013normalized}
Hironori Fujisawa.
\newblock Normalized estimating equation for robust parameter estimation.
\newblock \emph{Electronic Journal of Statistics}, 7:\penalty0 1587--1606,
  2013.

\bibitem[Maechler et~al.(2021)Maechler, Rousseeuw, Croux, Todorov, Ruckstuhl,
  Salibian-Barrera, Verbeke, Koller, Conceicao, and
  di~Palma]{maechler2021package}
Martin Maechler, Peter Rousseeuw, Christophe Croux, Valentin Todorov, Andreas
  Ruckstuhl, Matias Salibian-Barrera, Tobias Verbeke, Manuel Koller, Eduardo~LT
  Conceicao, and Maria~Anna di~Palma.
\newblock \emph{robustbase: Basic Robust Statistics}, 2021.
\newblock R package version 0.93.9.

\bibitem[Windham(1995)]{windham1995robustifying}
Michael~P Windham.
\newblock Robustifying model fitting.
\newblock \emph{Journal of the Royal Statistical Society. Series B
  (Methodological)}, pages 599--609, 1995.

\bibitem[Basu et~al.(1998)Basu, Harris, Hjort, and Jones]{basu1998robust}
Ayanendranath Basu, Ian~R Harris, Nils~L Hjort, and MC~Jones.
\newblock Robust and efficient estimation by minimising a density power
  divergence.
\newblock \emph{Biometrika}, 85\penalty0 (3):\penalty0 549--559, 1998.

\bibitem[Jones et~al.(2001)Jones, Hjort, Harris, and Basu]{jones2001comparison}
MC~Jones, Nils~Lid Hjort, Ian~R Harris, and Ayanendranath Basu.
\newblock A comparison of related density-based minimum divergence estimators.
\newblock \emph{Biometrika}, 88\penalty0 (3):\penalty0 865--873, 2001.

\bibitem[Yohai(1987)]{yohai1987high}
Victor~J Yohai.
\newblock High breakdown-point and high efficiency robust estimates for
  regression.
\newblock \emph{The Annals of Statistics}, pages 642--656, 1987.

\bibitem[Kanamori and Fujisawa(2015)]{kanamori2015robust}
Takafumi Kanamori and Hironori Fujisawa.
\newblock Robust estimation under heavy contamination using unnormalized
  models.
\newblock \emph{Biometrika}, 102\penalty0 (3):\penalty0 559--572, 2015.

\bibitem[Kawashima and Fujisawa(2017)]{kawashima2017robust}
Takayuki Kawashima and Hironori Fujisawa.
\newblock Robust and sparse regression via $\gamma$-divergence.
\newblock \emph{Entropy}, 19\penalty0 (11):\penalty0 608, 2017.

\bibitem[Van Der~Laan and Rubin(2006)]{van2006targeted}
Mark~J Van Der~Laan and Daniel Rubin.
\newblock Targeted maximum likelihood learning.
\newblock \emph{The international journal of biostatistics}, 2\penalty0 (1),
  2006.

\bibitem[Hern{\'a}n and Robins(2020)]{hernan2020causal}
Miguel~A Hern{\'a}n and James~M Robins.
\newblock \emph{Causal inference: what if}.
\newblock Boca Raton: Chapman \& Hall/CRC, 2020.

\end{thebibliography}

\clearpage
\appendix
% \section*{Appendix}

% \setcounter{page}{1}
\setcounter{figure}{0}
\setcounter{table}{0}
\setcounter{equation}{0}
\renewcommand{\thesection}{Appendix \Alph{section}}
\renewcommand{\thefigure}{A\arabic{figure}}
\renewcommand{\thetable}{A\arabic{table}}
\renewcommand{\theequation}{A\arabic{equation}}
\renewcommand{\thetheorem}{A\arabic{theorem}}
\renewcommand{\thelemma}{A\arabic{lemma}}

\section{Proofs for Unbiasedness of Estimating Equations}
\subsection{Proof of Theorem 1}
\begin{proof}
\begin{align}
    \mathbb{E}_{g}\left[
        \frac{Th(Y;\mu^{(1)})^{\gamma}}{\pi(X;\alpha^*)}(Y-\mu^{(1)})
    \right]
    =&~ \mathbb{E}_{g}\left[
        \left.\mathbb{E}_{g}\left[
        \frac{Th(Y;\mu^{(1)})^{\gamma}}{\pi(X;\alpha^*)}(Y-\mu^{(1)})
        \right|X\right]
    \right] \nonumber\\
    =&~ \mathbb{E}_{g}\left[
        \frac{P(T=1|X)}{\pi(X;\alpha^*)}
        \left.\mathbb{E}_{g}\left[
            h(Y;\mu^{(1)})^{\gamma}(Y-\mu^{(1)})
        \right|T=1,X\right]
    \right] \nonumber\\
    =&~ \mathbb{E}_{f_1}\left[
            h(Y^{(1)};\mu^{(1)})^{\gamma}(Y^{(1)}-\mu^{(1)})
    \right] \nonumber
\end{align}
The third equality holds from the causal consistency and the conditional unconfoundedness. Since $h(y;\mu^{(1)})$ and $f_1(y)$ are symmetric about $\mu^{(1)}$, this expectation is equal to zero:
\begin{align}
    \mathbb{E}_{f_1}\left[
        h(Y^{(1)};\mu^{(1)})^{\gamma}(Y^{(1)}-\mu^{(1)})
    \right] 
    = \int h(y;\mu^{(1)})^{\gamma}(y-\mu^{(1)})f_1(y) dy 
    = 0 \nonumber.
\end{align}
\end{proof}

\subsection{Proof of Theorem 2}
\begin{proof}
\begin{align}
    &~ \mathbb{E}_{\tilde{g}}\left[
        \frac{Th(Y;\mu^{(1)})^{\gamma}}{\pi(X;\alpha^*)}(Y-\mu^{(1)})
    \right] \nonumber\\
    =&~ \mathbb{E}_{g}\left[
        \left.\mathbb{E}_{\tilde{g}}\left[
            h(Y^{(1)};\mu^{(1)})^{\gamma}(Y^{(1)}-\mu^{(1)})
        \right|X\right]
    \right] \nonumber\\
    =&~ \int\left\{
        (1-\varepsilon_1(x))\int h(y;\mu^{(1)})^{\gamma}(y-\mu^{(1)})g(y|x)dy + \varepsilon_1(x)\int h(y;\mu^{(1)})^{\gamma}(y-\mu^{(1)})\delta_1(y|x)dy\right\}g(x) dx \nonumber\\
    =&~ \int
        (1-\varepsilon_1(x))\int h(y;\mu^{(1)})^{\gamma}(y-\mu^{(1)})g(y|x)dy~g(x)dx + \nu_1(\varepsilon_1) \\
    =&~ -\int
        \varepsilon_1(x)\int h(y;\mu^{(1)})^{\gamma}(y-\mu^{(1)})g(y|x)dy~g(x)dx + \nu_1(\varepsilon_1).
\end{align}
If $\varepsilon_1(x)=\varepsilon_1$, the first term disappears:
\begin{align}
    -\varepsilon_1\iint h(y;\mu^{(1)})^{\gamma}(y-\mu^{(1)})g(y|x)g(x)dydx
    = -\varepsilon_1\int h(y;\mu^{(1)})^{\gamma}(y-\mu^{(1)})f_1(y)dy=0. \nonumber
\end{align}
\end{proof}

\subsection{Proof of Theorem 3}
\begin{proof}
First, we assume that the true PS is given.
\begin{align}
    &~ \mathbb{E}_g\left[
        \frac{Th(Y;\mu^{(1)})^{\gamma}}{\pi(X;\alpha^*)}(Y-\mu^{(1)})
        - \frac{T-\pi(X;\alpha^*)}{\pi(X;\alpha^*)}\mathbb{E}_{\hat{q}}\left[h(Y;\mu^{(1)})^{\gamma}(Y-\mu)|T=1,X\right]
    \right] \nonumber\\
    =&~ \mathbb{E}_g\left[
        \left.\mathbb{E}_g\left[
        \frac{Th(Y;\mu^{(1)})^{\gamma}}{\pi(X;\alpha^*)}(Y-\mu^{(1)})
        - \frac{T-\pi(X;\alpha^*)}{\pi(X;\alpha^*)}\mathbb{E}_{\hat{q}}\left[h(Y;\mu^{(1)})^{\gamma}(Y-\mu)|T=1,X\right]
        \right|X\right]
    \right] \nonumber\\
    =&~ \mathbb{E}_g\left[
        -\left.\mathbb{E}_g\left[
        \frac{T-\pi(X;\alpha^*)}{\pi(X;\alpha^*)}
        \right|X\right]\mathbb{E}_{\hat{q}}\left[h(Y;\mu^{(1)})^{\gamma}(Y-\mu)|T=1,X\right]
    \right] \nonumber\\
    =&~ 0 \nonumber
\end{align} 
Next, we assume that the true OR model is given.
\begin{align}
    &~ \mathbb{E}_g\left[
        \frac{Th(Y;\mu^{(1)})^{\gamma}}{\pi(X;\alpha)}(Y-\mu^{(1)})
        - \frac{T-\pi(X;\alpha)}{\pi(X;\alpha)}\mathbb{E}_{g}\left[h(Y;\mu)^{\gamma}(Y-\mu)|T=1,X\right]
    \right] \nonumber\\
    =&~ \mathbb{E}_g\left[
        \frac{Th(Y;\mu^{(1)})^{\gamma}}{\pi(X;\alpha)}(Y-\mu^{(1)})
        - \frac{T-\pi(X;\alpha)}{\pi(X;\alpha)}\mathbb{E}_{g}\left[h(Y^{(1)};\mu^{(1)})^{\gamma}(Y^{(1)}-\mu^{(1)})|X\right]
    \right] \nonumber\\
    =&~ \mathbb{E}_g\left[
        \left.\mathbb{E}_g\left[\frac{T}{\pi(X;\alpha)}
        h(Y;\mu^{(1)})^{\gamma}(Y-\mu^{(1)})\right|X\right] \right.\nonumber\\
    &~~~~~~~~~~~~~~~~~~~\left.
        - \left.\mathbb{E}_g\left[\frac{T}{\pi(X;\alpha)}-1\right|X\right]\mathbb{E}_g[h(Y^{(1)};\mu^{(1)})^{\gamma}(Y^{(1)}-\mu^{(1)})|X]
    \right] \nonumber\\
    =&~ \mathbb{E}_g\left[
        \frac{P(T=1|X)}{\pi(X;\alpha)}\mathbb{E}_g[
        h(Y^{(1)};\mu^{(1)})^{\gamma}(Y^{(1)}-\mu^{(1)})|X] \right.\nonumber\\
    &~~~~~~~~~~~~~~~~~~~\left.
        - \left(\frac{P(T=1|X)}{\pi(X;\alpha)}-1\right)\mathbb{E}_g[h(Y^{(1)};\mu^{(1)})^{\gamma}(Y^{(1)}-\mu^{(1)})|X]
    \right] \nonumber\\
    =&~ \mathbb{E}_g\left[
        \mathbb{E}_g[h(Y^{(1)};\mu^{(1)})^{\gamma}(Y^{(1)}-\mu^{(1)})|X]
    \right] \nonumber\\
    =&~ 0 \nonumber
\end{align}
Thus, the DP-DR estimating equation has double robustness under no contamination.
\end{proof}

\subsection{Proof of Theorem 4}
\begin{proof}
If the true PS model is given, the DP-DR estimating equation yields
\begin{align}
    &~ \mathbb{E}_{\tilde{g}}\left[
        \frac{Th(Y;\mu^{(1)})^{\gamma}}{\pi(X;\alpha^*)}(Y-\mu^{(1)})
        - \frac{T-\pi(X;\alpha^*)}{\pi(X;\alpha^*)}\mathbb{E}_{\hat{q}}\left[h(Y;\mu^{(1)})^{\gamma}(Y-\mu)|T=1,X\right]
    \right] \nonumber \\
    =&~ \mathbb{E}_g\left[
        \left.\mathbb{E}_{\tilde{g}}\left[
        \frac{Th(Y;\mu^{(1)})^{\gamma}}{\pi(X;\alpha^*)}(Y-\mu^{(1)})
        - \frac{T-\pi(X;\alpha^*)}{\pi(X;\alpha^*)}\mathbb{E}_{\hat{q}}\left[h(Y;\mu^{(1)})^{\gamma}(Y-\mu)|T=1,X\right]
        \right|X\right]
    \right] \nonumber\\
    =&~ \mathbb{E}_g\left[
        \left.\mathbb{E}_{\tilde{g}}\left[
        \frac{Th(Y;\mu^{(1)})^{\gamma}}{\pi(X;\alpha^*)}(Y-\mu^{(1)})
        \right|X\right]
    \right] \nonumber\\
    &~~~~~ - \underbrace{\mathbb{E}_g\left[
        \left.\mathbb{E}_g\left[
        \frac{T-\pi(X;\alpha^*)}{\pi(X;\alpha^*)}
        \right|X\right]\mathbb{E}_{\hat{q}}\left[h(Y;\mu^{(1)})^{\gamma}(Y-\mu)|T=1,X\right]
    \right]}_{=0} \nonumber\\
    =&~ -\int
        \varepsilon_1(x)\int h(y;\mu^{(1)})^{\gamma}(y-\mu^{(1)})g(y|x)dy~g(x)dx + \nu_1(\varepsilon_1).
\end{align} 
If the contamination ratio is independent of $X$, it holds that 
\begin{align}
    -\varepsilon_1\iint h(y;\mu^{(1)})^{\gamma}(y-\mu^{(1)})g(y|x)dy~g(x)dx = 0 \nonumber,
\end{align}
which is the same result as the DP-IPW estimating equation.

If the true OR model is given, the DP-DR estimating equation yields
\begin{align}
    &~ \mathbb{E}_{\tilde{g}}\left[
        \frac{Th(Y;\mu^{(1)})^{\gamma}}{\pi(X;\alpha)}(Y-\mu^{(1)})
        - \frac{T-\pi(X;\alpha)}{\pi(X;\alpha)}\mathbb{E}_{g}\left[h(Y;\mu^{(1)})^{\gamma}(Y-\mu)|T=1,X\right]
    \right] \nonumber\\
    =&~ \mathbb{E}_{\tilde{g}}\left[
        \frac{Th(Y;\mu^{(1)})^{\gamma}}{\pi(X;\alpha)}(Y-\mu^{(1)})
        - \frac{T-\pi(X;\alpha)}{\pi(X;\alpha)}\mathbb{E}_{g}\left[h(Y^{(1)};\mu^{(1)})^{\gamma}(Y^{(1)}-\mu)|X\right]
    \right] \nonumber\\
    =&~ \mathbb{E}_g\left[
        \frac{P(T=1|X)}{\pi(X;\alpha)}\left(
            (1-\varepsilon_1(X))\mathbb{E}_{g}[h(Y^{(1)};\mu^{(1)})^{\gamma}(Y^{(1)}-\mu^{(1)})|X]+\varepsilon_1(X)\mathbb{E}_{\delta}[h(Y;\mu^{(1)})^{\gamma}(Y-\mu^{(1)})|X]
        \right)\right.\nonumber\\
        &~~~~~~ - \left.\left(\frac{P(T=1|X)}{\pi(X;\alpha)}-1\right)\mathbb{E}_g[h(Y^{(1)};\mu^{(1)})^{\gamma}(Y^{(1)}-\mu^{(1)})|X]
    \right] \nonumber\\
    =&~ \mathbb{E}_g\left[
        -\varepsilon_1(X)\frac{P(T=1|X)}{\pi(X;\alpha)}\mathbb{E}_g[h(Y^{(1)};\mu^{(1)})^{\gamma}(Y^{(1)}-\mu^{(1)})|X]
    \right] + \nu_1(\varepsilon_1(\cdot)P(T=1|\cdot)/\pi(\cdot;\alpha)).
\end{align}
When the contamination ratio is independent of $X$, the first term becomes
\begin{align}
    -\varepsilon_1\mathbb{E}_g\left[
        \frac{P(T=1|X)}{\pi(X;\alpha)}\mathbb{E}_g[h(Y^{(1)};\mu^{(1)})^{\gamma}(Y^{(1)}-\mu^{(1)})|X]
    \right]
\end{align}
Thus, we have Theorem 4.
\end{proof}

\section{Derivation of Influence functions in Section 4}
\subsection{DP-IPW}
Let $\tilde{\mu}_n^{(1)}$ denote the root of the DP-IPW estimating equation under contamination.
\begin{align}
    0 =&~ \left.\frac{\partial}{\partial\varepsilon_1(X_i)}\left\{\frac{1}{n}\sum_{i=1}^n\left.\mathbb{E}_{\tilde{g}}\left[\frac{Th(Y;\mu_n^{(1)})^{\gamma}}{\pi(X_i;\alpha^*)}(Y-\tilde{\mu}_n^{(1)})\right|X_i\right]\right\}\right|_{\varepsilon_1(X_i)=0}\nonumber\\
    =&~ \left.\frac{\partial}{\partial\varepsilon_1(X_i)}\iint\frac{t}{\pi(X_i;\alpha^*)}h(y;\mu_n^{(1)})^{\gamma}(y-\tilde{\mu}_n^{(1)})\{(1-\varepsilon_1(X_i))g(y|X_i)+\varepsilon_1(X_i)\delta_{y_0}(y)\}g(t|X_i)dydt\right|_{\varepsilon_1(X_i)=0} \nonumber\\
    =&~ \iint\left.\frac{\partial\psi}{\partial\mu}\right|_{\mu=\mu^{(1)}_n}g(y|X_i)g(t|X_i)dydt\cdot IF(y_0) \nonumber\\
    &~~~~~ + \iint\frac{t}{\pi(X_i;\alpha^*)}h(y;\mu_n^{(1)})^{\gamma}(y-\mu_n^{(1)})\delta_{y_0}(y)g(t|X_i)dydt \nonumber\\
    =&~ \left.\mathbb{E}_{g}\left[\left.\frac{\partial\psi}{\partial\mu}\right|_{\mu=\mu^{(1)}_n}\right|X_i\right]\cdot IF_{DP-IPW}(y_0) + h(y_0;\mu_n^{(1)})^{\gamma}(y_0-\mu^{(1)}_n)\nonumber
\end{align}
If $\left.\mathbb{E}_{g}\left[\left.\frac{\partial\psi}{\partial\mu}\right|_{\mu=\mu^{(1)}_n}\right|X_i\right]$ is invertible, we obtain the IF as
\begin{align}
    -\left.\mathbb{E}_{g}\left[\left.\frac{\partial\psi}{\partial\mu}\right|_{\mu=\mu^{(1)}_n}\right|X_i\right]^{-1}h(y_0;\mu_n^{(1)})^{\gamma}(y_0-\mu^{(1)}_n).
\end{align}

\subsection{DP-DR}
\begin{align}
    0 =&~ \frac{\partial}{\partial\varepsilon_1(X_i)}\left\{\frac{1}{n}\sum_{i=1}^n\left.\mathbb{E}_{\tilde{g}}\left[\frac{Th(Y;\tilde{\mu}_n^{(1)})^{\gamma}}{\pi(X_i;\alpha^*)}(Y-\tilde{\mu}_n^{(1)})\right|X_i\right]\right. \nonumber\\
    &~~~~~~~~~~~~~ \left.\left.- \mathbb{E}\left[\left.\frac{T-\pi(X_i;\alpha^*)}{\pi(X_i;\alpha^*)}\right|X_i\right]\mathbb{E}_{\hat{q}}[h(Y^{(1)};\tilde{\mu}_n^{(1)})^{\gamma}(Y^{(1)}-\tilde{\mu}_n^{(1)})|X_i]\right\}\right|_{\varepsilon_1(X_i)=0} \nonumber\\
    =&~ \frac{\partial}{\partial\varepsilon_1(X_i)}\left\{
    \iint\frac{t}{\pi(X_i;\alpha^*)}h(y;\tilde{\mu}_n^{(1)})^{\gamma}(y-\tilde{\mu}_n^{(1)})\{(1-\varepsilon_1(X_i))g(y|X_i)+\varepsilon_1(X_i)\delta_{y_0}(y)\}g(t|X_i) \right. \nonumber\\
    &~~~~~~~~~~~~~ \left.\left.- \frac{t-\pi(X_i;\alpha^*)}{\pi(X_i;\alpha^*)}\mathbb{E}_{\hat{q}}[h(Y^{(1)};\tilde{\mu}_n^{(1)})^{\gamma}(Y^{(1)}-\tilde{\mu}_n^{(1)})|X_i]g(t|X_i)dydt\right\}\right|_{\varepsilon_1(X_i)=0} \nonumber\\
    =&~ \left.\mathbb{E}_{g}\left[\left.\frac{\partial\psi}{\partial\mu}\right|_{\mu=\mu^{(1)}_n}\right|X_i\right]\cdot IF(y_0) + \frac{P(T=1|X_i)}{\pi(X_i;\alpha^*)}h(y_0;\mu_n^{(1)})^{\gamma}(y_0-\mu_n^{(1)})\nonumber\\
    &~~~~~~~ - \frac{P(T=1|X_i)-\pi(X_i;\alpha^*)}{\pi(X_i;\alpha^*)}\mathbb{E}_{\hat{q}}[h(Y^{(1)};\mu_n^{(1)})^{\gamma}(Y^{(1)}-\mu_n^{(1)})|X_i]\nonumber
\end{align}
Then, we obtain the IF as
\begin{align}
    -\mathbb{E}_{g}\left[\left.\left.\frac{\partial\psi}{\partial\mu}\right|_{\mu=\mu^{(1)}_n}\right|X_i\right]^{-1}&\left\{\frac{P(T=1|X_i)}{\pi(X_i;\alpha)}h(y_{0};\mu_n^{(1)})^{\gamma}(y_{0}-\mu^{(1)}_n)\right.\nonumber\\
    &\left.-\frac{P(T=1|X_i)-\pi(X_i;\alpha)}{\pi(X_i;\alpha)}\mathbb{E}_{\hat{q}}[h(Y^{(1)};\mu_n^{(1)})^{\gamma}(Y^{(1)}-\mu_n^{(1)})|X_i]\right\}.
\end{align}

\subsection{$\varepsilon$DP-DR}
Suppose the expected contamination ratio is correctly specified as $\overline{\varepsilon}_1= (1-\sum\varepsilon_1(X_i)/n)$.
\begin{align}
    0 =&~ \frac{\partial}{\partial\varepsilon_1(X_i)}\left\{\frac{1}{n}\sum_{i=1}^n\left.\mathbb{E}_{\tilde{g}}\left[\frac{Th(Y;\tilde{\mu}_n^{(1)})^{\gamma}}{\pi(X_i;\alpha^*)}(Y-\tilde{\mu}_n^{(1)})\right|X_i\right]\right. \nonumber\\
    &~~~~~~~ \left.\left.- \left(1-\frac{1}{n}\sum_{i=1}^n\varepsilon_1(X_i)\right) \mathbb{E}\left[\left.\frac{T-\pi(X_i;\alpha^*)}{\pi(X_i;\alpha^*)}\right|X_i\right]\mathbb{E}_{\hat{q}}[h(Y^{(1)};\tilde{\mu}_n^{(1)})^{\gamma}(Y^{(1)}-\tilde{\mu}_n^{(1)})|X_i]\right\}\right|_{\varepsilon_1(X_i)=0}\nonumber \\
    =&~ \frac{\partial}{\partial\varepsilon_1(X_i)}\left\{
    \iint\frac{t}{\pi(X_i;\alpha^*)}h(y;\tilde{\mu}_n^{(1)})^{\gamma}(y-\tilde{\mu}_n^{(1)})\{(1-\varepsilon_1(X_i))g(y|X_i)+\varepsilon_1(X_i)\delta_{y_0}(y)\}g(t|X_i) \right. \nonumber\\
    &~~~~~~~~~ \left.\left.- \left(1-\frac{1}{n}\sum_{i=1}^n\varepsilon_1(X_i)\right)\frac{t-\pi(X_i;\alpha^*)}{\pi(X_i;\alpha^*)}\mathbb{E}_{\hat{q}}[h(Y^{(1)};\tilde{\mu}_n^{(1)})^{\gamma}(Y^{(1)}-\tilde{\mu}_n^{(1)})|X_i]g(t|X_i)dydt\right\}\right|_{\varepsilon_1(X_i)=0} \nonumber\\
    =&~ \left.\mathbb{E}_{g}\left[\left.\frac{\partial\psi}{\partial\mu}\right|_{\mu=\mu^{(1)}_n}\right|X_i\right]\cdot IF(y_0) + \frac{P(T=1|X_i)}{\pi(X_i;\alpha^*)}h(y_0;\mu_n^{(1)})^{\gamma}(y_0-\mu_n^{(1)})\nonumber\\
    &~~~~~~~ - \frac{n-1}{n}\frac{P(T=1|X_i)-\pi(X_i;\alpha^*)}{\pi(X_i;\alpha^*)}\mathbb{E}_{\hat{q}}[h(Y^{(1)};\mu_n^{(1)})^{\gamma}(Y^{(1)}-\mu_n^{(1)})|X_i]\nonumber
\end{align}
Thus, we obtain (4.25).

\section{Influence Functions Under Homogeneous Contamination}
Under homogeneous contamination, we can apply the ordinary IF analysis. By differentiating the estimating equations with respect to $\varepsilon_1$ at $\varepsilon_1=0$, we obtain the following results.
\subsection{DP-IPW}
\begin{align}
    0 =&~ \left.\frac{\partial}{\partial\varepsilon_1}\mathbb{E}_{\tilde{g}}\left[
        \frac{Th(Y;\tilde{\mu}^{(1)})^{\gamma}}{\pi(X;\alpha^*)}(Y-\tilde{\mu}^{(1)})
    \right]\right|_{\varepsilon_1=0} \nonumber\\
    0 =&~ \left.\frac{\partial}{\partial\varepsilon_1}\iiint 
        \frac{th(y;\tilde{\mu}^{(1)})^{\gamma}}{\pi(x;\alpha^*)}(y-\tilde{\mu}^{(1)})
    \{(1-\varepsilon_1)g(y|t,x)+\varepsilon_1\delta_{y_0}(y|x)\}g(t|x)g(x)dydtdx\right|_{\varepsilon_1=0} \nonumber\\
    =&~ \mathbb{E}_{g}\left[\left.\frac{\partial\psi}{\partial\varepsilon_1}\right|_{\mu=\mu^{(1)}}\right]\cdot IF(y_0) + \iiint 
        \frac{th(y;\mu^{(1)})^{\gamma}}{\pi(x;\alpha^*)}(y-\mu^{(1)})
    \delta_{y_0}(y|x)g(t|x)g(x)dydtdx \nonumber\\
    IF(y_0) =&~ \mathbb{E}_{g}\left[\left.\frac{\partial\psi}{\partial\varepsilon_1}\right|_{\mu=\mu^{(1)}}\right]^{-1}h(y_0;\mu^{(1)})^{\gamma}(y_0-\mu^{(1)})
\end{align}
Thus, DP-IPW has a redescending property under homogeneous contamination.

\subsection{DP-DR}
\begin{align}
    0 =&~ \left.\frac{\partial}{\partial\varepsilon_1}
    \mathbb{E}_{\tilde{g}}\left[
        \frac{Th(Y;\tilde{\mu}^{(1)})^{\gamma}}{\pi(X;\alpha)}(Y-\tilde{\mu}^{(1)}) - \frac{T-\pi(X;\alpha)}{\pi(X;\alpha)}\mathbb{E}_{\hat{q}}[h(Y^{(1)};\tilde{\mu}^{(1)})^{\gamma}(Y^{(1)}-\tilde{\mu}^{(1)})|X]
    \right]\right|_{\varepsilon_1=0}\nonumber \\
    =&~ \frac{\partial}{\partial\varepsilon_1}\iiint \left(
        \frac{th(y;\tilde{\mu}^{(1)})^{\gamma}}{\pi(x;\alpha)}(y-\tilde{\mu}^{(1)}) - \frac{t-\pi(x;\alpha)}{\pi(x;\alpha)}\mathbb{E}_{\hat{q}}[h(Y^{(1)};\tilde{\mu}^{(1)})^{\gamma}(Y^{(1)}-\tilde{\mu}^{(1)})|x]\right)\nonumber\\
    &~~~~~~~~ \times\left.\{(1-\varepsilon_1)g(y|t,x)+\varepsilon_1\delta_{y_0}(y|x)\}g(t|x)g(x)dydtdx\right|_{\varepsilon_1=0} \nonumber \\
    =&~ \mathbb{E}_{g}\left[\left.\frac{\partial\psi}{\partial\varepsilon_1}\right|_{\mu=\mu^{(1)}}\right]\cdot IF(y_0) + \iiint \left(
        \frac{th(y;\mu^{(1)})^{\gamma}}{\pi(x;\alpha)}(y-\mu^{(1)})\right.\nonumber\\
    &~~~~~~ - \left.\frac{t-\pi(x;\alpha)}{\pi(x;\alpha)}\mathbb{E}_{\hat{q}}[h(Y^{(1)};\mu^{(1)})^{\gamma}(Y^{(1)}-\mu^{(1)})|x]\right)\delta_{y_0}(y|x)g(t|x)g(x)dydtdx \nonumber
\end{align}
\begin{align}
    IF(y_0) =&~ \mathbb{E}_{g}\left[\left.\frac{\partial\psi}{\partial\varepsilon_1}\right|_{\mu=\mu^{(1)}}\right]^{-1}\iint \left(
        \frac{th(y_0;\mu^{(1)})^{\gamma}}{\pi(x;\alpha)}(y_0-\mu^{(1)}) \right.\nonumber\\
    &~~~~~~~~ \left.- \frac{t-\pi(x;\alpha)}{\pi(x;\alpha)}\mathbb{E}_{\hat{q}}[h(Y^{(1)};\mu^{(1)})^{\gamma}(Y^{(1)}-\mu^{(1)})|x]\right)g(t|x)g(x)dtdx
\end{align}
If the true PS model is given, this IF reduces to 
\begin{align}
    IF(y_0) =&~ \mathbb{E}_{g}\left[\left.\frac{\partial\psi}{\partial\varepsilon_1}\right|_{\mu=\mu^{(1)}}\right]^{-1}h(y_0;\mu^{(1)})^{\gamma}(y_0-\mu^{(1)})
\end{align}
If the true OR model is given, this IF reduces to
\begin{align}
    IF(y_0) =&~ \mathbb{E}_{g}\left[\left.\frac{\partial\psi}{\partial\varepsilon_1}\right|_{\mu=\mu^{(1)}}\right]^{-1}\int\frac{P(T=1|x)}{\pi(x;\alpha)}h(y_0;\mu^{(1)})^{\gamma}(y_0-\mu^{(1)}) \nonumber\\
    &~~~~~~~~ -  \frac{P(T=1|x)-\pi(x;\alpha)}{\pi(x;\alpha)}\mathbb{E}_{g}[h(Y;\mu^{(1)})^{\gamma}(Y^{(1)}-\mu^{(1)})|x]g(x)dx
\end{align}
Thus, DP-DR has a redescending property under homogeneous contamination in the PS-correct case.

\subsection{$\varepsilon$DP-DR}
\begin{align}
    0 =&~ \left.\frac{\partial}{\partial\varepsilon_1}
    \mathbb{E}_{\tilde{g}}\left[
        \frac{Th(Y;\tilde{\mu}^{(1)})^{\gamma}}{\pi(X;\alpha)}(Y-\tilde{\mu}^{(1)})\right.\right. \nonumber\\
    &~~~~~~ \left.\left.- \frac{T-\pi(X;\alpha)}{\pi(X;\alpha)}(1-\varepsilon_1)\mathbb{E}_{\hat{q}}[h(Y^{(1)};\tilde{\mu}^{(1)})^{\gamma}(Y^{(1)}-\tilde{\mu}^{(1)})|X]
    \right]\right|_{\varepsilon_1=0}\nonumber \\
    =&~ \mathbb{E}_{g}\left[\left.\frac{\partial\psi}{\partial\varepsilon_1}\right|_{\mu=\mu^{(1)}}\right]\cdot IF(y_0) \nonumber\\
    &~~~~ + \iiint \left(
        \frac{th(y;\mu^{(1)})^{\gamma}}{\pi(x;\alpha)}(y-\mu^{(1)})-\frac{t-\pi(x;\alpha)}{\pi(x;\alpha)}\mathbb{E}_{\hat{q}}[h(Y^{(1)};\mu^{(1)})^{\gamma}(Y^{(1)}-\mu^{(1)})|x]\right.\nonumber\\
    &~~~~~~~~~~~~ \left.+\frac{t-\pi(x;\alpha)}{\pi(x;\alpha)}\mathbb{E}_{\hat{q}}[h(Y^{(1)};\mu^{(1)})^{\gamma}(Y^{(1)}-\mu^{(1)})|x]\right)\delta_{y_0}(y|x)g(t|x)g(x)dydtdx \nonumber\\
    IF(y_0) =&~ \mathbb{E}_{g}\left[\left.\frac{\partial\psi}{\partial\varepsilon_1}\right|_{\mu=\mu^{(1)}}\right]^{-1}\int\frac{P(T=1|x)}{\pi(x;\alpha)}h(y_0;\mu^{(1)})^{\gamma}(y_0-\mu^{(1)})g(x)dx
\end{align}
Thus, under homogeneous contamination, $\varepsilon$DP-DR has a redescending property in either the PS-correct case or the OR-correct case.

\section{Further Discussion on Asymptotic Properties}
\subsection{Regularity Conditions for Theorem 5}\label{app:reg}
Detailed discussion is available in Chapter 5 of Van der Vaart (2000), for example.
\begin{enumerate}
    \renewcommand{\labelenumi}{(\alph{enumi})}
    \item The function $S(\lambda)$ is twice continuously differentiable with respect to $\lambda$.
    \item There exists a root $\lambda^*$ of $\mathbb{E}_{\tilde{g}}[S(\lambda)]=0$.
    \item $\mathbb{E}_{\tilde{g}}[\|S(\lambda^*)\|^2]<\infty$.
    \item $\mathbb{E}_{\tilde{g}}[\partial S(\lambda^*)/\partial\lambda^T]$ exists and is nonsingular.
    \item The second-order differentials of $S(\lambda)$ with respect to $\mu$ are dominated by a fixed integrable function $h$ in a neighborhood of $\lambda^*$.
\end{enumerate}

\subsection{Proof of Theorem 6}
Under homogeneous contamination, we see that simpler properties hold.
The matrix $\mathbf{J}^{\tilde{g}}(\lambda^*)$ is partitioned as 
\begin{align}
    \mathbf{J}^{\tilde{g}}(\lambda^*)
    =&~ \left(\begin{array}{ccc}
        \mathbb{E}_{\tilde{g}}\left[\frac{\partial}{\partial\mu}\psi_i(\mu^*;\alpha^*,\beta^*)\right] & \mathbb{E}_{\tilde{g}}\left[\frac{\partial}{\partial\alpha^T}\psi_i(\mu^*;\alpha^*,\beta^*)\right] & \mathbb{E}_{\tilde{g}}\left[\frac{\partial}{\partial\beta^T}\psi_i(\mu^*;\alpha^*,\beta^*)\right] \\
        \mathbf{0} & \mathbb{E}_{g}\left[\frac{\partial}{\partial\alpha^T}s^{PS}_i(\alpha^*)\right] & \mathbf{0} \\
        \mathbf{0} & \mathbf{0} & \mathbb{E}_{\tilde{g}}\left[\frac{\partial}{\partial\beta^T}s^{OR}_i(\beta^*)\right]
    \end{array}\right) \nonumber\\
    =&~ \left(\begin{array}{ccc}
        \mathbf{J}^{\tilde{g}}_{11}(\lambda^*) & \mathbf{J}^{\tilde{g}}_{12}(\lambda^*) & \mathbf{J}^{\tilde{g}}_{13}(\lambda^*) \\
        \mathbf{0} & \mathbf{J}^{g}_{22}(\lambda^*) & \mathbf{0} \\
        \mathbf{0} & \mathbf{0} & \mathbf{J}^{\tilde{g}}_{33}(\lambda^*)
    \end{array}\right).\nonumber
\end{align}
If it is nonsingular, the inverse is obtained as 
\begin{align}
    \mathbf{J}^{\tilde{g}}(\lambda^*)^{-1}
    =&~ \left(\begin{array}{ccc}
        \mathbf{J}^{\tilde{g}}_{11}(\lambda^*)^{-1} & -\mathbf{J}^{\tilde{g}}_{11}(\lambda^*)^{-1}\mathbf{J}^{\tilde{g}}_{12}(\lambda^*)\mathbf{J}^{g}_{22}(\lambda^*)^{-1} & -\mathbf{J}^{\tilde{g}}_{11}(\lambda^*)^{-1}\mathbf{J}^{\tilde{g}}_{13}(\lambda^*)\mathbf{J}^{\tilde{g}}_{33}(\lambda^*)^{-1} \\
        \mathbf{0} & \mathbf{J}^{g}_{22}(\lambda^*)^{-1} & \mathbf{0} \\
        \mathbf{0} & \mathbf{0} & \mathbf{J}^{\tilde{g}}_{33}(\lambda^*)^{-1}
    \end{array}\right).\nonumber
\end{align}
Note that $\mathbf{J}^{\tilde{g}}_{11}(\cdot)$ is a scalar value.

Then, Theorem 5 is proved as follows.
\begin{proof}
By Taylor's theorem, the expectation of the estimating equation (5.26) is expressed as
\begin{align}
    0=\mathbb{E}[S_i(\lambda^{*})] = \mathbb{E}[S_i(\lambda^{**})] + \mathbf{J}^{\tilde{g}}(\lambda^\dag)(\lambda^{*}-\lambda^{**}),\nonumber
\end{align}
where $\lambda^\dag$ is an intermediate value between $\lambda^{**}$ and $\lambda^{*}$.
Since $\mathbb{E}[s^{PS}_i(\alpha^*)]=\mathbb{E}[s^{OR}_i(\beta^*)]=0$ and $(\lambda^{*}-\lambda^{**})=(\mu^*-\mu^{(1)},\mathbf{0}^T,\mathbf{0}^T)^T$, only the first element is meaningful:
\begin{align}
    0 = \mathbb{E}[\psi_i(\mu^{(1)};\alpha^*,\beta^*)] + \mathbf{J}^{\tilde{g}}_{11}(\lambda^{\dag})(\mu^{*}-\mu^{(1)}).\nonumber
\end{align}
Then, since $\mathbf{J}^{\tilde{g}}_{11}(\lambda^{\dag})$ is non-zero, the latent bias of $\mu^*$ reduces to 
\begin{align}\label{eq:xxx}
    \mu^* - \mu^{(1)} = -\mathbf{J}^{\tilde{g}}_{11}(\lambda^{\dag})^{-1}\mathbb{E}[\psi_i(\mu^{(1)};\alpha^*,\beta^*)].
\end{align}
From Corollary 1, if either the PS or the OR model is correct, we have
\begin{align}
    \mathbb{E}[\psi_i(\mu^{(1)};\alpha^*,\beta^*)] = \nu_1(\phi).\nonumber
\end{align}
Upon substituting it into \eqref{eq:xxx}, the statement holds.
\end{proof}

\subsection{Further Discussion on Asymptotic Variance}
Considering the structure of the full estimating equation, the asymptotic variance can be expressed in a more explicit form. The discussion about the asymptotic variance is provided in the next section.

The matrix $\mathbf{K}_{\tilde{g}}(\lambda^*)$ is also partitioned as
\begin{align}
    \mathbf{K}^{\tilde{g}}(\lambda^*)
    =&~ \left(\begin{array}{ccc}
        \mathbf{K}^{\tilde{g}}_{11}(\lambda^*) & \mathbf{K}^{\tilde{g}}_{12}(\lambda^*) & \mathbf{K}^{\tilde{g}}_{13}(\lambda^*) \\
        \mathbf{K}^{\tilde{g}~T}_{12}(\lambda^*) & \mathbf{K}^g_{22}(\lambda^*) & \mathbf{K}^{\tilde{g}}_{23}(\lambda^*) \\
        \mathbf{K}^{\tilde{g}~T}_{13}(\lambda^*) & \mathbf{K}^{\tilde{g}~T}_{23}(\lambda^*) & \mathbf{K}^{\tilde{g}}_{33}(\lambda^*)
    \end{array}\right).\nonumber
\end{align}
The asymptotic variance is also affected by outliers. However, under Assumption 1, the asymptotic variance can be approximated by the asymptotic variance under no contamination and contamination ratio $\varepsilon_1$. 

\begin{theorem}\label{thm:avar}
Besides to Assumption 1, assume that $\mathbf{J}^{\delta}_{1m}(\lambda^{**})\approx\mathbf{0}$ and $\mathbf{K}^{\delta}_{1m}(\lambda^{**})\approx\mathbf{0}$ holds for $m=1,2,3$. 
Then, under homogeneous contamination, 
\begin{align}\label{eq:avar}
    \mathbf{V}^{\tilde{g}}(\lambda^*)
        \approx&~ \mathbf{J}^{\check{g}}(\lambda^{**})^{-1}\left(\begin{array}{ccc}
        \frac{1}{(1-\varepsilon_1)}\mathbf{K}^g_{11}(\lambda^{**}) & \mathbf{K}^g_{12}(\lambda^{**}) & \mathbf{K}^{g}_{13}(\lambda^{**}) \\
        \mathbf{K}^g_{12}(\lambda^{**})^T & \mathbf{K}^g_{22}(\lambda^{**}) & \mathbf{K}^{\tilde{g}}_{23}(\lambda^{**}) \\
        \mathbf{K}^{g}_{13}(\lambda^{**})^T & \mathbf{K}^{\tilde{g}}_{23}(\lambda^{**})^T & \mathbf{K}^{\tilde{g}}_{33}(\lambda^{**})
    \end{array}\right)\{\mathbf{J}^{\check{g}}(\lambda^{**})^T\}^{-1},\nonumber
\end{align}
where
\begin{align}
    \mathbf{J}^{\check{g}}(\lambda^{**}) = \left(\begin{array}{ccc}
        \mathbf{J}^{g}_{11}(\lambda^{**}) & \mathbf{J}^{g}_{12}(\lambda^{**}) & \mathbf{J}^{g}_{13}(\lambda^{**}) \\
        \mathbf{0} & \mathbf{J}^{g}_{22}(\lambda^{**}) & \mathbf{0} \\
        \mathbf{0} & \mathbf{0} & \mathbf{J}^{\tilde{g}}_{33}(\lambda^{**})
    \end{array}\right)\nonumber
\end{align}.
\end{theorem}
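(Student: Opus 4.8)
The idea is to trade the contaminated limit $\lambda^{*}$ for the nominal point $\lambda^{**}$, and then exploit the mixture form of homogeneous contamination together with the block structure of $\mathbf{J}^{\tilde g}$ recorded above. First I would invoke Theorem~\ref{thm:consist}: under Assumption~\ref{as:Outlier} and homogeneous contamination $\mu^{*}=\mu^{(1)}+\mathcal{O}(\nu_1(\phi))$, while $\lambda^{*}$ and $\lambda^{**}$ share the same $(\alpha^{*},\beta^{*})$ by construction, so $\lambda^{*}=\lambda^{**}+\mathcal{O}(\nu_1(\phi))$. The smoothness and domination conditions of the Appendix make $\mathbf{J}^{\tilde g}(\cdot)$ and $\mathbf{K}^{\tilde g}(\cdot)$ continuous near $\lambda^{**}$, where $\mathbf{J}^{\tilde g}$ stays nonsingular, so $\mathbf{V}^{\tilde g}(\lambda^{*})\approx\mathbf{V}^{\tilde g}(\lambda^{**})$ up to a term of order $\nu_1$, which Assumption~\ref{as:Outlier} makes negligible; it then suffices to approximate $\mathbf{J}^{\tilde g}(\lambda^{**})^{-1}\mathbf{K}^{\tilde g}(\lambda^{**})\{\mathbf{J}^{\tilde g}(\lambda^{**})^{T}\}^{-1}$.

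\textbf{Peeling off the outlier part.} Homogeneous contamination leaves the law of $(T,X)$ intact and replaces that of $Y\mid T=1,X$ by $(1-\varepsilon_1)g+\varepsilon_1\delta_1$. Splitting every expectation accordingly, the first block row of $\mathbf{J}$ and the first block row and column of $\mathbf{K}$ decompose as $\mathbf{J}^{\tilde g}_{1m}(\lambda^{**})=(1-\varepsilon_1)\mathbf{J}^{g}_{1m}(\lambda^{**})+\varepsilon_1\mathbf{J}^{\delta}_{1m}(\lambda^{**})$ and $\mathbf{K}^{\tilde g}_{1m}(\lambda^{**})=(1-\varepsilon_1)\mathbf{K}^{g}_{1m}(\lambda^{**})+\varepsilon_1\mathbf{K}^{\delta}_{1m}(\lambda^{**})$ for $m=1,2,3$; the extra hypotheses $\mathbf{J}^{\delta}_{1m}(\lambda^{**})\approx\mathbf{0}$ and $\mathbf{K}^{\delta}_{1m}(\lambda^{**})\approx\mathbf{0}$ --- which hold because $Y$ enters $\psi_i$ only through the redescending factor $h(Y;\mu^{(1)})^{\gamma}(Y-\mu^{(1)})$, as in Assumption~\ref{as:Outlier} --- then allow me to discard the $\varepsilon_1\delta_1$ pieces. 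Since $s^{PS}_i$ is free of $Y$, $\mathbf{J}^{\tilde g}_{22}=\mathbf{J}^{g}_{22}$ and $\mathbf{K}^{\tilde g}_{22}=\mathbf{K}^{g}_{22}$ exactly, whereas the blocks supported on $s^{OR}_i$ (namely $\mathbf{J}_{33}$, $\mathbf{K}_{23}$, $\mathbf{K}_{33}$) stay contaminated, the outcome-regression fit not being assumed fully outlier-resistant, and the off-diagonal zero pattern of $\mathbf{J}$ (blocks $(2,1)$, $(2,3)$, $(3,1)$, $(3,2)$) is unaffected.

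\textbf{Factoring and multiplying out.} The previous step gives $\mathbf{J}^{\tilde g}(\lambda^{**})\approx\mathbf{D}\,\mathbf{J}^{\check g}(\lambda^{**})$, where $\mathbf{D}=\operatorname{diag}(1-\varepsilon_1,\mathbf{I},\mathbf{I})$ rescales only the $\mu$-row and $\mathbf{J}^{\check g}$ is the matrix in the statement. As $\mathbf{D}$ is symmetric, $\mathbf{V}^{\tilde g}(\lambda^{**})\approx\mathbf{J}^{\check g}(\lambda^{**})^{-1}\bigl(\mathbf{D}^{-1}\mathbf{K}^{\tilde g}(\lambda^{**})\mathbf{D}^{-1}\bigr)\{\mathbf{J}^{\check g}(\lambda^{**})^{T}\}^{-1}$, and I would finish by reading off $\mathbf{D}^{-1}\mathbf{K}^{\tilde g}\mathbf{D}^{-1}$ blockwise: the $(1,1)$ block is $(1-\varepsilon_1)^{-2}\mathbf{K}^{\tilde g}_{11}\approx(1-\varepsilon_1)^{-1}\mathbf{K}^{g}_{11}$, the $(1,2)$ and $(1,3)$ blocks are $(1-\varepsilon_1)^{-1}\mathbf{K}^{\tilde g}_{1m}\approx\mathbf{K}^{g}_{1m}$ (and their transposes below the diagonal, by symmetry of $\mathbf{K}^{\tilde g}$), while every other block is unchanged --- which is precisely the middle matrix in the statement.

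\textbf{Main obstacle.} The delicate part is not the linear algebra above but making the ``$\approx$'' of the first step (and of dropping $\varepsilon_1\mathbf{J}^{\delta}_{1m}$ and $\varepsilon_1\mathbf{K}^{\delta}_{1m}$) quantitative: one must check that the $\mathcal{O}(\nu_1(\phi))$ gap between $\lambda^{*}$ and $\lambda^{**}$ propagates to only an $\mathcal{O}(\nu_1)$-size perturbation of the sandwich, which hinges on the scalar $\mathbf{J}^{g}_{11}(\lambda^{**})$ staying bounded away from $0$ and on the density-power weight keeping $\mathbb{E}_{\delta_1}[\,h(Y;\mu^{(1)})^{\gamma}(Y-\mu^{(1)})\mid X\,]$ and its $\mu$-derivative uniformly small.
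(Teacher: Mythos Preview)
Your proposal is correct and follows essentially the same route as the paper: replace $\lambda^{*}$ by $\lambda^{**}$ using $\mu^{*}\approx\mu^{(1)}$, split the first block row/column of $\mathbf{J}^{\tilde g}$ and $\mathbf{K}^{\tilde g}$ into $(1-\varepsilon_1)g$ and $\varepsilon_1\delta$ parts, drop the $\delta$ pieces by hypothesis, and factor $\mathbf{J}^{\tilde g}(\lambda^{**})\approx\operatorname{diag}(1-\varepsilon_1,\mathbf{I},\mathbf{I})\,\mathbf{J}^{\check g}(\lambda^{**})$ before reading off the sandwich blockwise. Your write-up is in fact more explicit than the paper's --- you spell out why the $s^{PS}$ blocks are exactly uncontaminated, why the $s^{OR}$ blocks remain tilded, and you carry out the $\mathbf{D}^{-1}\mathbf{K}^{\tilde g}\mathbf{D}^{-1}$ computation entry by entry, whereas the paper simply displays the intermediate matrices and concludes.
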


If both the PS and the OR models are correct, the asymptotic variance of $\hat{\mu}$ has a simpler expression. From a similar discussion to Section 3, the following lemma holds:
\begin{lemma}\label{lem:JJ} If the PS model is correct, $\mathbf{J}^{g}_{13}(\lambda^{**})=\mathbf{0}$. If the OR model is correct, $\mathbf{J}^{g}_{12}(\lambda^{**})=\mathbf{0}$.
\end{lemma}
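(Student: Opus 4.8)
The plan is to mirror the iterated‑expectation manipulations used in the proofs of Theorems~3 and~4; the statement is essentially the ``Neyman‑orthogonality'' half of those double‑robustness arguments, specialized to a single block of the Jacobian $\mathbf{J}^{g}(\lambda^{**})=\mathbb{E}_{g}\!\left[\partial S_i(\lambda^{**})/\partial\lambda^{T}\right]$. First I would write out the two cross‑derivatives of the augmented estimating function $\psi_i$. Since $\psi_i$ depends on $\beta$ only through the augmentation term $R_i(\mu,\beta):=\mathbb{E}_{q(\cdot;\beta)}[\,h(Y;\mu)^{\gamma}(Y-\mu)\mid T=1,X_i\,]$, one gets $\partial\psi_i/\partial\beta^{T}=-\,\dfrac{T_i-\pi(X_i;\alpha)}{\pi(X_i;\alpha)}\,(1-\varepsilon_1)\,\partial R_i/\partial\beta^{T}$; and since $\alpha$ enters both terms only through $\pi(X_i;\alpha)^{-1}$, collecting terms gives $\partial\psi_i/\partial\alpha^{T}=-\pi(X_i;\alpha)^{-2}\bigl(\partial\pi(X_i;\alpha)/\partial\alpha^{T}\bigr)\,T_i\bigl[\,h(Y_i;\mu)^{\gamma}(Y_i-\mu)-(1-\varepsilon_1)R_i(\mu,\beta)\,\bigr]$.

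For the PS‑correct part, take $\mathbb{E}_{g}[\,\cdot\mid X_i]$ of $\partial\psi_i/\partial\beta^{T}$: the factor $(1-\varepsilon_1)\partial R_i/\partial\beta^{T}$ is $X_i$‑measurable and pulls out, while the conditional mean of the inverse‑probability residual equals $\bigl(P(T=1\mid X_i)-\pi(X_i;\alpha^{*})\bigr)/\pi(X_i;\alpha^{*})=0$ because $\pi(\cdot;\alpha^{*})$ is the true propensity score; the outer expectation over $X_i$ then gives $\mathbf{J}^{g}_{13}(\lambda^{**})=\mathbf{0}$. For the OR‑correct part, take $\mathbb{E}_{g}[\,\cdot\mid X_i]$ of the factored form of $\partial\psi_i/\partial\alpha^{T}$ evaluated at $\lambda^{**}$: by causal consistency and conditional unconfoundedness, $\mathbb{E}_{g}[\,T_i\,h(Y_i;\mu^{(1)})^{\gamma}(Y_i-\mu^{(1)})\mid X_i\,]=P(T=1\mid X_i)\,\mathbb{E}_{g}[\,h(Y^{(1)};\mu^{(1)})^{\gamma}(Y^{(1)}-\mu^{(1)})\mid X_i\,]$, and correctness of the OR model makes $R_i(\mu^{(1)},\beta^{*})$ coincide with exactly that conditional expectation, so the bracket cancels pointwise in $X_i$ and $\mathbf{J}^{g}_{12}(\lambda^{**})=\mathbf{0}$ follows (the constant bias‑correction factor $1-\varepsilon_1$ being carried through and handled just as in Corollary~1 and the proof of Theorem~4).

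There is no genuine analytic obstacle here: differentiation under the integral sign is covered by the regularity conditions already invoked for Theorem~5. The only point that needs care is the bookkeeping of the evaluation point --- the derivatives are taken at $\lambda^{**}=(\mu^{(1)},\alpha^{*T},\beta^{*T})^{T}$, so the density‑power term carries the true location $\mu^{(1)}$, and the cancellation in the OR‑correct case uses that $\beta^{*}$ is the limiting parameter \emph{under correct specification}, i.e.\ $q(\cdot;\beta^{*})$ is the true conditional law of $Y^{(1)}$ given $X$ rather than a pseudo‑true surrogate. Once these two conditional‑mean identities are recorded, the lemma is immediate and can be stated in a couple of lines.
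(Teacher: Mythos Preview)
Your approach is the same as the paper's: write out $\partial\psi_i/\partial\beta^{T}$ and $\partial\psi_i/\partial\alpha^{T}$, condition on $X$, and invoke either the propensity‐score identity (for $\mathbf{J}^{g}_{13}$) or the outcome‐regression identity (for $\mathbf{J}^{g}_{12}$). The PS‐correct half is clean and matches the paper exactly.

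The OR‐correct half has a gap. With the $(1-\varepsilon_1)$ bias‐correction factor retained in your expression for $\partial\psi_i/\partial\alpha^{T}$, conditioning the bracket $\bigl[h(Y_i;\mu^{(1)})^{\gamma}(Y_i-\mu^{(1)})-(1-\varepsilon_1)R_i\bigr]$ on $X_i$ and using OR‐correctness gives $R_i-(1-\varepsilon_1)R_i=\varepsilon_1 R_i$, not zero. Your appeal to Corollary~1 and Theorem~4 does not rescue this: in those results the $(1-\varepsilon_1)$ in the estimating function cancels the $(1-\varepsilon_1)$ that appears when $Y$ is integrated over the \emph{contaminated} density $\tilde{g}=(1-\varepsilon_1)g+\varepsilon_1\delta$, whereas $\mathbf{J}^{g}_{12}$ is an expectation under the \emph{uncontaminated} density $g$, so there is nothing for the factor to cancel against. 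The paper's own proof in fact writes the estimating function without the $(1-\varepsilon_1)$ factor---i.e., it carries out the DP‐DR computation---and then the cancellation $\frac{P(T=1\mid X)}{\pi}-\frac{P(T=1\mid X)-\pi}{\pi}=1$ goes through directly. If you insist on keeping the $(1-\varepsilon_1)$ factor for $\varepsilon$DP‐DR, $\mathbf{J}^{g}_{12}(\lambda^{**})$ carries an $O(\varepsilon_1)$ residual and is not exactly $\mathbf{0}$; you would need either to drop the factor (matching the paper) or to state the OR‐correct conclusion only up to that residual.
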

Using Lemma \ref{lem:JJ}, we can see the asymptotic variance of $\hat\mu$ is simply expressed.

\begin{theorem}
Under the same assumptions of Theorem \ref{thm:avar}, if the PS and the OR models are both correct, then 
\begin{align}
    \mathbf{V}_{\tilde{g}}(\mu^*) \approx \frac{1}{1-\varepsilon_1}\mathbf{J}^{g}_{11}(\lambda^{**})^{-1}\mathbf{K}^{g}_{11}(\lambda^{**})\left\{\mathbf{J}^{g}_{11}(\lambda^{**})^T\right\}^{-1}.\nonumber
\end{align}
\end{theorem}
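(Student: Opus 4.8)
The plan is to obtain $\mathbf{V}_{\tilde{g}}(\mu^*)$ as the upper-left (scalar) block of the approximate sandwich expression for $\mathbf{V}^{\tilde{g}}(\lambda^*)$ supplied by Theorem~\ref{thm:avar}, after simplifying the matrix $\mathbf{J}^{\check{g}}(\lambda^{**})$ by means of Lemma~\ref{lem:JJ}. Since $\hat\mu$ is the first coordinate of $\hat\lambda$, the asymptotic variance of $\hat\mu$ is exactly the $(1,1)$ block of the asymptotic covariance of $\hat\lambda$, so it is enough to extract that block.

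First I would apply Lemma~\ref{lem:JJ}: the hypothesis that both the PS and the OR models are correct forces $\mathbf{J}^g_{13}(\lambda^{**})=\mathbf{0}$ and $\mathbf{J}^g_{12}(\lambda^{**})=\mathbf{0}$ simultaneously, so the entire first block-row of $\mathbf{J}^{\check{g}}(\lambda^{**})$ collapses and $\mathbf{J}^{\check{g}}(\lambda^{**})$ becomes block-diagonal, with diagonal blocks $\mathbf{J}^{g}_{11}(\lambda^{**})$, $\mathbf{J}^{g}_{22}(\lambda^{**})$, $\mathbf{J}^{\tilde{g}}_{33}(\lambda^{**})$. Consequently $\mathbf{J}^{\check{g}}(\lambda^{**})^{-1}$ and $\{\mathbf{J}^{\check{g}}(\lambda^{**})^{T}\}^{-1}$ are block-diagonal as well, with respective first blocks $\mathbf{J}^{g}_{11}(\lambda^{**})^{-1}$ and $\{\mathbf{J}^{g}_{11}(\lambda^{**})^{T}\}^{-1}$ (scalars here, though I would keep the matrix notation for bookkeeping).

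It then remains to read off the $(1,1)$ block of $\mathbf{J}^{\check{g}}(\lambda^{**})^{-1}\,\mathbf{K}\,\{\mathbf{J}^{\check{g}}(\lambda^{**})^{T}\}^{-1}$, where $\mathbf{K}$ denotes the central matrix displayed in Theorem~\ref{thm:avar}. Because both outer factors are block-diagonal, multiplying them against $\mathbf{K}$ annihilates every off-diagonal block of $\mathbf{K}$ and every contribution from the $\alpha$ and $\beta$ coordinates, so the $(1,1)$ block of the product equals $\mathbf{J}^{g}_{11}(\lambda^{**})^{-1}$ times the $(1,1)$ block of $\mathbf{K}$ times $\{\mathbf{J}^{g}_{11}(\lambda^{**})^{T}\}^{-1}$. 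Since Theorem~\ref{thm:avar} identifies the $(1,1)$ block of $\mathbf{K}$ as $\frac{1}{1-\varepsilon_1}\mathbf{K}^{g}_{11}(\lambda^{**})$, this is precisely the claimed formula.

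The argument is essentially block-matrix bookkeeping layered on Theorem~\ref{thm:avar} and Lemma~\ref{lem:JJ}, so I do not expect a genuine analytic obstacle. The points needing care are: (i) making explicit that ``both models correct'' removes \emph{both} coupling blocks $\mathbf{J}^g_{12}$ and $\mathbf{J}^g_{13}$, so that $\mathbf{J}^{\check{g}}(\lambda^{**})$ is truly block-diagonal rather than merely block-triangular --- this is exactly what lets the $(1,1)$ block of the sandwich decouple cleanly; and (ii) checking that the blocks of $\mathbf{K}$ still carrying a $\tilde{g}$ (namely $\mathbf{K}^{\tilde{g}}_{23}$ and $\mathbf{K}^{\tilde{g}}_{33}$, as well as the off-diagonal $\mathbf{K}^{g}_{12}$, $\mathbf{K}^{g}_{13}$) never enter the $(1,1)$ block, which is immediate once the outer factors are block-diagonal, so the quality of the approximation is controlled entirely by the approximations already invoked in Theorem~\ref{thm:avar}.
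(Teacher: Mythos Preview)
Your proposal is correct and follows exactly the same route as the paper's own proof, which simply reads ``By applying Lemma~\ref{lem:JJ} to Theorem~\ref{thm:avar}, the statement holds.'' You have merely made the block-matrix bookkeeping explicit, and your points (i) and (ii) are precisely the observations that justify that one-line argument.
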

\begin{proof}
By applying Lemma \ref{lem:JJ} to Theorem \ref{thm:avar}, the statement holds.
\end{proof}
This implies that the $\varepsilon$DP-DR appropriately ignores outliers.

\subsubsection{Proof of Theorem \ref{thm:avar}}
If either the PS or the OR model is correct, we can say $\mu^*\approx\mu^{(1)}$.
Note that the PS model is not related to the contamination distribution $\delta$, and the contamination on the OR model cannot be removed in general.
From the assumptions, 
\begin{align}
    \mathbf{J}^{\tilde{g}}(\lambda^*)
        \approx&~ \mathbf{J}^{\tilde{g}}(\lambda^{**}) \nonumber\\ 
        =&~ \left(\begin{array}{ccc}
        (1-\varepsilon_1)\mathbf{J}^{g}_{11} & (1-\varepsilon_1)\mathbf{J}^{g}_{12} & (1-\varepsilon_1)\mathbf{J}^{g}_{13} \\
        \mathbf{0} & \mathbf{J}^{g}_{22} & \mathbf{0} \\
        \mathbf{0} & \mathbf{0} & \mathbf{J}^{\tilde{g}}_{33}
    \end{array}\right) + \left(\begin{array}{ccc}
        \varepsilon_1\mathbf{J}^{\delta}_{11} & \varepsilon_1\mathbf{J}^{\delta}_{12} & \varepsilon_1\mathbf{J}^{\delta}_{13} \\
        \mathbf{0} & \mathbf{0} & \mathbf{0} \\
        \mathbf{0} & \mathbf{0} & \mathbf{0}
    \end{array}\right) \nonumber\\
    \approx&~\left(\begin{array}{ccc}
        (1-\varepsilon_1)\mathbf{J}^{g}_{11} & (1-\varepsilon_1)\mathbf{J}^{g}_{12} & (1-\varepsilon_1)\mathbf{J}^{g}_{13} \\
        \mathbf{0} & \mathbf{J}^{g}_{22} & \mathbf{0} \\
        \mathbf{0} & \mathbf{0} & \mathbf{J}^{\tilde{g}}_{33}
    \end{array}\right) \nonumber\\
    =&~\left(\begin{array}{ccc}
        1-\varepsilon_1 & \mathbf{0} & \mathbf{0} \\
        \mathbf{0} & \mathbf{I}_{\alpha} & \mathbf{0} \\
        \mathbf{0} & \mathbf{0} & \mathbf{I}_{\beta}
    \end{array}\right)\mathbf{J}^{\check{g}}(\lambda^{**}) \nonumber
\end{align}
\begin{align}
    \mathbf{K}^{\tilde{g}}(\lambda^*) 
        \approx&~ \mathbf{K}^{\tilde{g}}(\lambda^{**}) \nonumber\\
        =&~ \left(\begin{array}{ccc}
        (1-\varepsilon_1)\mathbf{K}^{g}_{11} & (1-\varepsilon_1)\mathbf{K}^{g}_{12} & (1-\varepsilon_1)\mathbf{K}^{g}_{13} \\
        (1-\varepsilon_1)\mathbf{K}^{g~T}_{12} & \mathbf{K}^{g}_{22} & \mathbf{K}^{\tilde{g}}_{23} \\
        (1-\varepsilon_1)\mathbf{K}^{g~T}_{13} & \mathbf{K}^{\tilde{g}~T}_{23} & \mathbf{K}^{\tilde{g}}_{33}
    \end{array}\right) + \left(\begin{array}{ccc}
        \varepsilon_1\mathbf{K}^{\delta}_{11} & \varepsilon_1\mathbf{K}^{\delta}_{12} & \varepsilon_1\mathbf{K}^{\delta T}_{13} \\
        \varepsilon_1\mathbf{K}^{\delta~T}_{12} & \mathbf{0} & \mathbf{0} \\
        \varepsilon_1\mathbf{K}^{\delta~T}_{13} & \mathbf{0} & \mathbf{0}
    \end{array}\right) \nonumber\\
    \approx&~\left(\begin{array}{ccc}
        (1-\varepsilon_1)\mathbf{K}^{g}_{11} & (1-\varepsilon_1)\mathbf{K}^{g}_{12} & (1-\varepsilon_1)\mathbf{K}^{g}_{13} \\
        (1-\varepsilon_1)\mathbf{K}^{g~T}_{12} & \mathbf{K}^{g}_{22} & \mathbf{K}^{\tilde{g}}_{23} \\
        (1-\varepsilon_1)\mathbf{K}^{g~T}_{13} & \mathbf{K}^{\tilde{g}~T}_{23} & \mathbf{K}^{\tilde{g}}_{33}
    \end{array}\right)\nonumber
\end{align}
The input $(\lambda^{**})$ is dropped for notation simplicity.
Thus, we have
\begin{align}
    \mathbf{V}^{\tilde{g}}(\lambda^*)
        \approx&~ \mathbf{J}^{\check{g}}(\lambda^{**})^{-1}\left(\begin{array}{ccc}
        \frac{1}{(1-\varepsilon_1)}\mathbf{K}^g_{11}(\lambda^{**}) & \mathbf{K}^g_{12}(\lambda^{**}) & \mathbf{K}^{g}_{13}(\lambda^{**}) \\
        \mathbf{K}^g_{12}(\lambda^{**})^T & \mathbf{K}^g_{22}(\lambda^{**}) & \mathbf{K}^{\tilde{g}}_{23}(\lambda^{**}) \\
        \mathbf{K}^{g}_{13}(\lambda^{**})^T & \mathbf{K}^{\tilde{g}}_{23}(\lambda^{**})^T & \mathbf{K}^{\tilde{g}}_{33}(\lambda^{**})
    \end{array}\right)\{\mathbf{J}^{\check{g}}(\lambda^{**})^T\}^{-1},\nonumber
\end{align}

The proof is complete.

\subsubsection{Proof of Lemma \ref{lem:JJ}}
\begin{proof}
In the PS correct case, 
\begin{align}
    \mathbf{J}^{g}_{13}(\lambda^{**}) 
        =&~ \mathbb{E}_{g}\left[\frac{\partial}{\partial\beta^T}\left\{
            \frac{Th(Y;\mu^{(1)})^{\gamma}}{\pi(X;\alpha^*)}(Y-\mu^{(1)})
            - \frac{T-\pi(X;\alpha^*)}{\pi(X;\alpha^*)}\mathbb{E}_{q^*}[h(Y,\mu^{(1)})^{\gamma}(Y-\mu^{(1)})|T=1,X]
        \right\}\right] \nonumber\\
        =&~ \mathbb{E}_{g}\left[
            \frac{P(T=1|X)}{\pi(X;\alpha^*)}h(Y^{(1)};\mu^{(1)})^{\gamma}(Y^{(1)}-\mu^{(1)})\right. \nonumber\\ 
        &~~~~~~~ \left.- \frac{P(T=1|X)-\pi(X;\alpha^*)}{\pi(X;\alpha^*)}\frac{\partial}{\partial\beta^T}\mathbb{E}_{q^*}[h(Y,\mu^{(1)})^{\gamma}(Y-\mu^{(1)})|T=1,X]\right]\nonumber\\
        =&~ \mathbb{E}_{g}\left[h(Y^{(1)};\mu^{(1)})^{\gamma}(Y^{(1)}-\mu^{(1)})\right]\nonumber\\
        =&~ 0,\nonumber
\end{align}
where $\mathbb{E}_{q^*}$ denotes the expectation with respect to $q(y|T=1,x;\beta^{*})$.
In the OR correct case, 
\begin{align}
    \mathbf{J}^{g}_{12}(\lambda^{**}) 
        =&~ \mathbb{E}_{g}\left[\frac{\partial}{\partial\alpha^T}\left\{
            \frac{Th(Y;\mu^{(1)})^{\gamma}}{\pi(X;\alpha^*)}(Y-\mu^{(1)})
            - \frac{T-\pi(X;\alpha^*)}{\pi(X;\alpha^*)}\mathbb{E}_{g}\left[
            h(Y^{(1)};\mu^{(1)})^{\gamma}(Y^{(1)}-\mu^{(1)})
        |X\right]
        \right\}\right]\nonumber\\
        =&~ \mathbb{E}_{g}\left[
            \frac{\partial}{\partial\alpha^T}\left(\frac{P(T=1|X)}{\pi(X;\alpha^*)}-\frac{P(T=1|X)-\pi(X;\alpha^*)}{\pi(X;\alpha^*)}\right)\mathbb{E}_{g}[h(Y^{(1)};\mu^{(1)})^{\gamma}(Y^{(1)}-\mu^{(1)})|X]\right] \nonumber\\
        =&~ \mathbb{E}_{g}\left[
            h(Y^{(1)};\mu^{(1)})^{\gamma}(Y^{(1)}-\mu^{(1)})
        \right] \nonumber\\
        =&~ 0. \nonumber
\end{align}
\end{proof}

\section{Details of Numerical Algorithm}
\subsection{General Form}\label{sec:algo_gen}
Because the proposed estimating equations cannot be solved explicitly, we use an iterative algorithm. Various algorithms are available; however, we propose a standard algorithm for M-estimators. The algorithm for the DP-IPW estimator is given by the following updates:
\begin{align}
    \hat\mu^{[a+1]} &= \left\{\sum_{i=1}^{n}w^{[a]}_iY_i\right\}\left\{\sum_{i=1}^{n}w^{[a]}_i\right\}^{-1},\\
    w^{[a+1]}_i &= \frac{T_ih(Y_i;\hat\mu^{[a+1]})^{\gamma}}{\pi(X_i;\hat{\alpha})}~~~~\text{for all}~~i.\label{eq:ipwupd2}
\end{align}
We recommend obtaining the initial values $(\mu^{[0]}, w_i^{[0]})$ in an outlier-resistant manner. For example, $\mu^{[0]}$ can be obtained by the IPW median (Firpo, 2007; Zhang et al. 2012), and $w_i^{[0]}$ is obtained using \eqref{eq:ipwupd2}. If the weighting density is indexed by other parameters, it must be estimated in advance or be updated simultaneously to $\mu$ and $w$. In the next section, we present an algorithm in which we assume that $h$ is Gaussian.

The ($\varepsilon$)DP-DR estimator is obtained in a similar manner. Let $h(\cdot;\mu^{(1)})$ be fixed and solve the estimating equation of $\varepsilon$DP-DR with respect to $\mu$:
\begin{align}
    \mu =&~ \frac{1}{n}\sum_{i=1}^n \left\{
        \frac{T_ih(Y_i;\mu^{(1)})^{\gamma}Y_i}{\pi(X_i;\hat{\alpha})}-\frac{T_i-\pi(X_i;\hat{\alpha})}{\pi(X_i;\hat{\alpha})}(1-\hat\varepsilon)m_{1,\mu^{(1)}}(X_i;\hat{\beta})
    \right\}\\
    &~~~~~~~\times \left\{
        \frac{T_ih(Y_i;\mu^{(1)})^{\gamma}}{\pi(X_i;\hat{\alpha})}-\frac{T_i-\pi(X_i;\hat{\alpha})}{\pi(X_i;\hat{\alpha})}(1-\hat\varepsilon)m_{0,\mu^{(1)}}(X_i;\hat{\beta})\right\}^{-1},
\end{align}
where $m_{0,\mu^{(1)}}(X_i;\hat{\beta})=\mathbb{E}_g[h(Y^{(1)};\mu)^{\gamma}|X]$ and $m_{1,\mu^{(1)}}(X_i;\hat{\beta})=\mathbb{E}_g[h(Y^{(1)};\mu)^{\gamma}Y^{(1)}|X]$.
Then, the following algorithm is obtained:
\begin{align}
    \hat\mu^{[a+1]} &= \left\{\sum_{i=1}^{n}w^{[a]}_{1,i}Y_i - w_{2,i}\hat{m}_{1,\mu^{[a]}}(X_i;\hat{\beta})\right\}\left\{\sum_{i=1}^{n}w^{[a]}_{1,i} - w_{2,i}\hat{m}_{0,\mu^{[a]}}(X_i;\hat{\beta})\right\}^{-1}, \\
    w^{[a+1]}_{1,i} &= \frac{T_ih(Y_i;\hat\mu^{[a+1]})^{\gamma}}{\pi(X_i;\hat{\alpha})}, ~~~
    w_{2,i} = \frac{T_i-\pi(X_i;\hat{\alpha})}{\pi(X_i;\hat{\alpha})}(1-\hat\varepsilon_1)~~~~\text{for all}~~i.
\end{align}
Note that it is not necessary to update $w_{2,i}$ once it is computed.
The initial values should be obtained in an outlier-resistant manner, as in DP-IPW.
Recall that $\hat{m}_{1,\mu}$ and $\hat{m}_{0,\mu}$ are the estimates for the conditional expectation $\mathbb{E}_g[h(Y^{(1)};\mu)^{\gamma}Y^{(1)}|X]$ and $\mathbb{E}_g[h(Y^{(1)};\mu)^{\gamma}|X]$ given $\mu$. These updates can be obtained from the estimated conditional density $q(y|X;\hat\beta)$ through Monte-Carlo approximation (Hoshino, 2007) or direct calculations.

\subsection{Gaussian Weight}\label{sec:algo_gauss}
When the weighting density $h$ is assumed to be Gaussian, some value must be assigned to the standard deviation $\sigma$. Under contamination, we suggest that $\sigma$ is estimated in an outlier-resistant manner, such as by using the normalized median absolute deviation (MADN). MADN is an unbiased estimator for the standard deviation of a Gaussian random variable. For DP-IPW, we can obtain $\sigma$ by the following updating formula:
\begin{align}
    \hat\sigma^{[a+1]} &= \text{IPW-MADN}(\{Y_i\}_{i=1}^n,\hat\mu^{[a+1]}).
\end{align}
The IPW-MADN is defined as 
\begin{align}
    \text{IPW-MADN}(\{Y_i\}_{i=1}^n,\mu)
        =&~ 1.483\cdot\text{IPW-median}\left(\left\{|Y_i - \mu|\right\}_{i=1}^n\right),
\end{align}
where $1.483$ is a normalization constant.

% The updates when the weight are Gaussian is expressed as follows:
% \begin{align}
%     \hat\mu^{[a+1]} &= \left\{\sum_{i=1}^{n}w^{[a]}_iY_i\right\}\left\{\sum_{i=1}^{n}w^{[a]}_i\right\}^{-1}. \\
%     \hat\sigma^{[a+1]} &= \text{IPW-MADN}(\{Y_i\}_{i=1}^n,\hat\mu^{[a+1]}).\\
%     w^{[a+1]}_i &= \frac{T_ih(Y_i;\theta(\hat\mu^{[a+1]},\hat\sigma^{[a+1]}))^{\gamma}}{\pi(X_i;\hat{\alpha})}~~~~\text{for all}~~i.
% \end{align}
% The IPW-MADN is defined as 
% \begin{align}
%     \text{IPW-MADN}(\{Y_i\}_{i=1}^n,\mu)
%         =&~ 1.483\cdot\text{IPW-median}\left(\left\{|Y_i - \mu|\right\}_{i=1}^n\right),
% \end{align}
% where $1.483$ is a normalization constant.

Similarly, $\sigma$ of the ($\varepsilon$)DP-DR estimator under Gaussian weight is obtained by 
\begin{align}
    \hat\sigma^{[a+1]} &= \text{DR-MADN}(\{Y_i\}_{i=1}^n,\hat\mu^{[a+1]}).
\end{align}
The DR-MADN is obtained by using the DR-median, which is discussed in Section 7.
\begin{align}
    \text{DR-MADN}(\{Y_i\}_{i=1}^n,\mu)
        =&~ 1.483\cdot\text{DR-median}\left(\left\{|Y_i - \mu|\right\}_{i=1}^n\right).
\end{align}

% \begin{align}
%     \hat\mu^{[a+1]} &= \left\{\sum_{i=1}^{n}w^{[a]}_{1,i}Y_i - w_{2,i}\hat{m}_{1,\mu^{[a]}}(X_i;\hat{\beta})\right\}\left\{\sum_{i=1}^{n}w^{[a]}_{1,i} - w_{2,i}\hat{m}_{0,\mu^{[a]}}(X_i;\hat{\beta})\right\}^{-1}, \\
%     \hat\sigma^{[a+1]} &= \text{DR-MADN}(\{Y_i\}_{i=1}^n,\hat\mu^{[a+1]}),\\
%     w^{[a+1]}_{1,i} &= \frac{T_ih(Y_i;\hat\mu^{[a+1]},\hat\sigma^{[a+1]})^{\gamma}}{\pi(X_i;\hat{\alpha})}~~~~\text{for all}~~i, \\
%     w_{2,i} &= \frac{T_i-\pi(X_i;\hat{\alpha})}{\pi(X_i;\hat{\alpha})}(1-\hat\varepsilon_1)~~~~\text{for all}~~i.
% \end{align}
% The DR-MADN is obtained by using the DR-median \citep{zhang2012causal,diaz2017efficient,sued2020robust}
% \begin{align}
%     \text{DR-MADN}(\{Y_i\}_{i=1}^n,\mu)
%         =&~ 1.483\cdot\text{DR-median}\left(\left\{|Y_i - \mu|\right\}_{i=1}^n\right).
% \end{align}

Further, the updates of $\hat{m}_{1,\mu}$ and $\hat{m}_{0,\mu}$ are expressed explicitly when $q(y|X;\hat\beta)$ is assumed to be the conditional Gaussian distribution given $X$. Let $u(X)=\mathbb{E}_q[Y|X]$ and $v^2(X)=\mathrm{Var}_q[Y|X]$. Then, we obtain
\begin{align}
    m_{0,\mu^{[a]}}(X)  
        &= (2\pi)^{-\frac{\gamma}{2}}\frac{({\sigma^{[a]}}^2)^{\frac{1-\gamma}{2}}}{\sqrt{{\sigma^{[a]}}^2+\gamma v^2(X)}}\cdot\exp\left\{-\frac{\gamma(\mu^{[a]}-u(X))}{2({\sigma^{[a]}}^2 + \gamma v^2(X))}\right\}, \\
    m_{1,\mu^{[a]}}(X) 
        &= (2\pi)^{-\frac{\gamma}{2}}\frac{({\sigma^{[a]}}^2)^{\frac{1-\gamma}{2}}}{\sqrt{{\sigma^{[a]}}^2+\gamma v^2(X)}}\cdot \frac{u(X){\sigma^{[a]}}^2 + \gamma\mu^{[a]} v^2(X)}{{\sigma^{[a]}}^2+\gamma v^2(X)}\cdot\exp\left\{-\frac{\gamma(\mu^{[a]}-u(X))}{2({\sigma^{[a]}}^2 + \gamma v^2(X))}\right\}.
\end{align}
Notably, the conditional variance can be easily estimated because many general outlier-resistant methods can be applied for this purpose.

% \newpage
% \section{Results of Simulations on Heavy-tailed Data}
% \input{table/table_RMSE_exp2_cauchy}

\newpage
\section{Solution Paths of $\gamma$-sensitivity Study}
\begin{figure}[htb]
\centering
\includegraphics[width=134mm]{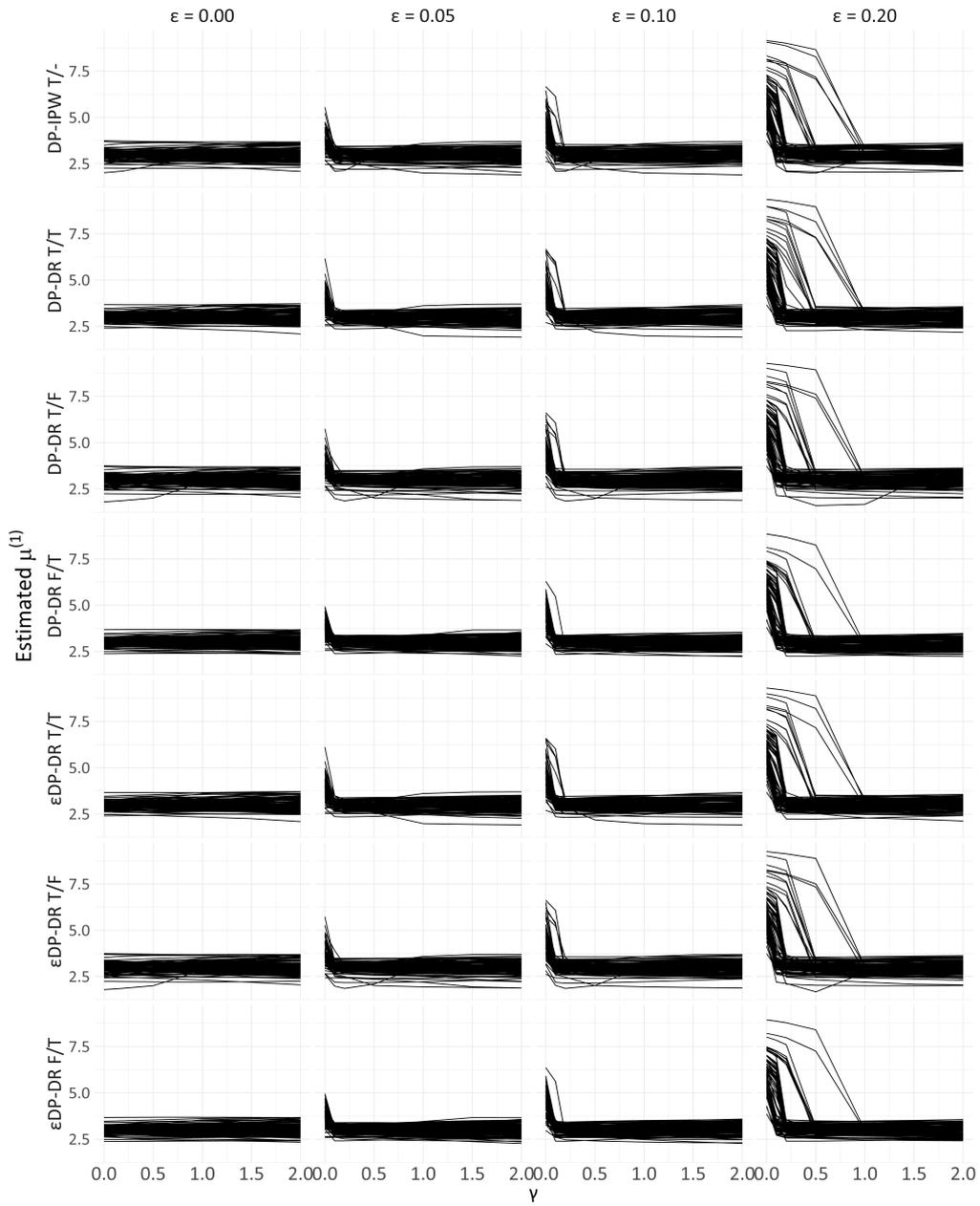}
\caption{Solution paths of the first 100 simulations. The x-axis represents the tuning parameter $\gamma$ and the y-axis, the estimates of $\mu^{(1)}$.}
\end{figure}

\newpage
\section{Remaining Results of Monte-Carlo Simulation}
Remaining results of the Monte-Carlo Simulation are presented in the following pages.

\begin{itemize}
    \item Tables S1 and S2: Gaussian covariates and Gaussian MLE on non-outliers. The RMSE is presented in Tables 2 and 3 in the main text.
    \item Tables S3 to S5: Gaussian covariates and unnormalized Gaussian modeling.
    \item Tables S6 to S8: Uniform covariates and Gaussian MLE on non-outliers.
    \item Tables S9 to S11: Uniform covariates and unnormalized Gaussian modeling.
\end{itemize}

% latex table generated in R 4.0.3 by xtable 1.8-4 package
% Thu Jun 03 09:46:16 2021
\begin{landscape}
  \begin{table}[ht]
  \centering
  \scriptsize
  % [inline block 0: 11 envs, 51998 chars -> data_tex | \begin{tabular}{llccccccc}     \hline...]

\caption{Mean computation time (ms) of 10,000 simulations. The covariates $X$ were generated from uniform distributions, and the outcome regression was obtained by the unnormalized Gaussian modeling. The characters "T" and "F" denote the correct and the incorrect modeling, respectively.}
\end{table}
\end{landscape}

\end{document}